\documentclass{article}

\usepackage{amssymb}
\usepackage{amsmath, amsthm, mathtools, xcolor,bbm,booktabs,multicol,enumitem}
\usepackage[authoryear, round]{natbib}
\usepackage[margin=1in]{geometry}
\usepackage{graphicx} 

\usepackage{subcaption}
\captionsetup[subfigure]{position = top}

\usepackage[
	breaklinks,
	colorlinks=true,
	pagebackref=false,
	pdfauthor={},%
	pdftitle={},%
	citecolor=blue,
	linkcolor=blue,
	urlcolor=blue, hypertexnames=false
	]{hyperref}

\usepackage[disable]{todonotes}

\usepackage{placeins}

\usepackage{multicol}
\usepackage{multirow}
\usepackage{enumitem}
\usepackage{centernot}
\usepackage[bottom]{footmisc}

\allowdisplaybreaks

\usepackage{multibib}
\newcites{supp}{Supplementary References}

\newcommand{\beginsupplement}{
	\setcounter{page}{1}
	\renewcommand{\thepage}{S\arabic{page}}
	\setcounter{section}{0}
	\renewcommand{\thesection}{S\arabic{section}}
	\setcounter{equation}{0}
	\renewcommand{\theequation}{S\arabic{equation}}
	\setcounter{table}{0}
	\renewcommand{\thetable}{S\arabic{table}}
	\setcounter{figure}{0}
	\renewcommand{\thefigure}{S\arabic{figure}}}

\newcommand{\CRPS}{\operatorname{CRPS}}

\newcommand{\dd}{\,\mathrm{d}}

\newcommand{\one}{\mathbbm{1}}
\newcommand{\sgn}{\operatorname{sgn}}

\newcommand{\D}{\operatorname{D}}
\newcommand{\CD}{\operatorname{CD}}
\newcommand{\AV}{\operatorname{AVM}}
\newcommand{\unif}{\mathcal{U}}
\newcommand{\norm}{\mathcal{N}}
\newcommand{\dirac}[1]{\delta_{#1}}
\newcommand{\R}{\mathbb{R}}

\newcommand{\shift}{\operatorname{Shift}}
\newcommand{\disp}{\operatorname{Disp}}

\newcommand{\WD}{\operatorname{WD}}
\newcommand{\WDp}{\operatorname{WD}_p}

\newcommand{\mudiff}{\widetilde{\mu}}
\newcommand{\sigmadiff}{\widetilde{\sigma}}
\newcommand{\sigmaavg}{\widetilde{\rho}}

\usepackage{lipsum}

\newtheorem{theorem}{Theorem}[section]
\newtheorem{prop}[theorem]{Proposition}

\newtheorem{remark}[theorem]{Remark}
\newtheorem{lemma}[theorem]{Lemma}
\newtheorem{example}[theorem]{Example}
\newtheorem{conj}[theorem]{Conjecture}

\setlist[enumerate,1]{label={(\alph*)}} 

\newcommand\defaultfigwidth{0.48}

\newcommand\blfootnote[1]{%
	\begingroup
	\renewcommand\thefootnote{}\footnote{#1}%
	\addtocounter{footnote}{-1}%
	\endgroup
}

\title{Shift-Dispersion Decompositions of \\ Wasserstein and Cram\'er Distances}

\author{
	Johannes Resin\thanks{Faculty of Economics and Business, Goethe University Frankfurt, Theodor-W.-Adorno-Platz 4, D--60323 Frankfurt am Main, Germany, and Heidelberg Institute for Theoretical Studies (HITS), \href{mailto:resin@econ.uni-frankfurt.de}{resin@econ.uni-frankfurt.de} and \href{mailto:dimitriadis@econ.uni-frankfurt.de}{dimitriadis@econ.uni-frankfurt.de}.}
\and 
	Daniel Wolffram\thanks{Institute of Statistics, Karlsruhe Institute of Technology (KIT), Bl\"ucherstr. 17, D--76185 Karlsruhe, Germany, and Heidelberg Institute for Theoretical Studies (HITS), \href{daniel.wolffram@kit.edu}{daniel.wolffram@kit.edu} and \href{johannes.bracher@kit.edu}{johannes.bracher@kit.edu}.}
\and
	Johannes Bracher$^{\dagger\S}$
\and 
	Timo Dimitriadis$^{*\S}$
}

\begin{document}

\maketitle

\begin{abstract}
	\noindent
	Divergence functions are measures of distance or dissimilarity between probability distributions that serve various purposes in statistics and applications. 
	We propose decompositions of Wasserstein and Cram\'er distances---which compare two distributions by integrating over their differences in  distribution or quantile functions---into directed shift and dispersion components. 
	These components are obtained by dividing the differences between the quantile functions into contributions arising from shift and dispersion, respectively.    
	Our decompositions add information on \emph{how} the distributions differ in a condensed form and consequently enhance the interpretability of the underlying divergences. 
	We show that our decompositions satisfy a number of natural properties and are unique in doing so in location-scale families.
	The decompositions allow us to derive sensitivities of the divergence measures to changes in location and dispersion, and they give rise to weak stochastic order relations that are linked to the usual stochastic and the dispersive order.
	Our theoretical developments are illustrated in two applications, where we focus on forecast evaluation of temperature extremes and on the design of probabilistic surveys in economics.
\end{abstract}

\section{Introduction}

The\blfootnote{\hspace{-0.215cm}$^{\S}$The last two authors contributed equally.} 
task of comparing pairs of probability distributions arises in numerous contexts and has given rise to a wealth of \textit{divergence functions} or \textit{statistical distances} \citep{DD2013}. 
A particularly well-studied divergence is the Wasserstein distance with numerous theoretical developments in machine learning (e.g., \citealt{Frogner2015, arjovsky2017wasserstein}), dependence modelling \citep{Wiesel2022}, distributional regression \citep{chen2023wasserstein}, and model diagnostics \citep{munk1998nonparametric}.
Applications exist in diverse fields including image processing \citep{ni2009local}, biostatistics \citep{SFG21}, and economics \citep{gm1915di, rachev2011probability};
see \citet{Panaretos2019} for a recent review.
The Cram\'er distance is an alternative which shares many attractive properties of the Wasserstein distance, including its symmetry property and the rewarding of closeness \citep{RS16}. It has become a popular metric in forecast evaluation, with applications in the atmospheric sciences \citep{TGG13, RCP2020}, hydrology \citep{Barna2023} and electricity markets \citep{JS19}. The Cram\'er distance has also found use in machine learning as an alternative to the Wasserstein distance in generative adversarial networks \citep{Bellemare2017}.

In this paper, we are concerned with decompositions of statistical distances of Wasserstein and Cram\'er type.
We focus on the real-valued case, where $F$ and $G$ are probability distributions on the real line $\R$, and we identify both distributions with their respective cumulative distribution functions (CDFs). 
In a nutshell, a divergence is a function $\D$ such that $\D(F,G)$ is non-negative for all $F, G$, and equals zero if and only if $F = G$.
The purpose of divergence functions is to reduce the difference between two probability distributions, i.e., two infinite-dimensional objects, to a single number.
Of course, this reduction implies a severe loss of information: 
A divergence function only quantifies the \emph{magnitude of dissimilarity} between two distributions $F$ and $G$, but it hides the \textit{specific nature of the differences}, e.g., whether the main difference is in location or dispersion. 
To shed light on these aspects, we propose novel decompositions of divergence functions $\D$ into four non-negative and interpretable components
\begin{equation}
	\label{Eq:GeneralDecomp}
	\begin{aligned}
		\D(F, G) = \shift_+^{\D} + \shift_-^{\D}
		&+ \disp_+^{\D} + \disp_-^{\D}.
	\end{aligned}
\end{equation}
Here, the \emph{shift components} $\shift_{\pm}^{\D}$ with $\pm \in \{+,-\}$ quantify differences in \emph{location}, while the \emph{dispersion components} $\disp_\pm^{\D}$ measure differences in \emph{variability} between $F$ and $G$.
The signed components (with subscript `$+$' and `$-$') attribute parts of the distance to upwards and downwards shifts, and more or less dispersion of $F$ relative to $G$.
Of course, the components in \eqref{Eq:GeneralDecomp} are functions of the pair of distributions $(F,G)$. For the sake of brevity, however, we will sometimes omit this dependence and use the shorthands $\shift_\pm^{\D} = \shift_\pm^{\D}(F,G)$ and $\disp_\pm^{\D} = \disp_\pm^{\D}(F,G)$. These refer to the components of $\D(F,G)$ between for two generic distributions $F$ and $G$ or a pair of distributions which becomes clear from the context.

Our decompositions in \eqref{Eq:GeneralDecomp} apply to arbitrary (possibly discontinuous) distributions and to divergence measures that allow for certain representations through quantile functions.
We address the aforementioned Wasserstein distance, more specifically, the $p$-th power of the $p$-\emph{Wasserstein distance},
\begin{equation}
	\label{Eq:pWasserstein}
	\WD_p(F, G) = \int_0^1 \big| F^{-1}(\tau) - G^{-1}(\tau) \big|^p \dd \tau
\end{equation}
for $p \in [1,\infty)$.
Here, $F^{-1}$ and $G^{-1}$ denote the quantile functions, given by $F^{-1}(\tau) = \inf\{x \in\R \mid \tau \leq F(x)\}$, $\tau \in [0,1]$ and accordingly for $G^{-1}$, i.e., the left-inverses of the CDFs. The \emph{Cram\'er distance} arises as the square of the special case $p=2$ within the class of $l_p$ distances,
\begin{equation}
	\label{Eq:Lp}
	l_p(F, G) = \left(\int_{-\infty}^\infty |F(x) - G(x)|^p \dd x \right)^{1/p}.
\end{equation}
While its classic definition (as a special case of \eqref{Eq:Lp}) is in terms of CDFs, we provide an alternative representation via quantile functions, which forms the basis for its decomposition. 
Akin to the distances themselves, our proposed decompositions integrate over suitably assigned differences of the quantile functions and as such account for \emph{any} distributional difference between $F$ and $G$.

Section~\ref{sec:Decompositions} introduces the novel decompositions together with extensive intuitive explanations.
A particularly straightforward graphical illustration is available for $p = 1$, in which case the expressions in \eqref{Eq:pWasserstein} and \eqref{Eq:Lp} coincide and are referred to as the \textit{area validation metric} (AVM). The shift and dispersion terms in \eqref{Eq:GeneralDecomp} then arise from simple comparisons of \emph{central intervals} at \emph{coverage levels} $\alpha \in [0,1]$, i.e., intervals spanned by the $(1 \pm \alpha)/2$ 
quantiles of $F$ and $G$.
Roughly speaking, central intervals of differing lengths indicate differences in dispersion, while shifted intervals point to differing locations.
The components in \eqref{Eq:GeneralDecomp} are obtained by integrating over all coverage levels $\alpha \in [0,1]$.

In Sections \ref{Sec:Properties} and \ref{sec:Orders} we provide theoretical arguments that support the particular specifications of our decompositions.

Firstly, the decompositions behave naturally in settings where the distributions $F$ and $G$ are linked through additive shifts, symmetry relations, or a location-scale property. 
We also provide closed-form expressions for the components in the Gaussian case.
Crucially, we prove that the proposed decompositions are \emph{unique} in simultaneously satisfying a number of natural properties for (symmetric) location-scale families.
This uniqueness is especially remarkable given that our decompositions operate directly on the quantile functions of the distributions.

Secondly, we show that the decompositions for the considered divergence measures mostly agree on which components are non-zero up to a subtle difference in the shift components.
We further derive sensitivities of the divergences to differences in shift and dispersion. For symmetric distributions and with increasing power $p$, the $p$-Wasserstein distance exhibits an increasing sensitivity towards differences in dispersion.
Furthermore, for Gaussian distributions, we show analytically that the Cram\'er distance weighs differences in dispersion even lower than the $\AV$ (i.e., the Wasserstein distance $\WD_1$ with the smallest power, $p = 1$).

Lastly, we derive comprehensive relations between our decompositions and suitable order relations of probability distributions. For each divergence function, there exist weak stochastic and dispersive orders such that the directed shift and dispersion components are (non-)zero if and only if the two distributions $F$ and $G$ are ordered accordingly.
These properties further strengthen the theoretical backbone of our decompositions.

While extensive work has been done on decomposing proper scoring rules (\citealt{Hersbach2000, Broecker2009, DGJ_2021, Bracher2021}, among many others), the literature on decompositions of divergence functions into interpretable components is sparse.
An exception is the exact decomposition of the 2-Wasserstein distance into the squared differences of the distributions' means and standard deviations, together with an analytically known remainder term capturing differences in shape \citep{BCMR1999, IV2015}, which has recently been used in applications by \citet{SFG21} and \citet{LA2022}.
Remark~\ref{rem:paramDecomp} illustrates that in contrast to this moment-based approach, which compares summary statistics that arise for each of the distributions separately, 
our decompositions are fully \emph{nonparametric} in the sense of aggregating (integrating over) all distributional differences of $F$ and $G$. 
Furthermore, our approach does not require a remainder term and is applicable to a range of divergence functions. 

We note that our decompositions do not apply to other well-known divergence functions such as the Kullback–Leibler divergence or the Hellinger distance, as they lack a suitable representation in terms of quantile functions. Broadly speaking, these divergences are based on point-wise comparisons of probability density functions, without a notion of distance between the elements of $\mathbb{R}$ (the support of $F$ and $G$), whereas the Wasserstein and Cram\'er distances consider \textit{closeness} of $F$ and $G$ \citep{Bellemare2017}, i.e., the concentration of probability mass in nearby regions. This way of quantifying the distance between two distributions connects naturally to the notions of shifts and dispersion put forward in this work.

In Section \ref{sec:applications}, we illustrate the decompositions in two applications from the fields of climate science and economic survey design. 
Firstly, we take a closer look at the evaluation of climate models by \cite{Thorarinsdottir2020}, who employ the Cram\'er distance to assess predictions of temperature extremes. For most models, our decompositions reveal systematic biases (upward for some, downward for others). Moreover, most models are found to produce overdispersed predictive distributions. 
In a second application, we build upon work by \cite{BDE23} who study how different histogram binning schemes  
impact responses in macroeconomic probabilistic surveys.
Here, our decompositions serve to demonstrate that the submitted forecast distributions indeed change in ways which are coherent with the authors' pre-registered hypotheses.

The supplementary text contains proofs and derivations together with details on how the decompositions can be computed or approximated in practice, additional illustrations, counterexamples and further details. Replication code is available at \url{https://github.com/resinj/replication_SD-Decomp}.

\section{Quantile-based decompositions of divergence functions}
\label{sec:Decompositions}

We start by illustrating our decomposition in detail for the area validation metric (AVM) in Section \ref{subsec:AVM}. The more complex cases of the general $p$-Wasserstein distance ($\WDp$) and the Cram\'er distance (CD) will be addressed in Sections \ref{subsec:Wasserstein} and \ref{subsec:CD}, respectively.

\subsection{The area validation metric}
\label{subsec:AVM}

For $p = 1$, the Wasserstein and $l_p$ distances coincide and are referred to as the \emph{area validation metric} ($\AV$),
\begin{equation}
	\AV(F, G) 
	= \int_{-\infty}^\infty \big| F(x) - G(x) \big| \dd x
	= \int_0^1 \big| F^{-1}(\tau) - G^{-1}(\tau) \big| \dd \tau.
\label{eq:defavm}
\end{equation}
In the following we rely on the latter quantile-based representation, which we rewrite as
\begin{align} \label{Eq:AVMFolded}
\AV(F, G) 
&= \frac{1}{2} \int_0^{1} \AV_\alpha(F, G) \dd \alpha,\quad\text{where} \\
\AV_\alpha(F, G) &= \left|F^{-1}\left(\tfrac{1-\alpha}{2}\right) - G^{-1}\left(\tfrac{1-\alpha}{2}\right)\right|
+ \left|F^{-1}\left(\tfrac{1+\alpha}{2}\right) - G^{-1}\left(\tfrac{1+\alpha}{2}\right)\right|. \notag
\end{align}
Here, the integrand $\AV_\alpha(F, G)$ simply compares the quantiles at levels $\frac{1-\alpha}{2}$ and $\frac{1+\alpha}{2}$, which span the \emph{central intervals} with coverage probability $\alpha$ of the two distributions.
The AVM results from integrating over all \emph{coverage levels} $\alpha \in [0, 1]$.

The quantile-based and interval-based representations of the AVM are illustrated graphically in the top row of Figure \ref{fig:Illustration_FoldedQuantiles}. The left plot visualizes expression \eqref{eq:defavm}, while the right panel illustrates \eqref{Eq:AVMFolded} in what we call a \textit{quantile spread plot}. In a nutshell, the latter plot displays the (at the median) folded and (to coverage levels) rescaled quantile functions, i.e., the $(1 \pm \alpha)/2$ quantiles of $F$ and $G$, which characterize the central intervals, as a function of the coverage $\alpha \in [0, 1]$. Twice the AVM then appears as the gray area between the two $\prec$-shaped curves if the dark gray area, which arises at coverages with disjoint central intervals, is counted twice.

\begin{figure*}[tb!]
\begin{center}
	\subcaptionbox{Quantile functions}{
		\includegraphics[width=\defaultfigwidth\textwidth,page = 2]{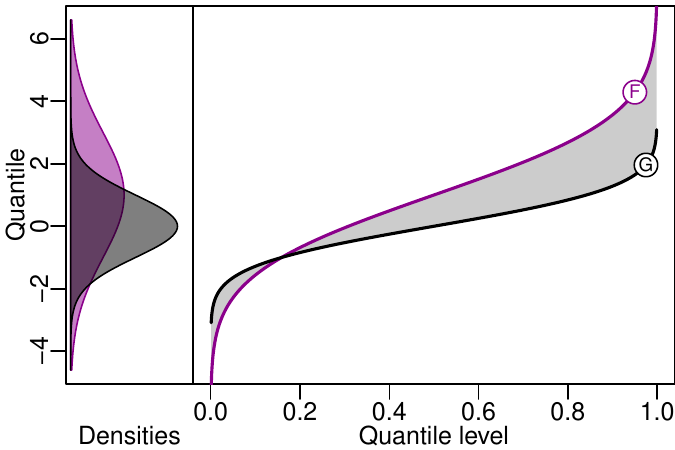}}
	\subcaptionbox{Quantile spread plot}{\includegraphics[width=\defaultfigwidth\textwidth,page = 5]{figures/Fig1.pdf}}\bigskip\\
	\subcaptionbox{AVM interval comparisons}{\includegraphics[width=\defaultfigwidth\textwidth,page = 7]{figures/Fig1.pdf}}
	\subcaptionbox{AVM decomposition}{\includegraphics[width=\defaultfigwidth\textwidth,page = 6]{figures/Fig1.pdf}} 
\end{center}
\caption{Illustration of the AVM decomposition for a pair of normal distributions, $F=\norm(1,2)$ and $G=\norm(0,1)$ together with their densities. 
	The top row illustrates the AVM (gray shaded areas) in terms of the quantile-based formulation in \eqref{eq:defavm} in panel (a) and the formulation in terms of central intervals in \eqref{Eq:AVMFolded} in panel (b), where the shaded area equals $2 \times \AV$. 
	Panel (c) illustrates the $\AV_\alpha$ decomposition in \eqref{Eq:AV-decomp} for three distinct configurations of the central intervals of two generic distributions $F$ and $G$, which is discussed in the main text.
	Panel (d) shows the resulting decomposition of (twice) the $\AV$ in the quantile spread plot across all coverage levels $\alpha \in [0, 1]$. 
	The labels at the top of panel (d) indicate which configuration (as illustrated in panel (c)) occurs at each coverage level.}
\label{fig:Illustration_FoldedQuantiles}
\end{figure*}

In order to obtain four components as in \eqref{Eq:GeneralDecomp}, we use an $\alpha$-wise decomposition of the integrand in \eqref{Eq:AVMFolded},
\begin{align}
\AV_\alpha(F, G) 
=\, \underbrace{\shift_{\alpha,+}^{\AV}(F, G)}_{\text{``}F \text{ shifted up''}} 
\,+\, \underbrace{\shift_{\alpha,-}^{\AV}(F, G)}_{\text{``}F \text{ shifted down''}}
\,+\, \underbrace{\disp_{\alpha,+}^{\AV}(F, G)}_{\text{``}F \text{ more dispersed''}}
\,+\, \underbrace{\disp_{\alpha,-}^{\AV}(F, G)}_{\text{``}F \text{ less dispersed''}}.
\label{Eq:AV-decomp}
\end{align}
Using $[z]_+ := \max(z,0)$ for the positive part of a real number $z \in \mathbb{R}$, we define the $\alpha$-wise components
\begin{align}
\shift_{\alpha,+}^{\AV}(F,G)
&:=2 \Big[ \min\big\{F^{-1}\left(\tfrac{1+\alpha}{2}\right) - G^{-1}\left(\tfrac{1+\alpha}{2}\right),
F^{-1}\left(\tfrac{1-\alpha}{2}\right) - G^{-1}\left(\tfrac{1-\alpha}{2}\right) \big\} \Big]_+\quad\text{and}
\label{Eq:AVM_Shift} \\
\disp_{\alpha,+}^{\AV}(F,G) 
&:= 
\Big[\Big(F^{-1}\left(\tfrac{1+\alpha}{2}\right) - G^{-1}\left(\tfrac{1+\alpha}{2}\right)\Big)
- \Big(F^{-1}\left(\tfrac{1-\alpha}{2}\right) - G^{-1}\left(\tfrac{1-\alpha}{2}\right)\Big)\Big]_+ \label{Eq:AVM_Disp}
\end{align}
explained below.
The remaining two $\alpha$-wise components are symmetrically defined as 
\begin{equation}
	\shift_{\alpha,-}^{\AV}(F,G) := \shift_{\alpha,+}^{\AV}(G,F)\quad\text{and}\quad
	\disp_{\alpha,-}^{\AV}(F,G) := \disp_{\alpha,+}^{\AV}(G,F).
\label{Eq:AVM_Minus}
\end{equation}

We henceforth refer to the two components with subscript `$+$' as the \emph{plus} components as they quantify how $F$ is, relative to $G$, shifted \emph{upwards} and has an \emph{increased} dispersion, respectively.
Similarly, we refer to the terms with subscript `$-$' as the \emph{minus} components. 

The decomposition terms in \eqref{Eq:AVM_Shift}--\eqref{Eq:AVM_Disp} attribute the overall difference between the central $\alpha$-interval endpoints of $F$ and $G$---which is captured by the integrand $\AV_{\alpha}(F,G)$ in \eqref{Eq:AVMFolded}---to the components by more intricate interval comparisons.
In a nutshell, the shift components capture twice the distance that one of the $\alpha$-intervals needs to be moved to lie within the other, while the dispersion components measure by how much one needs to (de)compress the $\alpha$-interval of $G$ to have the same length as the $\alpha$-interval of $F$.

The interval comparisons are illustrated in detail in the bottom left panel in Figure \ref{fig:Illustration_FoldedQuantiles} for three distinct configurations of the central intervals of $F$ and $G$ (in purple and black, respectively). The two 
bars plotted between the intervals capture the two summands in \eqref{Eq:AVMFolded}. In the illustrations, the $F$-intervals are larger than the $G$-intervals by the blue parts, which are attributed to the dispersion component in \eqref{Eq:AVM_Disp}.
Note that \eqref{Eq:AVM_Disp} can be rewritten as the positive part of the difference between the lengths of the $F$- and $G$-intervals.

In the case of \emph{nested} central intervals, the entire $\alpha$-wise AVM in \eqref{Eq:AVMFolded} is attributed to the dispersion component in \eqref{Eq:AVM_Disp}.
In the illustrated cases of \emph{overlapping} or \emph{disjoint} central intervals, the $F$-intervals lie higher than the $G$-intervals (in that the upper and lower endpoints are ordered in the same way). 
In this case, the shift component in \eqref{Eq:AVM_Shift} captures twice the minimum difference between the endpoints, which accounts for the remaining orange part of the colored bars. 

If, on the other hand, the $G$-interval is wider than the $F$-interval, the difference in interval length is attributed to the minus dispersion component. Analogously, if the intervals are ordered differently, their difference in location is attributed to the minus shift component.
On a technical level, these separations into plus and minus terms are achieved by the $[\cdot]_+$ operator in \eqref{Eq:AVM_Shift} and \eqref{Eq:AVM_Disp}, respectively.

Notably, the $\alpha$-wise dispersion components generalize upon the difference of the interquartile ranges (that arises for $\alpha = 0.5$ in \eqref{Eq:AVM_Disp}), and the shift components nest a comparison of the distributions' medians (for $\alpha = 0$ in \eqref{Eq:AVM_Shift}).

The bottom right panel of Figure~\ref{fig:Illustration_FoldedQuantiles} illustrates how the overall AVM decomposition arises through  \emph{integration} over all coverage levels $\alpha \in [0,1]$.
As in the top right panel, the entire colored area corresponds to twice the AVM.
Notice that the dark yellow area is counted twice, because the two bars overlap in disjoint interval configurations, as illustrated in panel (c). 

Consequently, the components of the final decomposition as in \eqref{Eq:GeneralDecomp} are defined as
\begin{equation}
	\shift_\pm^{\AV}(F, G) := \frac{1}{2}  \int_0^1 \shift_{\alpha,\pm}^{\AV}(F, G) \dd \alpha \quad\text{and}\quad
	\disp_\pm^{\AV}(F, G) := \frac{1}{2} \int_0^1 \disp_{\alpha,\pm}^{\AV}(F, G) \dd \alpha,
\label{eq:avm_disp_plus}
\end{equation}
which yields the following result.

\begin{prop}
\label{prop:AVMExact}
The $\AV$ decomposition,
\begin{align*}
	\AV(F, G) = \shift_+^{\AV}(F, G) \,+\, \shift_-^{\AV}(F, G)\,
	+\, \disp_+^{\AV}(F, G) \,+\, \disp_-^{\AV}(F, G),
\end{align*}
whose components are given in \eqref{Eq:AVM_Shift}--\eqref{eq:avm_disp_plus}, is exact.
\end{prop}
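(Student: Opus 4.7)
The plan is to reduce the claim to a pointwise (in $\alpha$) algebraic identity and then integrate using representation \eqref{Eq:AVMFolded}. Fix $\alpha \in [0,1]$ and abbreviate
\[
a := F^{-1}\!\left(\tfrac{1-\alpha}{2}\right) - G^{-1}\!\left(\tfrac{1-\alpha}{2}\right), \qquad b := F^{-1}\!\left(\tfrac{1+\alpha}{2}\right) - G^{-1}\!\left(\tfrac{1+\alpha}{2}\right).
\]
By definition $\AV_\alpha(F,G) = |a|+|b|$. Rewriting the components in these coordinates yields $\disp_{\alpha,+}^{\AV}(F,G) = [b-a]_+$ and $\disp_{\alpha,-}^{\AV}(F,G) = [a-b]_+$, so the two dispersion integrands sum to $|b-a|$. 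Likewise, $\shift_{\alpha,+}^{\AV}(F,G) = 2[\min(a,b)]_+$ and $\shift_{\alpha,-}^{\AV}(F,G) = \shift_{\alpha,+}^{\AV}(G,F) = 2[-\max(a,b)]_+$.

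The central step is therefore the pointwise identity
\[
|a| + |b| \; = \; 2[\min(a,b)]_+ \; + \; 2[-\max(a,b)]_+ \; + \; |b-a|,
\]
which I would verify by a short case distinction on the signs of $a$ and $b$ (equivalently, on which of the three configurations ``nested'', ``overlapping'', ``disjoint'' occurs). If $a,b \geq 0$ (one disjoint/overlapping case), the right-hand side equals $2\min(a,b) + |b-a| = \min(a,b)+\max(a,b) = a+b = |a|+|b|$; the case $a,b \leq 0$ is analogous with the shift$_-$ term active. If $a$ and $b$ have the same sign but different magnitudes relative to zero, we get the overlapping case. If $a$ and $b$ have strictly opposite signs (the nested case), both $[\cdot]_+$ shift terms vanish, and $|b-a| = |a|+|b|$, again matching. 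These four subcases exhaust all sign patterns and establish the identity.

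Having verified the $\alpha$-wise decomposition $\AV_\alpha(F,G) = \shift_{\alpha,+}^{\AV}(F,G) + \shift_{\alpha,-}^{\AV}(F,G) + \disp_{\alpha,+}^{\AV}(F,G) + \disp_{\alpha,-}^{\AV}(F,G)$, I would multiply by $\tfrac{1}{2}$ and integrate over $\alpha \in [0,1]$. Using \eqref{Eq:AVMFolded} on the left-hand side and the definitions in \eqref{eq:avm_disp_plus} (together with the symmetric definitions of the minus components) on the right, the claim follows immediately.

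The main obstacle is really only bookkeeping: ensuring that $[\min(a,b)]_+$ and $[-\max(a,b)]_+$ correctly pick out exactly the ``common shift'' contribution when the interval-end differences $a,b$ share a sign, and vanish when they disagree (the nested case), so that the residual $|b-a|$ lands entirely in the dispersion term. Since the identity is purely algebraic and does not rely on continuity or other regularity of $F$ and $G$, the argument works for arbitrary distributions on $\R$ via the generalized quantile functions.
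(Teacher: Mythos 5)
Your proof is correct and follows essentially the same route as the paper: the paper proves the result as the $p=1$ case of the $\WD_p$ decomposition, whose key step is exactly your pointwise identity $|a|+|b| = |b-a| + 2[\min(a,b)]_+ + 2[\min(-a,-b)]_+$ (obtained there by applying a small case-distinction lemma to $[a]_+ + [-b]_+$ and $[-a]_+ + [b]_+$), followed by the same change of variables and integration over $\alpha$.
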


Our decomposition is \emph{nonparametric} by construction in the sense that it disaggregates the integrand $\AV_\alpha(F, G)$ in \eqref{Eq:AVMFolded} at the fundamental quantile (or central interval) level and aggregates the resulting distributional differences through integration.
Hence, the decomposition terms inherently capture all distributional discrepancies between $F$ and $G$, as opposed to e.g., a moment-based decomposition into differences in means and variances as in \citet{BCMR1999} and \citet{IV2015}.

At the $\alpha$-wise level, at most one shift and one dispersion term in \eqref{Eq:AVM_Shift}--\eqref{Eq:AVM_Minus} can be positive.
In contrast, all four components can be positive in the aggregated (integrated over all $\alpha$ levels) decomposition in Proposition \ref{prop:AVMExact}, which we consider to be a natural feature of a nonparametric decomposition, as illustrated in the following examples.

\begin{figure*}[tb]
\begin{center}
	\subcaptionbox{AVM decomposition with nonzero dispersion terms}{
		\includegraphics[width = \defaultfigwidth\textwidth,page = 1]{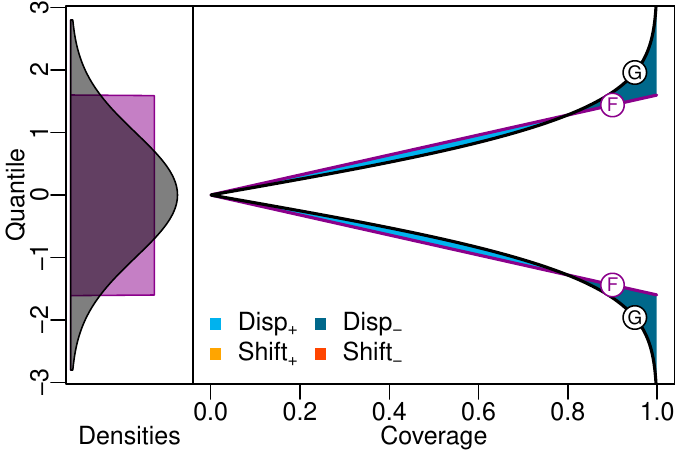}}
	\subcaptionbox{AVM decomposition with nonzero shift terms}{
		\includegraphics[width = \defaultfigwidth\textwidth,page = 2]{figures/Fig2.pdf}}
	\caption{Graphical illustrations (as in Figure~\ref{fig:Illustration_FoldedQuantiles}, panel (d)) of the $\AV$ decompositions for the two distributional comparisons from Example~\ref{exmpl:BothComponentsPositive} with densities on the left.
		Comparison (a) illustrated on the left leads to nonzero plus and minus dispersion components, while comparison (b) on the right leads to nonzero plus and minus dispersion components.}
\label{fig:examples_avm}
\end{center}
\end{figure*}

\begin{example}
\label{exmpl:BothComponentsPositive}
Here, we present two simple examples that lead to an $\AV$ decomposition where both the plus and the minus shift or dispersion components are nonzero.

\begin{enumerate}
\item 
\sloppy
The left panel of Figure~\ref{fig:examples_avm} compares a uniform distribution $F = \unif (-1.6, 1.6)$ with a standard normal distribution $G = \norm(0, 1)$.
This example illustrates two distributions whose differences in central interval width vary for different coverage levels $\alpha$, which yields nonzero plus and minus dispersion components, 
$\AV(F,G) = \disp_+^{\AV}(F, G) +  \disp_-^{\AV}(F, G) = 0.065 + 0.063$.
As neither distribution exhibits a smaller dispersion at all coverage levels, having two positive dispersion components is a natural feature of our decomposition.

\item 
The right panel of Figure~\ref{fig:examples_avm} compares two mixtures of uniform distributions with a mirrored asymmetry, $F = 0.5\times \unif(0,4) + 0.5\times \unif(4,6)$, and $ G = 0.5\times \unif(1,3) + 0.5\times \unif(3,7)$. 
While the width of all central intervals coincides (resulting in zero dispersion components), the locations of the central intervals are shifted in different directions for different coverage levels. This results in nonzero plus and minus shift components,
$\AV(F,G) = \shift_+^{\AV}(F, G) + \shift_-^{\AV}(F, G) = 0.25 + 0.25$.     
\end{enumerate}
\end{example}

Further examples with (up to) four nonzero components are readily constructed. 
While such examples may occasionally be encountered in practice, the decomposition uncovers mostly clear differences in location and dispersion in our applications.

\subsection{The $p$-Wasserstein distance}
\label{subsec:Wasserstein}

We next present a generalized decomposition for the $p$-th power of the $p$-Wasserstein distance in \eqref{Eq:pWasserstein} for $p \in [1,\infty)$.
Here, we directly state the decomposition in its integral-form and dispense with a disaggregated $\alpha$-wise treatment as in \eqref{Eq:AV-decomp}, which can be obtained by simply omitting the integrals in the following formulas.

Using the notation $z^{[p]} := \sgn(z)\cdot\vert z\vert^p$ as a shorthand for the \emph{signed $p$-th power} of a real number $z \in \mathbb{R}$, we generalize the components given by \eqref{Eq:AVM_Shift}--\eqref{eq:avm_disp_plus} to

\begin{align}    
\shift_+^{\WD_p}(F,G) &:= 
\frac12 \int_0^1 2 \bigg[\min \bigg\{\Big( F^{-1}\left( \tfrac{1+\alpha}{2} \right) - G^{-1}\left( \tfrac{1+\alpha}{2} \right) \Big)^{[p]},
\Big(F^{-1}\left( \tfrac{1-\alpha}{2} \right) - G^{-1}\left( \tfrac{1-\alpha}{2} \right) \Big)^{[p]}  \bigg\}\bigg]_+  \dd \alpha,
\label{Eq:WD_Shift} \\
\disp_+^{\WD_p}(F,G) &:=  
\frac12 \int_0^1 \bigg[\Big( F^{-1} \left( \tfrac{1+\alpha}{2} \right) - G^{-1}\left( \tfrac{1+\alpha}{2} \right)\Big)^{[p]}
-\Big(F^{-1}\left( \tfrac{1-\alpha}{2} \right) - G^{-1}\left( \tfrac{1-\alpha}{2} \right) \Big)^{[p]} \bigg]_+ \dd\alpha. \label{Eq:WD_Disp}
\end{align}
We again define $\shift_-^{\WD_p}(F,G) := \shift_+^{\WD_p}(G,F)$ and $\disp_-^{\WD_p}(F,G) := \disp_+^{\WD_p}(G,F)$ through symmetry and obtain the following result.

\begin{prop}
\label{prop:WDExact}
The $\WDp$ decomposition,
\begin{align*}
\WD_p(F, G) = \shift_+^{\WD_p}(F, G) \,+\, \shift_-^{\WD_p}(F, G) \,+\, \disp_+^{\WD_p}(F, G) \,+\, \disp_-^{\WD_p}(F, G),
\end{align*}
whose components are given in \eqref{Eq:WD_Shift}--\eqref{Eq:WD_Disp}, is exact.
\end{prop}

The interpretations of the respective components match the ones from Section \ref{subsec:AVM} and Figure~\ref{fig:Illustration_FoldedQuantiles} when simply taking the signed $p$-th power of the distances between the respective interval ends considered in \eqref{Eq:WD_Shift}--\eqref{Eq:WD_Disp}.
For $p > 1$, taking the signed $p$-th power of interval end differences (opposed to $p=1$ for the $\AV$) tends to result in a larger proportion of the Wasserstein distance being explained by a difference in dispersion than for the AVM. 
We study this phenomenon in mathematical detail in Section \ref{subsec:ComparisonDecompositions}.

\begin{remark}
	\label{rem:paramDecomp}
	\citet{IV2015} discuss an alternative decomposition of the squared 2-Wasserstein distance into a location, scale and a remaining shape component given by
	\begin{equation}
		\label{eq:paramDecomp}
		\WD_2(F,G) = \left(\int_0^1 |F^{-1}(\alpha) - G^{-1}(\alpha)|^2 \text{d}\alpha \right)
		=\underbrace{(\mu_F - \mu_G)^2}_\text{location} + \underbrace{(\sigma_F - \sigma_G)^2}_\text{scale} + \underbrace{2\sigma_F\sigma_G(1 - \rho_{F, G}),}_\text{shape}
	\end{equation}
	where $\mu_F, \mu_G$ and $\sigma_F,\sigma_G$ denote the respective means and standard deviations, while $\rho_{F, G}$ is the Pearson correlation of the points in a quantile-quantile plot of $F$ and $G$.
	In contrast to our quantile-based decomposition, the decomposition in \eqref{eq:paramDecomp} draws on the respective moments of the distributions. 
	It could further be refined to \textit{directed} location and scale components as in our decomposition by considering the sign of the differences in means and standard deviations, respectively.
	
	The decomposition in \eqref{eq:paramDecomp} shares some of the properties of our decomposition for the CD and AVM, such as being shift invariant (i.e., simple shifts only affect the location component; c.f.\ Proposition \ref{prop:DispInvarianceforShifts}) and compatibility with location-scale families (c.f.\ Proposition \ref{Prop:CompLocScale}, which however does not apply to our $\WDp$-decompositions for $p>1$). 
	Furthermore, \eqref{eq:paramDecomp} is compatible with the usual stochastic and dispersive orders and is clearly linked to weak stochastic and dispersive orders based on ordering by means and standard deviations (cf.\ Section \ref{sec:Orders}).
	The moment-based decomposition, however, is limited to the squared 2-WD and does not apply to the AVM and CD.
\end{remark}

\begin{remark}
	While the $\WDp$ decomposition from Proposition~\ref{prop:WDExact} holds for any finite $p \in [1,\infty)$, it is not obvious how this can be generalized to $p = \infty$, where
	\[
	\lim_{p\rightarrow\infty} \WDp^{1/p}(F,G) = \sup_{\alpha\in(0,1)} \vert F^{-1}(\alpha) - G^{-1}(\alpha)\vert.
	\]
	
	In fact, our $\WDp$ decomposition operates on the $p$-th power of the $p$-Wasserstein distance, and there is no canonical way of transposing it to the ordinary Wasserstein distance $\WDp^{1/p}$.
	One possibility that achieves a decomposition of $\WDp^{1/p}$ for fixed $p < \infty$ is to simply rescale the $\WDp$ decomposition terms with $\frac{\WDp^{1/p}}{\WDp}$, i.e., for any $\operatorname{Comp} \in \{\shift_\pm,\disp_\pm\}$  we set $\operatorname{Comp}^{\WDp^{1/p}} = \frac{\WDp^{1/p}}{\WDp} \operatorname{Comp}^{\WDp}$, which clearly yields rescaled components that sum to the $p$-th root $\WDp^{1/p}$.
	
	Then, taking the limit as $p \to \infty$ yields a decomposition for $\WD_\infty$. 
	We suspect that these proportions will converge under fairly general conditions if the $\infty$-Wasserstein distance is finite. 
	For example, Theorem \ref{Thm:DispInequality_pWD} below (together with Proposition \ref{prop:ShiftSymDist}) implies that this is the case for the shift components
	when comparing symmetric distributions.
\end{remark}

\subsection{The Cram\'er distance}
\label{subsec:CD}

The \textit{Cram\'er distance} ($\CD$) or \textit{integrated quadratic distance} corresponds to the squared $l_2$ metric,
\begin{align}
\label{Eq:CramerDistance}
\CD(F, G) = \int_{-\infty}^\infty \big| F(x) - G(x) \big|^2 \dd x.
\end{align}
We start by providing a novel representation of the Cram\'er distance in terms of quantile functions, which is necessary to apply the ideas behind our quantile-based decompositions to the Cram\'er distance.

\begin{prop}
\label{prop:CDQuantileRep}
The Cram\'er distance in \eqref{Eq:CramerDistance} can be expressed as
\begin{align}
\label{Eq:CD_quantile}
\CD(F, G) &= 2 \int_0^1 \int_0^1 \chi(\tau,\xi) \; \left| F^{-1}(\tau) - G^{-1}(\xi) \right| \dd \tau \dd \xi,\quad\text{where} \\
\chi(\tau,\xi) &:= \one \left\{ \sgn(\tau - \xi) \neq \sgn \big( F^{-1}(\tau) - G^{-1}(\xi) \big) \right\}.\notag
\end{align}
\end{prop}

The indicator $\chi(\tau,\xi)$ used in \eqref{Eq:CD_quantile} serves as an \emph{incompatibility check} of the $\tau$-quantile of $F$ with the $\xi$-quantile of $G$: 
Whenever the order of the quantiles and quantile levels is at odds, i.e., $F^{-1}(\tau) > G^{-1}(\xi)$ despite $\tau < \xi$ or vice versa, the indicator function $\chi$ returns one and hence the pair of quantiles
contributes to the Cram\'er distance.

Starting from the representation \eqref{Eq:CD_quantile}, a decomposition similar to the ones in Sections \ref{subsec:AVM}--\ref{subsec:Wasserstein} arises that compares central intervals at differing coverage levels $\alpha, \beta \in [0,1]$ by setting
\begin{align}
\label{Eq:CD_Shift}
&\begin{aligned}
\shift_+^{\CD}(F, G) :=
\frac{1}{2} \int_{0}^{1}\int_{0}^1 &\Big[\min\Big\{ F^{-1}\left(\tfrac{1+\alpha}{2}\right) - G^{-1}\left(\tfrac{1+\beta}{2} \right),
F^{-1}\left(\tfrac{1-\alpha}{2}\right) - G^{-1}\left(\tfrac{1-\beta}{2}\right) \Big\} \Big]_+ \\ &+ \left[F^{-1}\left(\tfrac{1-\alpha}{2}\right) - G^{-1}\left(\tfrac{1+\beta}{2}\right)\right]_+ \dd \alpha \dd \beta,    
\end{aligned} \\
\label{Eq:CD_Disp}
&\disp_+^{\CD}(F, G) :=
\frac{1}{2} \int_0^1\int_0^\beta \Big[ \Big( F^{-1}\left(\tfrac{1+\alpha}{2}\right) - G^{-1}\left(\tfrac{1+\beta}{2}\right) \Big)
-\Big( F^{-1}\left(\tfrac{1-\alpha}{2} \right) - G^{-1}\left(\tfrac{1-\beta}{2}\right) \Big) \Big]_+ \dd\alpha \dd\beta.
\end{align}
We define the minus counterparts by $\shift_-^{\CD}(F,G) := \shift_+^{\CD}(G,F)$ and $\disp_-^{\CD}(F,G) := \disp_+^{\CD}(G,F)$, as before.

\begin{prop}
\label{prop:CDExact}
The $\CD$ decomposition,
\begin{align*}
\CD(F, G) = \shift_+^{\CD}(F, G) \,+\, \shift_-^{\CD}(F, G)
\,+\, \disp_+^{\CD}(F, G) \,+\, \disp_-^{\CD}(F, G),
\end{align*}
whose components are given in \eqref{Eq:CD_Shift}--\eqref{Eq:CD_Disp}, is exact.
\end{prop}

\begin{figure*}[tb]
	\begin{center}
		\subcaptionbox{CD interval comparisons}{\includegraphics[width=\defaultfigwidth\textwidth,page = 1]{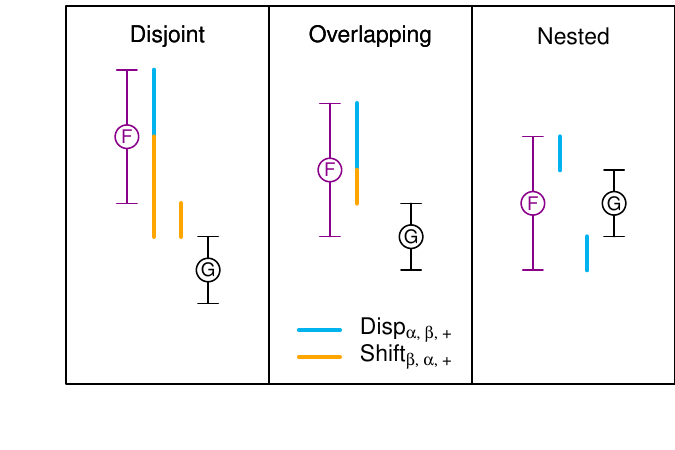}}
		\subcaptionbox{CD decomposition slice}{
			\includegraphics[width = \defaultfigwidth\textwidth,page = 5]{figures/Fig3.pdf}}
		\caption{Illustration of the CD decomposition similar to Figure \ref{fig:Illustration_FoldedQuantiles}, panels (c) and (d).
			Panel (a) illustrates how the CD decomposition arises from individual comparisons of the central intervals of $F$ and $G$ for three distinct configurations.
			Panel (b) shows a quantile spread plot for the two normal distributions $F=\norm(1,2)$ and $G=\norm(0,1)$ together with their densities. The central interval of $G$ at the fixed coverage level $\beta = 0.4$ is emphasized by the black horizontal lines. The corresponding \emph{slice} of the Cram\'er distance is obtained by comparing the fixed central interval of $G$ to all central intervals spanned by $F$. The overall CD and its decomposition are obtained by integrating all slices (i.e., integration across $\beta \in [0,1]$).
			The labels at the top of panel (d) indicate which configuration (as illustrated in panel (a)) occurs at a given coverage level $\alpha$ (for the fixed level $\beta = 0.4$).}
		\label{fig:illustration_cd}
	\end{center}
\end{figure*}

In analogy to the AVM decomposition in \eqref{Eq:AVM_Shift}--\eqref{eq:avm_disp_plus}, the CD decomposition in \eqref{Eq:CD_Shift}--\eqref{Eq:CD_Disp} aggregates suitable comparisons of the central intervals via integration. Hence, we focus our discussion on the differences between the two decompositions.
Most strikingly and in contrast to the previous decompositions, the components in \eqref{Eq:CD_Shift}--\eqref{Eq:CD_Disp} compare central intervals at differing coverage levels $\alpha$ and $\beta$ through the double integrals.
For individual pairs of central intervals, differences to the AVM decompositions can be observed in the left-hand plot of Figure~\ref{fig:illustration_cd} that graphically illustrates the CD decomposition akin to the bottom-left plot in Figure~\ref{fig:Illustration_FoldedQuantiles}.

While the integrand in the shift components in \eqref{Eq:CD_Shift} resembles \eqref{Eq:AVM_Shift} in capturing the distance that one of the intervals needs to be moved to lie within the other, the factor of two is missing in \eqref{Eq:CD_Shift}.
Moreover, in the case of disjoint intervals, the term in the last line of \eqref{Eq:CD_Shift} yields an additional contribution capturing the distance that one of the intervals needs to be moved to overlap with the other.

The integrand in the dispersion components in \eqref{Eq:CD_Disp} captures by how much
one needs to (de)compress the $\beta$-interval of $G$ to have the
same length as the $\alpha$-interval of $F$ \emph{if the coverage levels and interval lengths are at odds}. For example, if the $\alpha$-interval of $F$ is larger than the $\beta$-interval of $G$, different interval lengths indicate a difference in distributions only if $\alpha \leq \beta$, which is reflected by the integration boundary of the inner integral in \eqref{Eq:CD_Disp}. 
Otherwise, an increase in coverage naturally leads to an increased interval length even for identical distributions. In contrast, a shift between central intervals always hints at a distributional difference, regardless of coverage, as the central intervals of identical distributions are always nested.

The final components in \eqref{Eq:CD_Shift}--\eqref{Eq:CD_Disp} arise by integrating over the coverage levels $\alpha$ and $\beta$.
As a joint graphical illustration of the double integral over $\alpha$ and $\beta$ is challenging, we exemplarily fix the level of $\beta = 0.4$ in the right-hand plot of Figure~\ref{fig:illustration_cd} and illustrate the contributions in the integral over $\alpha$ for two normal distributions. In the supplement, Figure~\ref{fig:illustration_cd_grid} 
shows equivalent plots for other values of $\beta$.

In the figure, we illustrate the comparison of the central intervals of $F$ at all coverage levels $\alpha$ to the fixed central interval of $G$ with coverage $\beta$, which is emphasized by the horizontal black lines.
Contributions to the plus dispersion component that quantify by how much the central intervals of $F$ are wider than the fixed central interval of $G$ only arise for coverages $\alpha \leq \beta = 0.4$.
No contributions to the plus shift component arise for coverages $\alpha \gtrsim 0.554$, 
as the $0.4$-interval of $G$ is strictly nested in the central intervals of $F$ and hence no shifts between intervals arise.
For coverages $0.188 \lesssim \alpha \lesssim 0.554$, the respective intervals are overlapping and the height of the orange area corresponds to the shift distance.
We plot the area between the central intervals in such a way that it is conveniently bounded by lower or upper interval ends, which leads to the break in the orange area at coverage $\alpha = 0.4$.
Finally, for $\alpha \lesssim 0.188$, 
the intervals are disjoint, and the additional contribution in the last line of \eqref{Eq:CD_Shift} results in the dark yellow area contributing twice. 

As an aside, we note that the CD decomposition gives rise to a decomposition of the continuous ranked probability score (CRPS), a popular scoring rule used to evaluate probabilistic forecasts \citep{GR07}.
\begin{remark}
If $G = \dirac{y}$ is a Dirac distribution at $y$, the Cram\'er distance reduces to the CRPS,
\[
\CD(F, G) = \CRPS(F, y) = \int_{-\infty}^\infty |F(x) - \one(x \geq y)|^2 \textnormal{d}x.
\]
As $G$ has no variance in this case, one of the dispersion components vanishes, namely, $\disp_-^{\CD}(F, G) = 0$. Denoting by $m_F$ any median of $F$, the remaining components simplify to 
\begin{align*}
\disp_+^{\CD}(F, G) &= \disp^{\CRPS}(F,y)
= \CRPS(F, m_F),\\
\shift_+^{\CD}(F, G) &= \shift_+^{\CRPS}(F, y)
= \one(m_F > y) \times [\CRPS(F, y) - \CRPS(F, m_F)], \\
\shift_-^{\CD}(F, G) &= \shift_-^{\CRPS}(F, y) = \one(m_F < y) \times [\CRPS(F, y) - \CRPS(F, m_F)].
\end{align*}
This decomposition is equivalent to the decomposition of the weighted interval score (an interval-based approximation of the CRPS) mentioned in \cite{Bracher2021}.
\end{remark}

\section{Theoretical properties of the decompositions}
\label{Sec:Properties}

This section provides an in-depth analysis of the theoretical properties of the proposed decompositions.
Section \ref{subsec:BasicProperties} establishes their natural behavior for distribution classes that are shifted, symmetric, of location-scale type and Gaussian.
In Section \ref{subsec:ComparisonDecompositions}, we contrast the sensitivity of the different divergence measures to distributional changes in dispersion and shift.

\subsection{Basic properties}
\label{subsec:BasicProperties}

By construction, the components satisfy a simple symmetry. 
\begin{prop}[Symmetry]
\label{prop:Symmetry}
\sloppy
For any $p \in [1,\infty)$ and $\D \in \{\AV, \WDp,\CD\}$, we have
\begin{align}
\label{Eq:Symmetry}
\shift_+^{\D}(F,G) = \shift_-^{\D}(G,F), 
\quad \text{and}\quad
\disp_+^{\D}(F,G) = \disp_-^{\D}(G,F).
\end{align}
\end{prop}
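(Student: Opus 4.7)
The plan is to observe that the two stated identities are tautologies from the very definitions of the minus components. For the area validation metric, equation~\eqref{Eq:Disp_integrand} together with the integral formula in~\eqref{eq:avm_disp_plus} gives
\[
    \shift_-^{\AV}(F,G) = \shift_+^{\AV}(G,F), \qquad \disp_-^{\AV}(F,G) = \disp_+^{\AV}(G,F),
\]
and swapping the roles of $F$ and $G$ in these two identities yields exactly the claims in~\eqref{Eq:Symmetry}. For the $p$-Wasserstein distance, the minus components were declared by the analogous rule $\shift_-^{\WD_p}(F,G) := \shift_+^{\WD_p}(G,F)$ and $\disp_-^{\WD_p}(F,G) := \disp_+^{\WD_p}(G,F)$ in the text just before Proposition~\ref{prop:WDExact}; the same substitution $F \leftrightarrow G$ then delivers the desired equalities.

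For the Cram\'er distance, the minus components are again introduced by the identical convention $\shift_-^{\CD}(F,G) := \shift_+^{\CD}(G,F)$ and $\disp_-^{\CD}(F,G) := \disp_+^{\CD}(G,F)$ in the paragraph preceding Proposition~\ref{prop:CDExact}. Hence the one-line argument carries over verbatim, and all three cases of $\D \in \{\AV, \WD_p, \CD\}$ are handled uniformly.

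There is essentially no obstacle to overcome. The minus components were introduced precisely so that~\eqref{Eq:Symmetry} holds \emph{by construction}, and the proof reduces to reading off the defining equations and interchanging $F$ and $G$. In particular, no manipulation of the quantile expressions inside the $[\cdot]_+$ brackets is needed, and no reordering of the integrals over $\alpha$ (or over $\alpha$ and $\beta$ in the Cram\'er case) is required. The content of the proposition is therefore conceptual rather than computational: it records that the decomposition scheme is symmetric in the natural sense that reversing the comparison order swaps the plus and minus contributions of each type.
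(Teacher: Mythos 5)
Your proof is correct and matches the paper's treatment: the paper offers no separate argument for this proposition, introducing it with the phrase ``by construction,'' which is precisely your observation that the minus components are \emph{defined} as the plus components with $F$ and $G$ interchanged, so \eqref{Eq:Symmetry} follows by substituting $F \leftrightarrow G$ in the defining equations.
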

Hence, the properties that are outlined for the plus components (with subscript `$+$') also apply to the minus counterparts (with subscript `$-$') by symmetry.

Furthermore, it is easy to see that the dispersion components of the $\AV$ and the $\CD$ are invariant to simple changes in location without imposing any distributional restrictions on $F$ and $G$.

\begin{prop}[Dispersion invariant to shifts]
\label{prop:DispInvarianceforShifts}
For any distribution $F$ and $s\in \R$, the shifted distribution $F_s$ is given by $F_s(z) := F(z - s)$ for any $z \in \mathbb{R}$.
Then, for all $s \in \mathbb{R}$ and $\D \in \{\AV, \CD\}$, it holds that
\begin{equation}
\label{Eq:DispInvarianceforShifts}
\disp_\pm^{\D}(F_s,G) = \disp_\pm^{\D}(F,G). 
\end{equation}
\end{prop}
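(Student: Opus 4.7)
The plan is to reduce the claim to a single elementary observation about quantile functions under translation, namely that $F_s^{-1}(\tau) = F^{-1}(\tau) + s$ for every $\tau \in [0,1]$. This follows directly from the definition of the generalized inverse: since $\tau \leq F_s(z) = F(z-s)$ holds iff $\tau \leq F(z-s)$, writing $y = z-s$ yields $\inf\{z : \tau \leq F(z-s)\} = s + \inf\{y : \tau \leq F(y)\} = s + F^{-1}(\tau)$. I would state this as the first step and then exploit it to show the $s$ cancels in the integrands of the dispersion components for both $\AV$ and $\CD$.

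For the $\AV$ case, substitution into \eqref{Eq:AVM_Disp} gives
\begin{align*}
\disp_{\alpha,+}^{\AV}(F_s, G)
&= \bigl[\bigl(F^{-1}(\tfrac{1+\alpha}{2}) + s - G^{-1}(\tfrac{1+\alpha}{2})\bigr) - \bigl(F^{-1}(\tfrac{1-\alpha}{2}) + s - G^{-1}(\tfrac{1-\alpha}{2})\bigr)\bigr]_+ \\
&= \disp_{\alpha,+}^{\AV}(F, G),
\end{align*}
so integrating over $\alpha \in [0,1]$ yields $\disp_+^{\AV}(F_s, G) = \disp_+^{\AV}(F, G)$. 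The key point is that $F_s^{-1}$ appears exactly twice in the integrand and the two copies of $s$ enter with opposite signs. The same mechanism works for \eqref{Eq:CD_Disp}: the integrand equals $\bigl[(F_s^{-1}(\tfrac{1+\alpha}{2}) - F_s^{-1}(\tfrac{1-\alpha}{2})) - (G^{-1}(\tfrac{1+\beta}{2}) - G^{-1}(\tfrac{1-\beta}{2}))\bigr]_+$ after rearrangement, and the $s$ disappears inside the first difference, giving $\disp_+^{\CD}(F_s, G) = \disp_+^{\CD}(F, G)$ after integration over $\alpha, \beta$.

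For the minus components, I would invoke Proposition~\ref{prop:Symmetry} to write $\disp_-^{\D}(F_s, G) = \disp_+^{\D}(G, F_s)$, and repeat the same cancellation argument with the roles of the two arguments swapped: now $F_s^{-1}$ is the second argument and still appears twice with opposite signs in the integrand, so $\disp_+^{\D}(G, F_s) = \disp_+^{\D}(G, F) = \disp_-^{\D}(F, G)$.

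The proof has no real obstacle---it is a one-line substitution followed by a cancellation---so the main value of the write-up is to make clear \emph{why} the statement is restricted to $\D \in \{\AV, \CD\}$ and excludes the general $p$-Wasserstein case. For $\WD_p$ with $p>1$, the analogous integrand in \eqref{Eq:WD_Disp} involves signed $p$-th powers $(\,\cdot\,)^{[p]}$ applied to each interval-end difference, and the identity $(a+s)^{[p]} - (b+s)^{[p]} = a^{[p]} - b^{[p]}$ fails in general, so the cancellation mechanism breaks down. It may be worth noting this explicitly as a remark after the proof.
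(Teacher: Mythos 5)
Your proof is correct and follows exactly the same route as the paper's (one-line) argument: establish $F_s^{-1} = F^{-1} + s$ and observe that $s$ cancels in the dispersion integrands of \eqref{Eq:AVM_Disp} and \eqref{Eq:CD_Disp} because $F^{-1}$ enters twice with opposite signs, with the minus components handled by the symmetry $\disp_-^{\D}(F,G) = \disp_+^{\D}(G,F)$. Your closing remark on why the cancellation fails for $\WD_p$ with $p>1$ matches the discussion the paper gives immediately after the proposition.
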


This invariance property of the dispersion components to simple location shifts is very natural.
Unfortunately, it is not shared by the higher order Wasserstein distance decompositions with $p > 1$ as the simple shift by $s$ does not cancel out in the dispersion components when subtracting signed $p$-th powers of the upper and lower interval end differences in \eqref{Eq:WD_Disp}.\footnote{This also becomes obvious in the closed-form expressions for normal distributions  in Supplement \ref{sec:WDpNormals} 
when $F$ and $G$ have the same mean but different variances and a shifted version $F_s$ is compared to $G$.}

In order to analyze when one (or both) shift components vanish, we restrict attention to symmetric distributions.
We call a distribution $F$ symmetric if there exists a value $m \in \mathbb{R}$ such that $F^{-1}(\gamma) = 2m - F^{-1}(1-\gamma)$ for almost all $\gamma \in (0,1)$.\footnote{Note that admitting at most countably many points where the symmetry condition for the quantile function may be violated accounts for discontinuities in the quantile function. Such points form a null set in $[0,1]$ and can therefore be excluded from the integration domain without changing the value of an integral.}
If there exists a unique median, then $m = F^{-1}(0.5)$, otherwise $m$ is the midpoint of the interval of medians, which we henceforth call the \emph{central median}.
For symmetric distributions, the central median also coincides with the mean.

The following result shows that for two symmetric (but otherwise entirely flexible) distributions, at most one shift component is nonzero, and the direction of the shift agrees with the order of the medians.

\begin{prop}[Shift for symmetric distributions]
\label{prop:ShiftSymDist}
Let $F$ and $G$ be symmetric distributions with central medians $m_F$ and $m_G$ and $\D \in \{\AV,\WDp,\CD\}$, $p \in [1,\infty)$.
\begin{enumerate}
\item    
If $m_F \leq m_G$, then $\shift_+^{\D}(F,G) = 0$. 

\item     
If the medians of $F$ and $G$ are unique, then
\begin{equation}
\label{Eq:ShiftSymDist}
\shift_+^{\D}(F,G) > 0 \quad\Longleftrightarrow\quad m_F > m_G.
\end{equation}
\end{enumerate}
\end{prop}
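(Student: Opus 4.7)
My approach rests on a single symmetry identity: setting $\gamma = (1-\alpha)/2$ in the defining relation $F^{-1}(\gamma) = 2m_F - F^{-1}(1-\gamma)$, I obtain
\[
F^{-1}\!\bigl(\tfrac{1+\alpha}{2}\bigr) + F^{-1}\!\bigl(\tfrac{1-\alpha}{2}\bigr) = 2 m_F \qquad \text{for a.e. } \alpha \in [0,1],
\]
and analogously for $G$. Writing $A_\alpha := F^{-1}((1+\alpha)/2) - G^{-1}((1+\alpha)/2)$ and $B_\alpha := F^{-1}((1-\alpha)/2) - G^{-1}((1-\alpha)/2)$, subtraction of the two identities yields $A_\alpha + B_\alpha = 2(m_F - m_G) \leq 0$ a.e., so $\min(A_\alpha, B_\alpha) \leq 0$. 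For the $\AV$ this immediately kills the $[\min(A_\alpha, B_\alpha)]_+$ integrand of \eqref{Eq:AVM_Shift}, and since the signed power $z \mapsto z^{[p]}$ preserves sign, the same conclusion applies to \eqref{Eq:WD_Shift}. For the Cram\'er distance I extend the bookkeeping to the bivariate differences $A_{\alpha\beta}, B_{\alpha\beta}$ defined analogously using $(1\pm\alpha)/2$ for $F$ and $(1\pm\beta)/2$ for $G$; the same subtraction gives $A_{\alpha\beta} + B_{\alpha\beta} = 2(m_F - m_G) \leq 0$ a.e., handling the first line of \eqref{Eq:CD_Shift}. The remaining term $[F^{-1}((1-\alpha)/2) - G^{-1}((1+\beta)/2)]_+$ I would bound by combining $F^{-1}((1-\alpha)/2) \leq F^{-1}(1/2) \leq m_F$ (monotonicity plus the leftmost-median convention of the generalized inverse) with $G^{-1}((1+\beta)/2) \geq m_G$ a.e., where the latter follows from the symmetry identity for $G$ together with the analogous bound $G^{-1}((1-\beta)/2) \leq m_G$. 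The upshot is that this term is at most $m_F - m_G \leq 0$ a.e., completing part~(a).

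For part~(b), the contrapositive of~(a) combined with Proposition~\ref{prop:Symmetry} (applied to $(G,F)$) already supplies the `$\Leftarrow$' direction, so only $m_F > m_G \Rightarrow \shift_+^{\D}(F,G) > 0$ remains. Uniqueness of the medians amounts to the absence of a flat stretch of $F$ (resp.\ $G$) at level $1/2$, which implies continuity of $F^{-1}$ and $G^{-1}$ at $1/2$ with values $m_F$ and $m_G$. Hence $A_\alpha, B_\alpha \to m_F - m_G > 0$ as $\alpha \downarrow 0$, and there is an interval $[0,\alpha^*]$ on which $\min(A_\alpha, B_\alpha) \geq (m_F-m_G)/2 > 0$; integrating gives $\shift_+^{\AV}(F,G) > 0$. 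The same argument handles $\WDp$ via continuity of the signed power, and $\CD$ by working on a rectangle $[0,\alpha^*] \times [0,\beta^*]$ where the minimum term in the first line of \eqref{Eq:CD_Shift} remains strictly positive.

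The main obstacle I anticipate is the additional $[F^{-1}((1-\alpha)/2) - G^{-1}((1+\beta)/2)]_+$ summand specific to the Cram\'er shift \eqref{Eq:CD_Shift}, which is not of the $[\min(\cdot,\cdot)]_+$ form shared by the other cases and therefore cannot be dispatched by the single identity $A+B = 2(m_F-m_G)$; its treatment requires the side argument that a quantile of $F$ at or below its median cannot exceed a quantile of $G$ at or above its median when $m_F \leq m_G$. The remaining `almost everywhere' caveats from the symmetry definition and the left-inverse convention are routine but warrant explicit mention to keep the integrals rigorous.
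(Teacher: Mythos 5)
Your argument is correct and follows essentially the same route as the paper: part (a) rests on the symmetry identity forcing $\min(A_\alpha,B_\alpha)\leq 0$ (the paper phrases this as ``one of the two differences is non-positive by symmetry'') together with the bound $F^{-1}(\tfrac{1-\alpha}{2})\leq m_F\leq m_G\leq G^{-1}(\tfrac{1+\beta}{2})$ for the extra Cram\'er term, and part (b) combines the contrapositive of (a) with a continuity-at-the-median argument on a small coverage interval. The only (immaterial) differences are organizational --- the paper proves (a) for the $\CD$ and (b) for the $\AV$ and transfers to the other divergences via Proposition~\ref{prop:PositiveShift}, whereas you treat each divergence directly --- and you have the labels of the two implications in (b) swapped, though the content is unambiguous.
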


The result in part (b) of Proposition \ref{prop:ShiftSymDist} can be generalized to distributions with non-unique medians if the respective median intervals are non-nested. In this case, it still suffices to compare the central medians.

We continue to illustrate that our decompositions work as expected for distributional comparisons within location-scale families in that they reflect changes in location (scale) through a corresponding shift (dispersion) component.
Notice that in the following, the location parameters $\ell_F$ and $\ell_G$, and the scale parameters $s_F$ and $s_G$ are not necessarily means, medians or standard deviations of $F$ and $G$, respectively.\footnote{This is the case only if $H$ is standardized to have mean or median zero, respectively, and variance equal to one.}

\begin{prop}
\label{Prop:CompLocScale}
Let $F$ and $G$ be distributions from the same location-scale family, i.e., there exists a non-degenerate distribution $H$ such that the quantile functions satisfy the relations $F^{-1} = s_F H^{-1} + \ell_F$ and $G^{-1} = s_G H^{-1} + \ell_G$ for some $\ell_F, \ell_G \in \mathbb{R}$ and $s_F, s_G > 0$.
\begin{enumerate}
\item 
For any $\D \in \{\AV,\WDp,\CD\}$ with $p \in [1,\infty)$, it holds that
\begin{equation}
\label{Eq:DispLocScale}
\disp_+^{\D}(F,G) > 0 \quad \Longleftrightarrow \quad s_F > s_G.
\end{equation}

\item 
The central medians of $F$ and $G$ are $m_F = s_F m_H + \ell_F$ and $m_G =  s_G m_H + \ell_G$, respectively, where $m_H$ denotes the central median of $H$. 
For any $p \in [1,\infty)$, if $m_F \leq m_G$, then $\shift_+^{\WDp}(F,G) = 0$. Moreover, if the median of $H$ is unique,
\begin{equation}
\label{Eq:ShiftLocScale}
\shift_+^{\WDp}(F,G) > 0 \quad \Longleftrightarrow \quad m_F > m_G.
\end{equation}
\end{enumerate}
\end{prop}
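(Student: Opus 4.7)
My plan is to work directly with the quantile relation $F^{-1}(\tau) = s_F H^{-1}(\tau) + \ell_F$ (and its analogue for $G$), so that the quantity appearing in every integrand becomes $\Delta_\tau := F^{-1}(\tau) - G^{-1}(\tau) = (s_F - s_G) H^{-1}(\tau) + (\ell_F - \ell_G)$. I will introduce the central-interval width $\omega_H(\alpha) := H^{-1}(\tfrac{1+\alpha}{2}) - H^{-1}(\tfrac{1-\alpha}{2})$, which is non-decreasing in $\alpha$, non-negative, and (by non-degeneracy of $H$) strictly positive on some interval $(\alpha_0, 1]$; as a monotone function, $\omega_H$ is moreover continuous at all but at most countably many points.

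For part (a), I substitute into each dispersion integrand. For the $\AV$ the integrand reduces to $[(s_F - s_G)\omega_H(\alpha)]_+$. For $\WDp$ the integrand $[\Delta_{(1+\alpha)/2}^{[p]} - \Delta_{(1-\alpha)/2}^{[p]}]_+$ has the same sign as $\Delta_{(1+\alpha)/2} - \Delta_{(1-\alpha)/2} = (s_F - s_G)\omega_H(\alpha)$ because $z \mapsto z^{[p]}$ is strictly increasing. In both cases the integral therefore vanishes for $s_F \leq s_G$ and is strictly positive when $s_F > s_G$, using that $\omega_H > 0$ on a set of positive Lebesgue measure. For the $\CD$, the integrand on $\{\alpha \leq \beta\}$ becomes $[s_F \omega_H(\alpha) - s_G \omega_H(\beta)]_+$; the monotonicity $\omega_H(\alpha) \leq \omega_H(\beta)$ handles the direction $s_F \leq s_G$, and for $s_F > s_G$ I pick a continuity point $\alpha^*$ of $\omega_H$ with $\omega_H(\alpha^*) > 0$ and argue that $s_F\omega_H(\alpha) - s_G\omega_H(\beta) > 0$ on a two-dimensional neighborhood of $(\alpha^*, \alpha^*)$ inside $\{\alpha \leq \beta\}$.

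For part (b), the identity $m_F = s_F m_H + \ell_F$ follows by applying the affine map $y \mapsto s_F y + \ell_F$ to the median interval of $H$ and taking midpoints. Using strict monotonicity of $z \mapsto z^{[p]}$, I rewrite the shift integrand as $[\min\{\Delta_{(1+\alpha)/2}, \Delta_{(1-\alpha)/2}\}]_+^p$, and since $\Delta_\tau$ is monotone in $\tau$ with direction dictated by $\sgn(s_F - s_G)$, the minimum equals $\Delta_{(1-\alpha)/2}$ when $s_F \geq s_G$ and $\Delta_{(1+\alpha)/2}$ when $s_F < s_G$. To prove $m_F \leq m_G \Rightarrow \shift_+^{\WDp}(F,G) = 0$, I bound the relevant $\Delta$ pointwise: in the first case $H^{-1}(\tfrac{1-\alpha}{2}) \leq H^{-1}(0.5) \leq m_H$ yields $\Delta_{(1-\alpha)/2} \leq (s_F - s_G) m_H + (\ell_F - \ell_G) = m_F - m_G \leq 0$, and symmetrically $H^{-1}(\tfrac{1+\alpha}{2}) \geq H^{-1}(0.5^+) \geq m_H$ for a.e.\ $\alpha$ in the second case (paired with $s_F - s_G < 0$). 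For the converse under a unique median of $H$, one has $H^{-1}(0.5) = H^{-1}(0.5^+) = m_H$, so left-continuity of $H^{-1}$ implies the relevant $\Delta_\tau$ converges to $m_F - m_G > 0$ as $\alpha \downarrow 0$, giving a positive-measure set on which the integrand is strictly positive.

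The two main delicacies I anticipate are (i) the careful distinction between $H^{-1}(0.5)$ and $H^{-1}(0.5^+)$ in the non-unique-median case of part (b), where the bracket $H^{-1}(0.5) \leq m_H \leq H^{-1}(0.5^+)$ must be paired with the correct sign of $s_F - s_G$ to secure the bound by $m_F - m_G$; and (ii) the two-dimensional neighborhood argument for the $\CD$ case in part (a), where a strictly positive value at a diagonal continuity point must be shown to propagate to a positive-measure rectangle inside $\{\alpha \leq \beta\}$ by using that the discontinuity set of the monotone function $\omega_H$ is negligible.
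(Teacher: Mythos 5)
Your proposal is correct, and its architecture coincides with the paper's: substitute the location-scale relation so that every integrand is governed by the sign of $(s_F - s_G)\bigl(H^{-1}(\tfrac{1+\alpha}{2}) - H^{-1}(\tfrac{1-\alpha}{2})\bigr)$ for the dispersion parts, bound the relevant quantile difference by $m_F - m_G$ using the position of $H^{-1}$ relative to $m_H$ for the vanishing-shift direction, and use continuity of the quantile functions at the median for the converse under uniqueness. The one organizational difference is how the result is extended beyond the AVM: the paper proves everything for the AVM only (after normalizing $m_H = 0$) and then invokes its Propositions~\ref{prop:PositiveDispersion} and \ref{prop:PositiveShift}, which state that the sets of nonzero dispersion (resp.\ shift) components agree across $\AV$, $\WDp$ and $\CD$; you instead verify each divergence by hand, using the strict monotonicity of $z \mapsto z^{[p]}$ (so that $\bigl[\min\{a^{[p]},b^{[p]}\}\bigr]_+ = [\min\{a,b\}]_+^p$) for $\WDp$ and a two-dimensional continuity argument at a diagonal point $(\alpha^*,\alpha^*)$ for the $\CD$ dispersion component. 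Your route is self-contained and avoids any dependence on those equivalence propositions (which the paper is careful to prove first for exactly this reason), at the cost of the extra $\CD$-specific neighborhood argument; the paper's route is shorter but only because the heavy lifting for $\CD$ has been outsourced. Both your delicacies check out: the bracket $H^{-1}(0.5) \leq m_H \leq H^{-1}(0.5^+)$ is correctly paired with the sign of $s_F - s_G$ so that the relevant endpoint difference is bounded by $m_F - m_G$, and the countability of discontinuities of the monotone width function legitimizes the choice of the continuity point $\alpha^*$.
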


Proposition~\ref{Prop:CompLocScale} obviously also applies to pure location or pure scale families.
As above, part (b) of Proposition \ref{Prop:CompLocScale} can be generalized to distributions with non-unique medians if the median intervals of $F$ and $G$ are non-nested by comparing the central medians in \eqref{Eq:ShiftLocScale}.

The following example shows that the second claim of Proposition \ref{Prop:CompLocScale} does not generalize to the CD decomposition. Nonetheless, an analogous equivalence holds for symmetric distributions by Proposition~\ref{prop:ShiftSymDist}. 

\begin{figure*}
\centering
\subcaptionbox{AVM decomposition}{
\includegraphics[width = \defaultfigwidth\textwidth,page = 1]{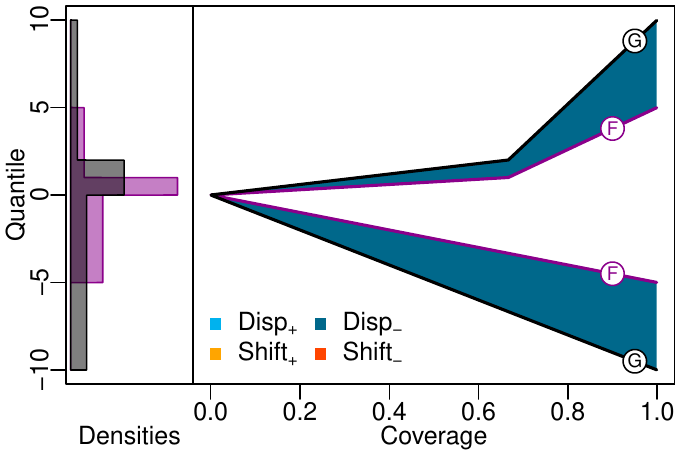}}
\subcaptionbox{CD decomposition slice}{
\includegraphics[width = \defaultfigwidth\textwidth,page = 6]{figures/Fig4.pdf}}
\caption{Illustration of the $\AV$ and $\CD$ (for $\beta=0.5$) decompositions for distributions $F$ and $G$ given in Example~\ref{Ex:CD-shift_loc-scale} from an asymmetric location-scale family where the distributions only differ in scale.
See Figures~\ref{fig:Illustration_FoldedQuantiles} and \ref{fig:illustration_cd} for detailed descriptions of the plots.}
\label{Fig:CD-shift_loc-scale}
\end{figure*}

\begin{example}
\label{Ex:CD-shift_loc-scale}
Consider the asymmetric distributions
$F=0.5 \times \mathcal{U}[-5,0] + \tfrac13 \times \mathcal{U}[0,1] + \tfrac16 \times \mathcal{U}[1,5]$ and $G = 0.5 \times \mathcal{U}[-10,0] + \tfrac13 \times \mathcal{U}[0,2] + \tfrac16 \times \mathcal{U}[2,10]$, whose densities are shown in Figure~\ref{Fig:CD-shift_loc-scale}.
The distributions $F$ and $G$ stem from the same location-scale family as $H=F$ yields $F^{-1} = H^{-1}$ and $G^{-1} = 2 F^{-1}$ with location parameters $m_F = m_G$.
Hence, the right-hand side of the equivalence statement in  Proposition~\ref{Prop:CompLocScale} (b) does not hold.
However, as illustrated in the right-hand plot of Figure~\ref{Fig:CD-shift_loc-scale} (here, for fixed $\beta = 0.5$), the CD decomposition exhibits a nonzero shift component.
In particular, it holds that $\CD(F,G) = \shift_+^{\CD}(F,G) + \disp_+^{\CD}(F,G) \approx 0.033 + 0.232$, while $\AV(F,G) = \disp_+^{\AV}(F,G) = 23/12 \approx 1.917$.
\end{example}

\begin{remark}
	The properties established in this section refer to the (non-)positivity of certain components, and it generally does not seem possible to make statements of a more \textit{quantitative} nature. As an exception, for the dispersion components of the AVM, it holds that
	$$
	\disp_+^{\text{AVM}}(F, G) - \disp_-^{\text{AVM}}(F, G) = D_F - D_G,
	$$
	where $D_F$ and $D_G$ are the mean absolute deviations around the medians of $F$ and $G$.
\end{remark}

The following theorem derives \emph{uniqueness} of our decompositions if certain natural properties are warranted.

\begin{theorem}
\label{thm:Uniqueness}
Let $F$ and $G$ be distributions from the same location-scale family with unique medians.
\begin{enumerate}
\item The shift-dispersion decomposition of $\AV(F,G)$ given in Proposition \ref{prop:AVMExact} is uniquely determined by the conditions \eqref{Eq:Symmetry}, \eqref{Eq:DispInvarianceforShifts}, \eqref{Eq:DispLocScale} and \eqref{Eq:ShiftLocScale}.

\item If in addition $F$ and $G$ are symmetric, the shift-dispersion decomposition of $\CD(F,G)$ given in Proposition \ref{prop:CDExact} is uniquely determined by the conditions \eqref{Eq:Symmetry}, \eqref{Eq:DispInvarianceforShifts}, \eqref{Eq:DispLocScale} and \eqref{Eq:ShiftSymDist}.
\end{enumerate}
\end{theorem}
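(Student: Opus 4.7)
The plan is to fix any candidate decomposition $\widetilde{\shift}_\pm^{\D}, \widetilde{\disp}_\pm^{\D}$ (with $\D = \AV$ in part (a) and $\D = \CD$ in part (b)) that is exact and satisfies the four listed properties, with non-negative components---a natural requirement for a component-wise decomposition, and in fact already forced at $F = G$ by the biconditional conditions combined with exactness. By \eqref{Eq:Symmetry}, it then suffices to identify $\widetilde{\shift}_+(F, G)$ and $\widetilde{\disp}_+(F, G)$ for every admissible pair.

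The first reduction is to show that $\widetilde{\disp}_\pm(F, G)$ depends only on the scale parameters $(s_F, s_G)$. The invariance \eqref{Eq:DispInvarianceforShifts} gives shift-invariance in the first argument, and conjugating by \eqref{Eq:Symmetry} transfers this to the second argument as well. Within the location-scale family this leaves only $(s_F, s_G)$ as free parameters. Evaluated at $F = G$, the biconditional conditions prevent any component from being positive; exactness together with non-negativity then make all four vanish. Consequently $\widetilde{\disp}_\pm(F, G) = 0$ whenever $s_F = s_G$. Analogously, \eqref{Eq:ShiftLocScale} (respectively \eqref{Eq:ShiftSymDist} in part (b)) combined with \eqref{Eq:Symmetry} forces $\widetilde{\shift}_\pm(F, G) = 0$ whenever $m_F = m_G$.

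The central step---and the main obstacle---is a shift trick that pins down the remaining components in the generic regime. Without loss of generality take $s_F > s_G$ and $m_F > m_G$ (other sign patterns follow from \eqref{Eq:Symmetry}); the biconditional conditions then give $\widetilde{\disp}_-(F, G) = \widetilde{\shift}_-(F, G) = 0$. Set $s := m_G - m_F$, so that by the location-scale parametrization $F_s$ has median $m_G$ and unchanged scale $s_F$. The previous step yields $\widetilde{\shift}_\pm(F_s, G) = 0$, and exactness for $(F_s, G)$ together with the shift-invariance of dispersion established above gives
\begin{equation*}
    \widetilde{\disp}_+(F, G) + \widetilde{\disp}_-(F, G) \;=\; \widetilde{\disp}_+(F_s, G) + \widetilde{\disp}_-(F_s, G) \;=\; \D(F_s, G).
\end{equation*}
Since $\widetilde{\disp}_-(F, G) = 0$, this fixes $\widetilde{\disp}_+(F, G) = \D(F_s, G)$, and then exactness for $(F, G)$ gives $\widetilde{\shift}_+(F, G) = \D(F, G) - \D(F_s, G)$, uniquely determining the decomposition. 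The conceptual crux is the choice of auxiliary distribution: centring $F$ on the median of $G$ is precisely what lets exactness, applied once to $(F,G)$ and once to $(F_s, G)$, separate the surviving shift and dispersion contributions cleanly.

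Part (b) proceeds verbatim with \eqref{Eq:ShiftSymDist} replacing \eqref{Eq:ShiftLocScale}: shifting preserves both symmetry and membership in the same location-scale family, so the auxiliary pair $(F_s, G)$ remains within the scope of the hypothesis throughout the argument, and the conclusion carries over with $\CD$ in place of $\AV$.
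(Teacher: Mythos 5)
Your proposal is correct and follows essentially the same route as the paper's proof: shift $F$ by $m_G-m_F$ so the shift components of the auxiliary pair vanish via the location/median biconditional, transfer the dispersion value back via the shift-invariance property, and apply exactness to both pairs to pin down all four components (with non-negativity, which the paper also uses implicitly, turning the "$\leq 0$" conclusions of the biconditionals into "$=0$"). The preliminary reduction showing dispersion depends only on $(s_F,s_G)$ is extra scaffolding the paper skips, but it is harmless and the core argument matches.
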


Theorem~\ref{thm:Uniqueness} shows that our AVM and CD decompositions are unique amongst all possible decompositions in jointly satisfying the symmetry in \eqref{Eq:Symmetry}, that dispersion terms are invariant to translations in \eqref{Eq:DispInvarianceforShifts}, and that for (symmetric) location-scale families, non-negativity of the shift/dispersion terms agrees with the ordering of the location/scale terms in \eqref{Eq:ShiftSymDist}--\eqref{Eq:ShiftLocScale}.
As these properties are very natural to stipulate, Theorem~\ref{thm:Uniqueness} strongly supports the particular form of our decompositions.
The uniqueness only applies to their integrated form, whereas the $\alpha$-wise decomposition can of course be changed on (Lebesgue) null sets without affecting the resulting integral.
A similar uniqueness condition cannot be established for the $\WD_p$ decomposition for $p > 1$ as the shift invariance of Proposition \ref{prop:DispInvarianceforShifts} cannot be invoked for $p > 1$.

We finally consider the case of normal distributions, which, arguably, form the most prominent location-scale family.
Let $F = \norm(\mu_F,\sigma_F^2)$ and $G = \norm(\mu_G,\sigma_G^2)$ be normal distributions with means $\mu_F$ and $\mu_G$ and variances $\sigma_F^2$ and $\sigma_G^2$, respectively. 
In what follows, we use the shorthand notations $\mudiff = |\mu_F - \mu_G|$, $\sigmadiff = |\sigma_F - \sigma_G|$ and $\sigmaavg = \sqrt{\sigma_F^2 + \sigma_G^2}$. 
We further denote the  standard normal density and distribution function by $\phi(\cdot)$ and $\Phi(\cdot)$, respectively.
Here, we provide closed-form expressions for the decompositions of the $\AV$ and the $\CD$. More involved formulas for the $\WD_p$ with $p \in \mathbb{N}$ are given in Supplement~\ref{sec:WDpNormals}.

\sloppy
First, in the special case where $\sigmadiff = 0$, we get $\AV(F,G) = \mudiff$, which is entirely attributed to a single shift component.
If $\sigmadiff \not= 0$, we get the closed-form expression
\begin{align*}
\AV(F,G) &= \mudiff \big( 2\Phi(\mudiff/\sigmadiff) - 1 \big)  + 2\sigmadiff \phi(\mudiff/\sigmadiff).
\end{align*}
For its decomposition terms, if $\sigma_F > \sigma_G$, then 
\begin{align*}
\disp_-^{\AV}(F,G) = 0
\quad \text{and} \quad
\disp_+^{\AV}(F,G) = 2\sigmadiff\phi(0).
\end{align*}
Furthermore (irrespective of whether $\sigma_F > \sigma_G$ holds), if $\mu_F > \mu_G$, then  $\shift_-^{\AV}(F,G)  = 0$ and 
\begin{align*}    
\shift_+^{\AV}(F,G) = \mudiff \big(2\Phi(\mudiff/\sigmadiff) - 1\big) + 2\sigmadiff \big(\phi(\mudiff/\sigmadiff) - \phi(0) \big).
\end{align*}
Hence, the shift and dispersion components agree, as expected, with the signs of differences in means and standard deviations, respectively.

For the $\CD$ of two normal distributions, we get the expression
\begin{equation*}
\CD(F,G) = 2\sigmaavg\phi(\mudiff/\sigmaavg) + \mudiff \big( 2\Phi(\mudiff/\sigmaavg) - 1 \big)
- \sqrt{2}\phi(0)\big(\sigma_F + \sigma_G\big).
\end{equation*}
For its components, we obtain that if $\sigma_F \geq \sigma_G$, then $\disp_-^{\CD}(F,G) = 0$ and 
\begin{align*}
\disp_+^{\CD}(F,G) = 
2\sigmaavg\phi(0) - \sqrt{2}\phi(0)(\sigma_F + \sigma_G).
\end{align*}
Furthermore, if $\mu_F \geq \mu_G$, then $\shift_-^{\CD}(F,G) = 0$ and 
\begin{align*}
\shift_+^{\CD}(F,G) = 
\mudiff \big( 1-2\Phi(-\mudiff/\sigmaavg) \big) - 2\tau \big(\phi(0) - \phi(\mudiff/\sigmaavg) \big).
\end{align*}
Thus, the components of the $\CD$ also behave as expected for normal distributions.

\subsection{Agreement and differences across distance measures}
\label{subsec:ComparisonDecompositions}

We now analyze how the decompositions of the considered divergence measures are related.
We first show that they mostly agree on which components are nonzero.
Subsequently, we use our decompositions to assess the relative importance of differences in location and dispersion of the analyzed distributions on the respective distance measures.
Our results in the latter context shed light on fundamental differences between the distance measures under consideration.

Our first result shows that the decompositions of all considered divergences agree on which \emph{dispersion} components are nonzero, without imposing any assumptions.

\begin{prop}[Positive dispersion]
\label{prop:PositiveDispersion}
For any two distributions $F$ and $G$, and $p \in [1,\infty)$, it holds that
\begin{align*}
\disp_\pm^{\WDp}(F,G) > 0 
\quad\Longleftrightarrow\quad 
\disp_\pm^{\AV}(F,G) > 0 
\quad\Longleftrightarrow\quad 
\disp_\pm^{\CD}(F,G) > 0. 
\end{align*}
\end{prop}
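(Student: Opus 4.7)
The plan is to invoke Proposition~\ref{prop:Symmetry} to reduce the statement to the plus components. Writing $L_F(\alpha) := F^{-1}\left(\tfrac{1+\alpha}{2}\right) - F^{-1}\left(\tfrac{1-\alpha}{2}\right)$ for the length of the central $\alpha$-interval of $F$, and $L_G(\beta)$ analogously, both maps are non-decreasing in their arguments. Setting $d(\gamma) := F^{-1}(\gamma) - G^{-1}(\gamma)$, the AVM integrand in \eqref{Eq:AVM_Disp} simplifies to $[L_F(\alpha) - L_G(\alpha)]_+$, the $\WDp$ integrand in \eqref{Eq:WD_Disp} takes the form $[d(\tfrac{1+\alpha}{2})^{[p]} - d(\tfrac{1-\alpha}{2})^{[p]}]_+$, and the CD integrand in \eqref{Eq:CD_Disp} collapses to $[L_F(\alpha) - L_G(\beta)]_+$.

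For the equivalence $\disp_+^{\AV}(F,G) > 0 \Leftrightarrow \disp_+^{\WDp}(F,G) > 0$, I would use that the signed $p$-th power $z \mapsto z^{[p]}$ is strictly increasing on $\R$. Hence $d(\tfrac{1+\alpha}{2})^{[p]} > d(\tfrac{1-\alpha}{2})^{[p]}$ is equivalent to $d(\tfrac{1+\alpha}{2}) > d(\tfrac{1-\alpha}{2})$, which rearranges to $L_F(\alpha) > L_G(\alpha)$. Both integrands are therefore strictly positive on precisely the same set $S := \{\alpha \in [0,1] : L_F(\alpha) > L_G(\alpha)\}$, so both non-negative integrals vanish if and only if $S$ has Lebesgue measure zero.

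For the equivalence $\disp_+^{\AV}(F,G) > 0 \Leftrightarrow \disp_+^{\CD}(F,G) > 0$, the argument pivots on the monotonicity of $L_G$ combined with Fubini. Let $C := \{(\alpha,\beta) \in [0,1]^2 : \alpha \leq \beta,\ L_F(\alpha) > L_G(\beta)\}$ denote the two-dimensional support of the CD integrand. For the direction from CD to AVM, whenever $(\alpha,\beta) \in C$ the monotonicity of $L_G$ gives $L_G(\alpha) \leq L_G(\beta) < L_F(\alpha)$, so $\alpha \in S$; each $\beta$-section of $C$ therefore sits inside $S$, and Fubini yields that positive two-dimensional measure of $C$ forces positive one-dimensional measure of $S$. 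For the converse, note that since $L_G$ is monotone its set of discontinuities is countable and hence Lebesgue-null, so almost every $\alpha_0 \in S$ is a right-continuity point of $L_G$; at such $\alpha_0$, combining $L_F(\alpha_0) > L_G(\alpha_0)$ with right-continuity of $L_G$ at $\alpha_0$ produces some $\delta > 0$ with $L_G(\beta) < L_F(\alpha_0) \leq L_F(\alpha)$ whenever $\alpha_0 \leq \alpha \leq \beta \leq \alpha_0 + \delta$ (the second inequality is the monotonicity of $L_F$). Thus the $\alpha_0$-section of $C$ has positive one-dimensional measure, and exhausting $S$ by the subsets $\{\alpha_0 \in S : \delta \geq 1/n\}$ followed by Fubini lifts this pointwise positivity to positive two-dimensional measure of $C$.

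The main obstacle will be the measure-theoretic converse from AVM to CD: a priori, the discontinuity sets of $L_F$ and $L_G$ could coincide with all of $S$, and without care the pointwise inequality $L_F(\alpha_0) > L_G(\alpha_0)$ need not propagate to any neighborhood. The resolution is the elementary fact that a monotone function has only countably many discontinuities, so $L_G$ is right-continuous off a null set, after which the exhaustion-plus-Fubini step is essentially bookkeeping.
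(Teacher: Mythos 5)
Your proof is correct and follows essentially the same route as the paper's: reduce to the plus components by symmetry, identify the common support of the $\AV$ and $\WDp$ integrands via strict monotonicity of the signed $p$-th power, and bridge to the CD using monotonicity and the countable discontinuity set of the central-interval-length functions. The only cosmetic difference is that you exploit right-continuity of $L_G$ at a point of $S$ to produce a positive-measure triangle in the $(\alpha,\beta)$-domain, whereas the paper fixes the outer variable $\beta$ and uses continuity of the analogous length function for $F$ to show that the inner integral is positive for almost every such $\beta$.
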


Proposition \ref{prop:PositiveDispersion} also implies that $\disp_\pm^{\WD_p}(F,G) > 0  \; \Longleftrightarrow \; \disp_\pm^{\WD_q}(F,G) > 0$ for any $p\not=q$ by simply invoking the first equivalence for differing $p$ and $q$.

A similar concordance property also holds for the shift components, however, without the equivalence with the CD components.

\begin{prop}[Positive shift]
\label{prop:PositiveShift}
For any two distributions $F$ and $G$, and $p \in [1,\infty)$, it holds that
\begin{align*}
\shift_\pm^{\WDp}(F,G) > 0 
\quad\Longleftrightarrow\quad 
\shift_\pm^{\AV}(F,G) > 0
\quad\Longrightarrow\quad 
\shift_\pm^{\CD}(F,G) > 0. 
\end{align*}
\end{prop}

In the following example, a positive shift component arises in the $\CD$ decomposition while the $\WD_p$ decomposition(s) include zero shift, which illustrates the missing equivalence between $\CD$ and $\AV$ in Proposition \ref{prop:PositiveShift}.

\begin{figure*}
	\centering
	\subcaptionbox{AVM decomposition}{
		\includegraphics[width = \defaultfigwidth\textwidth,page = 1]{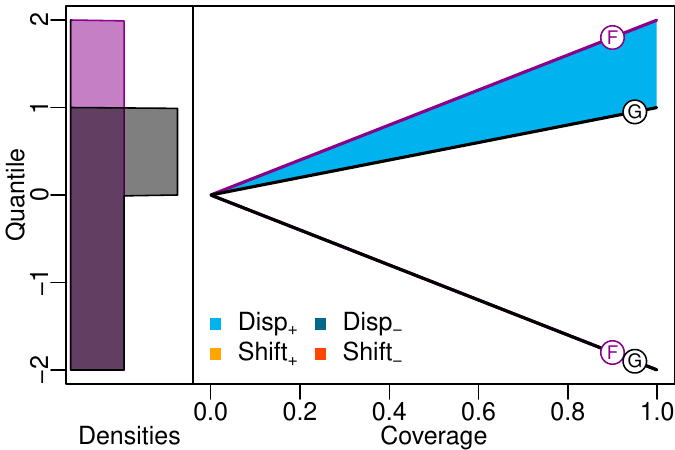}}
	\subcaptionbox{CD decomposition slice}{
		\includegraphics[width = \defaultfigwidth\textwidth,page = 7]{figures/Fig5.pdf}}
	\caption{Illustration of the $\AV$ and $\CD$ (for $\beta=0.6$) decompositions for distributions $F$ and $G$ given in Example~\ref{Ex:DiffShift} for which only the $\CD$ exhibits a nonzero shift component.
		See Figures~\ref{fig:Illustration_FoldedQuantiles} and \ref{fig:illustration_cd} for detailed descriptions of the plots.}
	\label{Fig:DiffShift}
\end{figure*}

\begin{example}
\label{Ex:DiffShift}
Consider the distributions
$F= \mathcal{U}[-2,2]$ and $G = 0.5 \times \mathcal{U}[-2,0] + 0.5 \times \mathcal{U}[0,1]$, whose densities and folded quantile functions are shown in Figure~\ref{Fig:DiffShift}.
While the $\AV = \disp_+^{\AV}(F,G) = 0.25$ is attributed entirely to an increase in dispersion of $F$ relative to $G$, the $\CD(F,G) = \shift_+^{\CD}(F,G) + \disp_+^{\CD}(F,G) = 0.02083 + 0.02083$  is attributed to both shift and dispersion terms. 
(Notice for the later use of this example in Section~\ref{subsec:WDOrder} that the distribution $F$ is strictly larger than $G$ in the usual stochastic order.)
\end{example}

While the fact that the decompositions (almost) agree on which components are nonzero is conceptually reassuring, the nonzero components differ in (relative) magnitude.
The following results shed light on the sensitivity of the considered divergence measures towards changes in shift and dispersion.

\begin{theorem}
\label{Thm:DispInequality_pWD}
Let $F$ and $G$ be symmetric distributions, $F\neq G$, and $p,q \in [1,\infty)$.
If $q > p$, it holds that
\begin{align}
\label{Eq:DispInequality_pWD}
\frac{\disp_+^{\WD_p} + \disp_-^{\WD_p}}{\WD_p(F,G)} 
\; \leq \; 
\frac{\disp_+^{\WD_q}+ \disp_-^{\WD_q}}{\WD_q(F,G)}.
\end{align}
\end{theorem}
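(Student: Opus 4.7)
Since $\WD_p(F,G)$ and the sum $\disp_+^{\WD_p}(F,G)+\disp_-^{\WD_p}(F,G)$ are invariant under exchange of $F$ and $G$ (by Proposition~\ref{prop:Symmetry}), I assume without loss of generality that $s := m_F - m_G \geq 0$. Writing $u(\alpha) := F^{-1}(\tfrac{1+\alpha}{2}) - G^{-1}(\tfrac{1+\alpha}{2})$ and $l(\alpha) := F^{-1}(\tfrac{1-\alpha}{2}) - G^{-1}(\tfrac{1-\alpha}{2})$, the symmetry of $F$ and $G$ about their central medians yields $u(\alpha)+l(\alpha) = 2s$ for almost every $\alpha \in [0,1]$. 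Letting $t(\alpha) := |u(\alpha)-l(\alpha)|/2 \geq 0$, a short computation shows $\{|u(\alpha)|, |l(\alpha)|\} = \{s+t(\alpha),\, |s-t(\alpha)|\}$, and in particular $u(\alpha)$ and $l(\alpha)$ share the same sign (in the weak sense $u(\alpha)\,l(\alpha) \geq 0$) if and only if $t(\alpha) \leq s$.

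\textbf{Reduction to a scalar moment inequality.} The folding identity gives $\WD_p(F,G) = \tfrac{1}{2}\int_0^1 (|u|^p + |l|^p)\,\dd\alpha$, while combining \eqref{Eq:WD_Disp} with $\disp_-^{\WD_p}(F,G) = \disp_+^{\WD_p}(G,F)$ yields $\disp_+^{\WD_p}(F,G)+\disp_-^{\WD_p}(F,G) = \tfrac{1}{2}\int_0^1 |u^{[p]}-l^{[p]}|\,\dd\alpha$. An elementary sign analysis then establishes the pointwise identity $(|u|^p + |l|^p) - |u^{[p]}-l^{[p]}| = 2(s-t)^p\,\one\{t \leq s\}$, so that
\[
\WD_p(F,G) - \bigl(\disp_+^{\WD_p}+\disp_-^{\WD_p}\bigr)(F,G) \;=\; \int_{\{t(\alpha)\leq s\}} (s-t(\alpha))^p\,\dd\alpha.
\]
Denoting by $\nu$ the pushforward of Lebesgue measure on $[0,1]$ under $t(\cdot)$, and setting $A_p := \int_0^\infty (s+t)^p\,\dd\nu$, $B_p := \int_s^\infty (t-s)^p\,\dd\nu$, $C_p := \int_0^s (s-t)^p\,\dd\nu$, the theorem reduces (upon clearing the common denominator $\WD_p(F,G)\cdot \WD_q(F,G)$ and handling the trivial case $C_p = 0$, in which the ratio on the left of \eqref{Eq:DispInequality_pWD} equals $1$) to proving, for integers $q>p\geq 1$,
\[
(A_q+B_q)\,C_p \;\geq\; (A_p+B_p)\,C_q.
\]

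\textbf{Double integral and case analysis.} I would expand the left-hand side as $\iint \Delta(t_1,t_2)\,\dd\nu(t_1)\dd\nu(t_2)$, where, writing $\phi(t) := (s-t)_+$, $\psi(t) := s+t$, $\omega(t) := (t-s)_+$,
\[
\Delta(t_1,t_2) \;=\; \phi(t_1)^p\bigl[\psi(t_2)^q+\omega(t_2)^q\bigr] - \phi(t_1)^q\bigl[\psi(t_2)^p+\omega(t_2)^p\bigr].
\]
Since $\phi \equiv 0$ on $(s,\infty)$, only $t_1 \in [0,s]$ contributes. For $t_2 \in [0,s]$, the $\omega$-terms vanish and $\Delta = \phi^p\psi^p(\psi^{q-p}-\phi^{q-p}) \geq 0$, because $\psi(t_2) \geq s \geq \phi(t_1)$. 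For $t_2 > s$, factoring out $\phi^p$ gives
\[
\Delta \;=\; \phi^p\bigl[\psi^p(\psi^{q-p}-\phi^{q-p}) + \omega^p(\omega^{q-p}-\phi^{q-p})\bigr],
\]
in which the first bracketed summand is nonnegative, whereas the second is negative precisely when $\omega(t_2) < \phi(t_1)$ (equivalently, when $t_1+t_2 < 2s$).

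\textbf{Main obstacle.} The nontrivial subcase is $t_2 > s$ combined with $\omega(t_2) < \phi(t_1)$, for which pointwise monotonicity alone does not suffice. The crucial structural input, inherited from the symmetry $u+l=2s$, is that $\psi(t_2) = t_2+s \geq 2s \geq 2\phi(t_1)$, so $\psi^{q-p}-\phi^{q-p} \geq (2^{q-p}-1)\phi^{q-p}$. Here the integer spacing $q-p \geq 1$ enters decisively, ensuring $2^{q-p}-1 \geq 1$; this slack would vanish for real exponents with $q-p < 1$. Combined with $\omega \leq \psi$ (trivially) and $\phi^{q-p}-\omega^{q-p} \leq \phi^{q-p}$, I obtain the chain
\[
\psi^p(\psi^{q-p}-\phi^{q-p}) \;\geq\; \phi^{q-p}\psi^p \;\geq\; \phi^{q-p}\omega^p \;\geq\; \omega^p(\phi^{q-p}-\omega^{q-p}),
\]
so that $\Delta(t_1,t_2) \geq 0$ pointwise, and integration against $\nu\otimes\nu$ yields the desired inequality.
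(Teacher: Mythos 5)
Your proof is correct, and it reaches the conclusion by a route that differs in its technical execution from the paper's. The paper works with $f(\tau)=F^{-1}(\tau)-G^{-1}(\tau)$, identifies the shift mass with $2\int_A|f|^p$ over a set $A$ where $0\le f(\tau)\le f(1-\tau)$, and proves the cross-product inequality $\int_A|f|^p\int_0^1|f|^{p-1}\le\int_A|f|^{p-1}\int_0^1|f|^p$ for \emph{consecutive} exponents via a chain of set-wise integral estimates over a partition $A\cup\overline A\cup B\cup\overline B$ (using $|f|\le\widetilde m$ on $A$ and $f\ge 2\widetilde m$ on $B$), the general case $q>p$ following by iterating over intermediate integers. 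You instead parametrize everything by $t=|u-l|/2$ through a pushforward measure $\nu$, derive the clean identity $\WD_p-(\disp_+^{\WD_p}+\disp_-^{\WD_p})=\int_{\{t\le s\}}(s-t)^p\dd\nu$, and reduce to $(A_q+B_q)C_p\ge(A_p+B_p)C_q$, which you prove by showing the two-variable integrand $\Delta(t_1,t_2)$ is \emph{pointwise} nonnegative. The structural inputs are the same in spirit (your $\phi(t_1)\le s$ and $\psi(t_2)\ge 2s\ge 2\phi(t_1)$ for $t_2>s$ mirror the paper's bounds on $A$ and $B$), but your organization buys two things: it handles arbitrary $q>p$ in one stroke without chaining through consecutive integers, and it localizes all the work into a transparent pointwise inequality, making it clear exactly where the symmetry ($u+l=2s$) and the integer gap $q-p\ge1$ are used. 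The paper's version, conversely, makes the monotonicity of the normalized shift component in $p$ explicit as an intermediate statement. Both proofs are sound; yours is arguably the more self-contained and checkable of the two.
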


Theorem \ref{Thm:DispInequality_pWD} shows that for symmetric distributions, the relative weight that the $p$-Wasserstein distance assigns to the dispersion components increases in its order $p$.
Hence, Wasserstein distances of higher orders emphasize differences in dispersion as opposed to differences in location.

The symmetry condition in Theorem \ref{Thm:DispInequality_pWD} is necessary to obtain a rigorous connection, as illustrated by
Example \ref{Ex:pWD_loc-scale} 
in Supplement~\ref{sec:ExamplesRelativeComparison}, 
which shows that inequality \eqref{Eq:DispInequality_pWD} is not guaranteed to hold for asymmetric distributions, even when restricting attention to a location-scale family. 
Furthermore, Example \ref{Ex:pWD_unimodal} 
shows that the inequality also does not generally hold for unimodal distributions. 
Such examples are however rarely encountered in practice.

We now turn to a relative comparison of the $\AV$ and the $\CD$, which we formally establish for normal distributions using the closed-form expressions given at the end of Section \ref{subsec:BasicProperties}.

\begin{theorem}
\label{Thm:DispInequalities}
Let $F = \norm(\mu_F,\sigma_F^2)$ and $G = \norm(\mu_G,\sigma_G^2)$ be normal distributions with $F\neq G$. 
Then, 
\begin{align}
\label{Eq:DispInequality_CD-WD_norm} 
\frac{\disp_+^{\CD} + \disp_-^{\CD}}{\CD(F,G)} 
\; \leq \; 
\frac{\disp_+^{\AV} + \disp_-^{\AV}}{\AV(F,G)}.
\end{align}
\end{theorem}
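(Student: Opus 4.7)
The plan is to substitute the closed-form expressions from the end of Section~\ref{subsec:BasicProperties} and reduce the inequality to a one-parameter concavity problem. Since both sides of \eqref{Eq:DispInequality_CD-WD_norm} are invariant under swapping $F$ and $G$ (Proposition~\ref{prop:Symmetry} on the numerators, combined with the symmetry of $\AV$ and $\CD$), I may assume $\sigma_F \geq \sigma_G$. The cases $\sigmadiff = 0$ (both dispersion sums vanish, so both ratios are $0$) and $\mudiff = 0$ (both sides equal $1$) are immediate, so from here on $\mudiff, \sigmadiff > 0$.

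I would introduce the auxiliary function $g(x) := x(2\Phi(x)-1) + 2\phi(x)$, which satisfies $g(0) = 2\phi(0)$, $g'(x) = 2\Phi(x) - 1$ (using $\phi'(x) = -x\phi(x)$) and $g''(x) = 2\phi(x)$; in particular $g'(0) = 0$ and $g$ is non-decreasing and convex on $[0,\infty)$. Setting $\lambda := \mudiff/\sigmadiff$, $\lambda' := \mudiff/\sigmaavg$ and $k := (\sigma_F + \sigma_G)/(\sqrt{2}\,\sigmaavg)$, the closed forms recast compactly as
\begin{align*}
    \AV(F,G) &= \sigmadiff\, g(\lambda), & \disp_+^{\AV}(F,G) + \disp_-^{\AV}(F,G) &= \sigmadiff\, g(0), \\
    \CD(F,G) &= \sigmaavg\bigl(g(\lambda') - k\, g(0)\bigr), & \disp_+^{\CD}(F,G) + \disp_-^{\CD}(F,G) &= \sigmaavg\, g(0)\,(1-k).
\end{align*}
The key algebraic step is the identity $(\sigma_F + \sigma_G)^2 + \sigmadiff^2 = 2\sigmaavg^2$, which gives $k = \sqrt{1 - t^2/2}$ for $t := \sigmadiff/\sigmaavg = \lambda'/\lambda \in (0,1)$. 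Since $g$ is non-decreasing and $k < 1$, both denominators are strictly positive, and cross-multiplication reduces the target inequality to
\begin{align*}
    g(\lambda) - g(t\lambda) \;\leq\; \sqrt{1 - t^2/2}\,\bigl(g(\lambda) - g(0)\bigr).
\end{align*}

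To close the proof I would set $\Psi(\lambda) := g(\lambda) - g(t\lambda) - \sqrt{1-t^2/2}\,(g(\lambda) - g(0))$ for fixed $t \in (0,1)$ and verify that $\Psi(0) = 0$, $\Psi'(0) = 0$ (both immediate from $g'(0) = 0$), and $\Psi$ is concave on $[0, \infty)$. The concavity step amounts to the pointwise bound $(1 - \sqrt{1-t^2/2})\,\phi(\lambda) \leq t^2\,\phi(t\lambda)$; since $\phi(\lambda)/\phi(t\lambda) = e^{-(1-t^2)\lambda^2/2} \leq 1$, it suffices to verify $1 - \sqrt{1-t^2/2} \leq t^2$, which follows from the elementary inequality $1 - \sqrt{1-s} \leq s$ on $[0,1]$ applied with $s = t^2/2$. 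Then $\Psi(0) = \Psi'(0) = 0$ together with concavity force $\Psi \leq 0$ on $[0,\infty)$, finishing the argument.

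The main obstacle is spotting the identity $2\sigmaavg^2 = (\sigma_F + \sigma_G)^2 + \sigmadiff^2$ that drives the reduction: it is exactly what lets the CD-specific correction $\sqrt{2}\,\phi(0)(\sigma_F + \sigma_G)$ be absorbed into the same auxiliary function $g$ governing the AV, so that the two-parameter (in $\mudiff, \sigmadiff$) problem collapses into the one-parameter (in $t$) concavity statement above. Once this alignment is made, the remaining estimates are soft.
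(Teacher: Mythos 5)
Your proposal is correct and follows essentially the same strategy as the paper's proof: both reduce the claim to a function of the mean difference that vanishes to second order at zero and whose second derivative has the right sign, with that sign coming from the monotonicity of $\phi$ (i.e.\ $\phi(\mudiff/\sigmaavg)\geq\phi(\mudiff/\sigmadiff)$) together with the scale inequality $t^2 \geq 1-\sqrt{1-t^2/2}$. Your version is cleaner in one respect worth noting: the identity $2\sigmaavg^2=(\sigma_F+\sigma_G)^2+\sigmadiff^2$ lets you dispatch that scale inequality via the elementary bound $1-\sqrt{1-s}\leq s$, whereas the paper establishes the equivalent statement through a longer monotonicity analysis of auxiliary functions $g$ and $h$.
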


Theorem \ref{Thm:DispInequalities} establishes that for normal distributions, the $\AV$  puts a higher emphasis on differences in dispersion in contrast to the $\CD$.
The relation in \eqref{Eq:DispInequality_CD-WD_norm} is typically found in practice for other distributions as well, and counterexamples appear to be rare. Nonetheless, Example \ref{Ex:CD-WD_sym} 
shows that the inequality does not hold for arbitrary symmetric distributions (that are sufficient for the related statement in Theorem~\ref{Thm:DispInequality_pWD}).
We hypothesize that the following generalization of Theorem \ref{Thm:DispInequalities} to (possibly asymmetric) unimodal distributions may hold.

\begin{conj}
\label{conj:AVCD_ineq_unimodal}
Let $F$ and $G$ be unimodal distributions in the sense that the quantile functions are differentiable almost everywhere and the derivatives $(F^{-1})'$ and $(G^{-1})'$ are decreasing functions for $\alpha < \frac12$ and increasing ones for $\alpha > \frac12$.
Then, it may hold that
\begin{align*}
\frac{\disp_+^{\CD} + \disp_-^{\CD}}{\CD(F,G)} 
\; \leq \; 
\frac{\disp_+^{\AV} + \disp_-^{\AV}}{\AV(F,G)}.
\end{align*}
\end{conj}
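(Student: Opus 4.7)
The plan is to first rewrite the conjectured inequality in a more tractable form. Using that $\AV(F,G) = \shift^{\AV}(F,G) + \disp^{\AV}(F,G)$ where $\shift^{\AV} := \shift_+^{\AV} + \shift_-^{\AV}$ and $\disp^{\AV} := \disp_+^{\AV} + \disp_-^{\AV}$ (and analogously for $\CD$), cross-multiplying the claimed inequality yields the equivalent form
\[
\disp^{\CD}(F,G) \cdot \shift^{\AV}(F,G) \;\leq\; \disp^{\AV}(F,G) \cdot \shift^{\CD}(F,G).
\]
This reformulation clarifies the heart of the matter: the shift-to-dispersion ratio should be larger for the Cram\'er distance than for the AVM. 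One can immediately dispose of the degenerate cases: if $\disp^{\AV}(F,G) = 0$, then $\disp^{\CD}(F,G) = 0$ by Proposition~\ref{prop:PositiveDispersion} and the inequality is $0 \leq 0$; and if $\shift^{\AV}(F,G) = 0$, then by Proposition~\ref{prop:PositiveShift} the right-hand side need not vanish, so the inequality is trivial.

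Next, I would try to exploit the unimodality assumption to derive a pointwise structural relationship at the level of central interval comparisons. Writing
\[
a_F(\alpha) := F^{-1}\bigl(\tfrac{1+\alpha}{2}\bigr) - F^{-1}\bigl(\tfrac{1-\alpha}{2}\bigr), \qquad m_F(\alpha) := \tfrac{1}{2}\bigl(F^{-1}(\tfrac{1+\alpha}{2}) + F^{-1}(\tfrac{1-\alpha}{2})\bigr),
\]
the integrands behind both decompositions decompose algebraically in terms of the width difference $\Delta a(\alpha) := a_F(\alpha) - a_G(\alpha)$ and the midpoint difference $\Delta m(\alpha) := m_F(\alpha) - m_G(\alpha)$ (for CD, also evaluated at $\beta$). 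Under unimodality, both $a_F$ and $a_G$ are increasing, so the nested/overlapping/disjoint case structure from Figures~\ref{fig:Illustration_FoldedQuantiles}--\ref{fig:illustration_cd} is monotone in the coverage level. I would then aim for a Chebyshev-type integral inequality, exploiting that the CD integrand pairs the $\alpha$-interval of $F$ with the $\beta$-interval of $G$ for $\beta \neq \alpha$, which systematically inflates the shift contributions (via the disjoint-case penalty in the second line of \eqref{Eq:CD_Shift}) relative to the dispersion contributions (which are bounded by requiring $\alpha \leq \beta$ in \eqref{Eq:CD_Disp}).

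The main obstacle is exactly why this remains a conjecture and not a theorem. The $(\alpha,\beta)$-coupling in the CD decomposition creates cross-terms with no AVM analogue, and the unimodality condition constrains only the monotonicity of the interval widths $a_F$, $a_G$ without controlling their relative magnitudes against $\Delta m$. A natural strategy would be to interpolate: parametrize a smooth path from $(F,G)$ to the pure dispersion case (where $\shift^{\AV} = \shift^{\CD} = 0$ and the inequality is trivial) within the unimodal class, and show the ratio difference varies in a controlled manner along this path. Verifying that the derivative along such a path has a definite sign---without losing the unimodality property---is the key technical hurdle. The normal case treated in Theorem~\ref{Thm:DispInequalities} uses the explicit closed forms at the end of Section~\ref{subsec:BasicProperties} to bypass this issue entirely, and extending the argument will likely require either a refined structural assumption (e.g.\ log-concavity of the densities), or a clever convexity argument that leverages the signed-power structure absent from the $p = 1$ Wasserstein case.
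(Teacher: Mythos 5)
There is nothing in the paper to check your argument against: the statement is labelled a \emph{conjecture}, the authors explicitly write that they ``hypothesize'' it, and no proof is given anywhere in the paper or its appendix. The only rigorously established instance is the Gaussian case (Theorem~\ref{Thm:DispInequalities}), proved via the closed-form expressions at the end of Section~\ref{subsec:BasicProperties}, and Example~\ref{Ex:CD-WD_sym} shows that symmetry alone does not suffice --- which is precisely why the conjecture restricts to unimodal distributions. Your proposal correctly recognizes all of this and does not claim to close the gap, so it should be judged as a research sketch rather than a proof.

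As a sketch, the sound parts are: the cross-multiplied reformulation $\disp^{\CD}\cdot\shift^{\AV} \leq \disp^{\AV}\cdot\shift^{\CD}$ is algebraically correct (for $F \neq G$ both denominators are positive), the two degenerate cases are disposed of correctly via Propositions~\ref{prop:PositiveDispersion} and~\ref{prop:PositiveShift}, and the identification of the $(\alpha,\beta)$-coupling in \eqref{Eq:CD_Disp}--\eqref{Eq:CD_Shift} as the obstruction is apt. The genuine gap is the entire middle: the ``Chebyshev-type integral inequality'' and the interpolation path are named but not constructed, and neither is shown to interact correctly with the unimodality hypothesis. One small inaccuracy: the monotonicity of the interval widths $a_F(\alpha)$ and $a_G(\alpha)$ in $\alpha$ holds for \emph{every} distribution (quantile functions are non-decreasing), so it cannot be the lever by which unimodality enters; what unimodality actually provides is convexity/concavity of the quantile function on each half, i.e.\ control on the \emph{rate} at which the widths grow, and any successful argument would have to exploit that. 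In short, your proposal leaves the statement exactly as open as the paper does, and honestly says so.
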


Such a result might appear counter-intuitive at first sight, as the shift components of the CD feature a factor of one-half that cancels out in the shift components of the AVM. 
The double integral and the capping of the CD dispersion component integrals at $\beta$ apparently more than compensate for this halving.

\section{Compatibility with stochastic order relations}
\label{sec:Orders}

\begin{table*}[tb]
\scriptsize
\begin{tabular}{llll}
\toprule 
Order name & Type & Symbol & Definition \\
\midrule 
\addlinespace \addlinespace
\multicolumn{4}{l}{\textbf{\footnotesize{Dispersive orders in Section~\ref{Sec:DispOrder}:}}} \\ 
\midrule 
Dispersive  & preorder & $F \geq_\mathrm{D} G$ & $F^{-1}(\tau) - F^{-1}(\xi) \; \geq \;  G^{-1}(\tau) - G^{-1}(\xi) \quad \forall \, 0<\xi<\tau<1$ \\
\midrule 
Weak dispersive & preorder &  $F \geq_\mathrm{wD} G$ & $ F^{-1}(\tau) - F^{-1}(1-\tau) \; \geq \; G^{-1}(\tau) - G^{-1}(1-\tau) \quad \forall \, 0.5 < \tau < 1$ \\
\midrule 
\addlinespace \addlinespace
\multicolumn{4}{l}{\textbf{\footnotesize{Stochastic orders in Section~\ref{Sec:StochOrder}:}}} \\
\midrule 
Usual stochastic & partial order & $F \geq_\mathrm{S} G$ &  $\min \big\{ F^{-1}(\tau) - G^{-1}(\tau), \, F^{-1}(1-\tau) - G^{-1}(1-\tau) \big\} \; \geq \; 0 \quad \forall \, 0.5 < \tau < 1$  \\
\midrule 
Weak stochastic & preorder (cond.) & $F \geq_\mathrm{wS} G$ & $\max\{F^{-1}(\tau) - G^{-1}(\xi), \, F^{-1}(1-\tau) - G^{-1}(1-\xi)\} \; \geq \; 0 \quad \forall \, 0.5 < \tau,\xi < 1$ \\
\midrule
Relaxed stochastic & preorder (cond.)  & $F \geq_\mathrm{rS} G$ & $\max\{F^{-1}(\tau) - G^{-1}(\tau), \, F^{-1}(1-\tau) - G^{-1}(1-\tau)\} \; \geq \; 0 \quad \forall \, 0.5 < \tau  < 1$ \\
\midrule 
Strong stochastic & strict partial order & $F >_\mathrm{sS} G$ & $F \geq_\mathrm{S} G$ and $\exists \tau \in[0,1]: \;\; \min\{F(\tau) - G(\tau), \, F(1-\tau) - G(1-\tau)\} \; > \; 0$ \\
\bottomrule
\end{tabular}
\caption{Overview of the order relations used in this paper. All \emph{strict} relations $F >_\bullet G$ are defined as $F \geq_\bullet G$ and $F \not\leq_\bullet G$. The unconventional (but equivalent; see \eqref{Eq:StochOrder}) definition of the usual stochastic order is used to highlight the similarity to the other stochastic order relations.
	The bracket ``(cond.)'' stands for conditions that are further discussed in Propositions \ref{Thm:WeakStochPreorder} and \ref{Thm:RelaxedSOPreorder}. 
} 
\label{tab:OrderRelations}
\end{table*}

In this section, we establish connections between our decomposition terms and some well-known stochastic orders \citep[e.g.,][]{Muller2002, Shaked2007}. 
We first show that our dispersion components align well with the dispersive order in Section~\ref{Sec:DispOrder}. Subsequently, we relate our shift components to the usual stochastic order in Section~\ref{Sec:StochOrder}. 
Table \ref{tab:OrderRelations} provides an overview of all order relations used in this section. Here, we refer to orders that are related to the usual stochastic order as stochastic and use the term dispersive orders for stochastic variability orders. 
Throughout the section, we formulate the order conditions in terms of quantile levels $\tau$ and $\xi$ for ease of exposition. Formulations in terms of coverage levels $\alpha$ and $\beta$ that align more closely with the decomposition terms are obtained by replacing $\tau$ with $\frac{1+\alpha}{2}$ and $\xi$ with $\frac{1+\beta}{2}$.

In what follows, we limit our analysis to probability distributions $F$ and $G$ with \emph{continuous} quantile functions $F^{-1}$ and $G^{-1}$, respectively. As alluded to at the end of Section \ref{Sec:DispOrder}, the results can be generalized to arbitrary distributions by slightly adapting the order relations to account for discontinuities in the quantile functions.

\subsection{Connections to the dispersive order}
\label{Sec:DispOrder}

Here, we establish logical connections of (non-)zero dispersion components to order relations capturing differences in dispersion from the literature. The findings of this section are summarized in Figure \ref{Fig:OrdersDisp}.

\begin{figure}[tb]
\newcommand\SEarrow{\mathrel{\rotatebox[origin=c]{-45}{$\Longrightarrow$}}}
\newcommand\SWarrow{\mathrel{\rotatebox[origin=c]{45}{$\Longleftarrow$}}}
\newcommand\Longdownarrow{\mathrel{\rotatebox[origin=c]{90}{$\Longleftarrow$}}}
\begin{center}
\begin{tabular}{ccc}
$F >_\text{D} G$ & $\Longrightarrow$ &  \fbox{\begin{tabular}{@{}c@{}}
		$\; F >_\text{wD} G \; \Longleftrightarrow \;
		\footnotesize 
		\begin{Bmatrix}
			\disp_-^{\D}(F,G) = 0 \\[0.5mm]
			\disp_+^{\D}(F,G) > 0 
		\end{Bmatrix}$ \\
\end{tabular}}
\\[5mm]
$\Longdownarrow$ & & $\Longdownarrow$ 
\\[3mm]
$F \geq_\text{D} G$ & $\Longrightarrow$ &          
\fbox{\begin{tabular}{@{}c@{}}
		$\; F \geq_\text{wD} G \;  \Longleftrightarrow \; 
		\disp_-^{\D}(F,G) = 0 \;$ \\
\end{tabular}}
\end{tabular}
\end{center}
\caption{Overview of the logical implications between the studied dispersive orders and the dispersion components of our decompositions.}
\label{Fig:OrdersDisp}
\end{figure}

The distribution $F$ is said to be larger than $G$ in \emph{dispersive order}, $F \geq_\text{D} G$, if for all $0 < \xi < \tau < 1$,
\begin{align*}
F^{-1}(\tau) - F^{-1}(\xi) \; \geq \; G^{-1}(\tau) - G^{-1}(\xi).
\end{align*}
Note that the dispersive order is a preorder (which is reflexive and transitive) and not a partial order (which is reflexive, transitive and antisymmetric).
As common for order relations, we define the \emph{strict} relation $F >_\mathrm{D} G$ through $F \geq_\mathrm{D} G$ and $F \not\leq_\mathrm{D} G$, and equivalently for all other strict relations considered in this section.

The dispersive order turns out to be too strong for a logical equivalence with zero dispersion components to hold. 
To establish such an equivalence, we consider a weaker order relation.
The distribution $F$ is said to be larger than $G$ in \emph{weak dispersive order}, $F \geq_\mathrm{wD} G$, if for all $0.5 <  \tau < 1$,
\begin{equation}
\label{Eq:wD}
F^{-1}(\tau) - F^{-1}(1-\tau) \; \geq \; G^{-1}(\tau) - G^{-1}(1-\tau).
\end{equation}
This order is also known under the name \emph{quantile spread order} \citep{TC05}. It is easy to see from its definition that the weak dispersive order is a preorder.
It gives rise to the following characterization result.

\begin{theorem}
\label{Thm:DispOrderJoint}
Let $F$ and $G$ be probability distributions with continuous quantile functions $F^{-1}$ and $G^{-1}$, respectively. 
Then, for any $\D \in \{\AV, \WD_p, \CD\}$, $p\in[1,\infty)$, we have that
\begin{enumerate}[label=(\alph*)]
\item 
\label{item:wDequivalence}
$F \geq_\mathrm{wD} G \quad \Longleftrightarrow \quad \disp_-^{\D}(F,G) = 0$;

\item  
\label{item:StrictwDequivalence}
$F >_\mathrm{wD} G \quad \Longleftrightarrow\quad \big\{\disp_-^{\D}(F,G) = 0  \; \text{ and } \; \disp_+^{\D}(F,G) > 0 \big\}$.
\end{enumerate}
\end{theorem}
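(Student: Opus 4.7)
The plan is to reduce the statement to the AVM case via Proposition \ref{prop:PositiveDispersion} and then exploit a single scalar function of $\alpha$ whose sign encodes the weak dispersive order. Define
\begin{equation*}
    h(\alpha) \;:=\; \Big(F^{-1}\big(\tfrac{1+\alpha}{2}\big) - G^{-1}\big(\tfrac{1+\alpha}{2}\big)\Big) - \Big(F^{-1}\big(\tfrac{1-\alpha}{2}\big) - G^{-1}\big(\tfrac{1-\alpha}{2}\big)\Big), \qquad \alpha\in(0,1).
\end{equation*}
Rearranging, $h(\alpha) \geq 0$ is precisely the weak-dispersive inequality \eqref{Eq:wD} at quantile level $\tau=(1+\alpha)/2$, so $F\geq_{\mathrm{wD}} G$ iff $h(\alpha)\geq 0$ for all $\alpha\in(0,1)$. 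Since $F^{-1}$ and $G^{-1}$ are assumed continuous, $h$ is continuous on $(0,1)$. Moreover, from \eqref{Eq:AVM_Disp} one sees directly that
\begin{equation*}
    \disp_+^{\AV}(F,G) = \tfrac{1}{2}\int_0^1 [h(\alpha)]_+ \dd\alpha,
    \qquad
    \disp_-^{\AV}(F,G) = \tfrac{1}{2}\int_0^1 [-h(\alpha)]_+ \dd\alpha,
\end{equation*}
so both dispersion components of the AVM are integrals of the positive/negative parts of one and the same continuous function.

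For part \ref{item:wDequivalence}, I would argue the AVM case first. If $F\geq_{\mathrm{wD}} G$, then $h\geq 0$ on $(0,1)$, so $[-h]_+\equiv 0$ and $\disp_-^{\AV}(F,G)=0$. Conversely, if $\disp_-^{\AV}(F,G)=0$, the non-negative integrand $[-h(\alpha)]_+$ vanishes Lebesgue-almost everywhere, and continuity of $h$ upgrades this to $h\geq 0$ on all of $(0,1)$, i.e.\ $F\geq_{\mathrm{wD}} G$. The extension to $\D\in\{\WD_p,\CD\}$ is then immediate: by Proposition \ref{prop:PositiveDispersion}, $\disp_-^{\D}(F,G)=0$ iff $\disp_-^{\AV}(F,G)=0$ (equivalence of positivity contrapositively yields equivalence of vanishing), so the AVM result transports verbatim.

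For part \ref{item:StrictwDequivalence}, I would combine \ref{item:wDequivalence} with a symmetric treatment of $\disp_+^{\D}$. The strict order $F>_{\mathrm{wD}} G$ means $F\geq_{\mathrm{wD}} G$ and $F\not\leq_{\mathrm{wD}} G$; via part \ref{item:wDequivalence} applied both ways, this is equivalent to $\disp_-^{\D}(F,G)=0$ and $\disp_+^{\D}(F,G)=\disp_-^{\D}(G,F)>0$. Both directions are then almost mechanical: forward, continuity of $h$ together with the existence of some $\alpha^\ast$ with $h(\alpha^\ast)>0$ forces $\int[h]_+>0$; backward, $\disp_+^{\D}(F,G)>0$ rules out $F\leq_{\mathrm{wD}} G$ by the equivalence from \ref{item:wDequivalence}.

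The only real subtlety is the a.e.-to-everywhere upgrade in the converse of \ref{item:wDequivalence}, which is why continuity of the quantile functions is invoked in the hypothesis; without it, $h$ could dip below zero on a null set (e.g.\ at jumps) without being detected by the integral, and the equivalence would need to be restated modulo such negligible sets. All other steps are essentially bookkeeping built on Proposition \ref{prop:PositiveDispersion} and the definitions in \eqref{Eq:AVM_Disp}.
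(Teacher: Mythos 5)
Your proposal is correct and follows essentially the same route as the paper: reduce to the AVM via Proposition \ref{prop:PositiveDispersion}, observe that the two dispersion components integrate the positive and negative parts of the single function $h$, use continuity of the quantile functions to pass between pointwise sign conditions and vanishing integrals, and obtain part (b) from part (a) applied in both directions together with the symmetry $\disp_+^{\D}(F,G)=\disp_-^{\D}(G,F)$. The only cosmetic difference is that the paper phrases the converse of (a) as a contraposition (a single strict violation propagates to a neighbourhood, forcing a positive integral), whereas you argue directly via the a.e.-to-everywhere upgrade; these are the same continuity argument.
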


Subject to the regularity condition of continuous quantile functions, which is further discussed below, part (a) of the theorem establishes an equivalence between a weak dispersive ordering of the distributions and a zero dispersion component in our decompositions.
As shown in part (b), the equivalence extends naturally to an equivalence between a corresponding strict ordering and a unique nonzero dispersion component. Notably, the equivalences and implications in Theorem \ref{Thm:DispOrderJoint} hold for all considered divergence measures $\D \in \{\AV, \WD_p, \CD\}$, $p\in[1,\infty)$, by the equivalences from Proposition \ref{prop:PositiveDispersion}.

We continue to analyze the properties of the weak dispersive order.
The following proposition shows that it is implied by the dispersive order.
\begin{prop}
\label{Prop:DispOrder}
Let $F$ and $G$ be probability distributions with continuous quantile functions $F^{-1}$ and $G^{-1}$, respectively. Then,
\begin{enumerate}
\item $F \geq_{\textnormal{D}} G\quad\Longrightarrow\quad F \geq_{\textnormal{wD}} G$;
\item $F >_{\textnormal{D}} G\quad\Longrightarrow\quad F >_{\textnormal{wD}} G$.
\end{enumerate}
\end{prop}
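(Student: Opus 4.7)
The plan for part (a) is essentially by inspection: the weak dispersive order condition \eqref{Eq:wD} is exactly the dispersive order condition applied to the special pair $(\tau,\xi)=(\tau,1-\tau)$ with $\tau>0.5$, which satisfies $0<1-\tau<\tau<1$. Hence $F\geq_\mathrm{D}G \Longrightarrow F\geq_\mathrm{wD}G$ is immediate from the definition.

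For part (b), I would unpack the strict relation: $F>_\mathrm{D}G$ means $F\geq_\mathrm{D}G$ holds together with the existence of some pair $0<\xi_0<\tau_0<1$ for which $F^{-1}(\tau_0)-F^{-1}(\xi_0)>G^{-1}(\tau_0)-G^{-1}(\xi_0)$, since otherwise one would also have $G\geq_\mathrm{D}F$. Abbreviating $h(y,x):=[F^{-1}(y)-F^{-1}(x)]-[G^{-1}(y)-G^{-1}(x)]$, the key fact is the telescoping identity $h(c,a)=h(c,b)+h(b,a)$ for $a<b<c$, combined with $h\geq 0$ on $\{x<y\}$ under $F\geq_\mathrm{D}G$. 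Since part (a) already yields $F\geq_\mathrm{wD}G$, to conclude $F>_\mathrm{wD}G$ it suffices to produce a single $\tau_*>0.5$ with $h(\tau_*,1-\tau_*)>0$.

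The remaining step is a case distinction on where $(\xi_0,\tau_0)$ sits relative to the symmetric pair $(1-\tau_0,\tau_0)$. If $\xi_0+\tau_0\geq 1$, then $1-\tau_0\leq\xi_0<\tau_0$ and in particular $\tau_0>1/2$; choosing $\tau_*=\tau_0$, the telescoping identity gives $h(\tau_*,1-\tau_*)=h(\tau_0,\xi_0)+h(\xi_0,1-\tau_0)\geq h(\tau_0,\xi_0)>0$ (the second summand collapses if $\xi_0=1-\tau_0$). If instead $\xi_0+\tau_0<1$, then $\xi_0<1/2$, and choosing $\tau_*=1-\xi_0>1/2$ gives $1-\tau_*=\xi_0$ and $\tau_*>\tau_0$, so that $h(\tau_*,1-\tau_*)=h(\tau_*,\tau_0)+h(\tau_0,\xi_0)\geq h(\tau_0,\xi_0)>0$. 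In both cases the required $\tau_*$ is produced.

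I do not anticipate a real obstacle: the proposition amounts to bookkeeping once the telescoping identity and the sign structure of $h$ under the dispersive order are in place. A small sanity check is that the continuity assumption on the quantile functions is not actually invoked by the argument; it is only used elsewhere, e.g.\ for the equivalence in Theorem~\ref{Thm:DispOrderJoint}.
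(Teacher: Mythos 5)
Your proof is correct. Part (a) is exactly the paper's argument: the weak dispersive condition \eqref{Eq:wD} is the dispersive condition specialized to the pair $(1-\tau,\tau)$. For part (b) you take a mildly different route than the paper, though the underlying mechanism is the same. Both proofs rest on the fact that $h(y,x)=\bigl(F^{-1}(y)-G^{-1}(y)\bigr)-\bigl(F^{-1}(x)-G^{-1}(x)\bigr)$ is additive over concatenated pairs and non-negative whenever $x<y$, and both aim to exhibit a single $\tau_*>\tfrac12$ with $h(\tau_*,1-\tau_*)>0$. The paper first symmetrizes: it adds the strict inequality at the witnessing pair $(\xi,\tau)$ to the non-strict one at the mirrored pair $(1-\tau,1-\xi)$ to obtain the intermediate inequality \eqref{Eq:ProofStrictDisp}, and then runs a three-way case distinction on the positions of $\tau$ and $\xi$ relative to $\tfrac12$. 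You instead enlarge $[\xi_0,\tau_0]$ directly to the smallest symmetric interval $[1-\tau_*,\tau_*]$ containing it and telescope, which collapses the case analysis to two cases and avoids the intermediate symmetrized inequality; the edge cases ($\xi_0=1-\tau_0$, and $\tau_*<1$ since $\xi_0>0$) are handled correctly. Your closing remark is also accurate: neither your argument nor the paper's invokes the continuity of the quantile functions for this proposition; that hypothesis is needed only for the equivalences in Theorem~\ref{Thm:DispOrderJoint}.
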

Notice that the implication in (b) is not trivial as if one order implies another, this does not necessarily mean that the same is true for the respective \emph{strict orders}. 

As summarized in Figure \ref{Fig:OrdersDisp}, the previous two results jointly show that a strict dispersive ordering implies a unique nonzero dispersion component in our decompositions.
However, having a single nonzero dispersion term does not imply a dispersive ordering, even in its non-strict form.
For example, the distributions $F$ and $G$ in Example \ref{Ex:IntransitiveCD} 
are not dispersively ordered despite the unique nonzero dispersion component. 

Theorem~\ref{Thm:DispOrderJoint} requires
continuous quantile functions for $F$ and $G$.
The necessity of this condition is illustrated in the following example.

\begin{figure}
\centering
\includegraphics[width = \defaultfigwidth\textwidth]{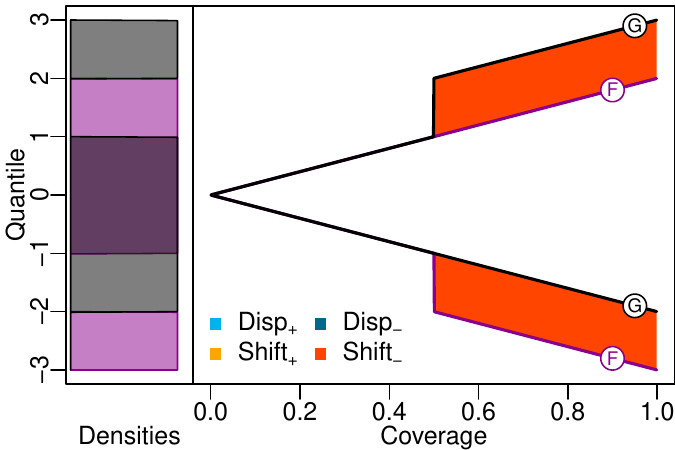}
\caption{Illustration of the AVM decomposition for distributions $F$ and $G$ given in Example~\ref{Ex:StrictDispOrder} with discontinuous quantile functions such that $F$ is strictly larger than $G$ in weak dispersive order. See Figure \ref{fig:Illustration_FoldedQuantiles} for a detailed description of the plot.}
\label{Fig:StrictDispOrder}
\end{figure}

\begin{example}
Consider two mixtures of uniform distributions, $F = \frac14 \times \unif[-3,-2] + \frac34 \times \unif[-1,2]$ and $G = \frac34 \times \unif[-2,1] + \frac34 \times \unif[2,3]$, with discontinuous quantile functions illustrated in Figure \ref{Fig:StrictDispOrder}. The distribution $F$ is strictly larger than $G$ in weak dispersive order as the defining inequality is strict for $\tau = 0.75$ by left-continuity of the quantile functions, which results in a sort of asymmetric right-continuity in the lower part of the quantile spread plot. Despite the ordering, the decomposition does not produce a nonzero dispersion term, in clear contrast to part \ref{item:StrictwDequivalence} of Theorem \ref{Thm:DispOrderJoint}. 
\label{Ex:StrictDispOrder}
\end{example}

The issue encountered in Example~\ref{Ex:StrictDispOrder} could be avoided by slightly adapting the definition of the weak dispersive order in \eqref{Eq:wD} to require for any $0.5 \leq \tau <1$ that
\begin{align*}
\sup \{x\mid F(x) \leq \tau \} - \inf\{x\mid F(x) \leq 1-\tau\}
\geq \sup\{x\mid G(x) \leq \tau \} - \inf\{x\mid G(x) \leq 1-\tau\}.
\end{align*}
This ensures that the lower and upper quantiles (as a function of the coverage) in the quantile spread plot are both right-continuous resulting in a symmetric directional continuity. 
We refrained from doing so in our analysis for ease of exposition and to align with the prevalent notion of the weak dispersive (or quantile spread) order.

\subsection{Connections to the usual stochastic order}
\label{Sec:StochOrder}

\begin{figure*}[tb]
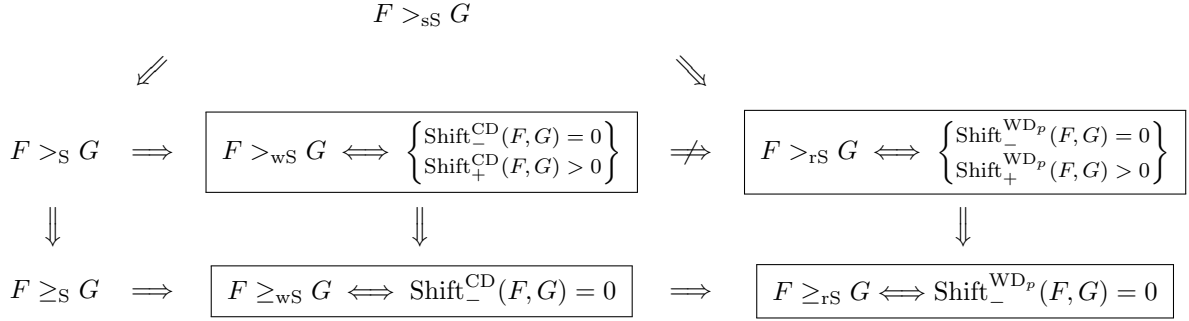

\newcommand\SEarrow{\mathrel{\rotatebox[origin=c]{-45}{$\Longrightarrow$}}}
\newcommand\SWarrow{\mathrel{\rotatebox[origin=c]{45}{$\Longleftarrow$}}}
\newcommand\Longdownarrow{\mathrel{\rotatebox[origin=c]{90}{$\Longleftarrow$}}}
\centering
\begin{center}
\begin{tabular}{ccccc}
&& $F >_\text{sS} G$ && 
\\[3mm]
& $\SWarrow$ & 
& $\SEarrow$ & 
\\[3mm]
$F >_\text{S} G$ & $\Longrightarrow$ & 
\fbox{\begin{tabular}{@{}c@{}}
		$\; F >_\text{wS} G \; \Longleftrightarrow \; 
		\footnotesize 
		\begin{Bmatrix}
			\shift_-^{\CD}(F,G) = 0 \\[0.5mm]
			\shift_+^{\CD}(F,G) > 0
		\end{Bmatrix}\; $ \\
\end{tabular}}
& $\centernot\Longrightarrow$ & 
\fbox{\begin{tabular}{@{}c@{}}
		$\; F >_\text{rS} G \; \Longleftrightarrow \;
		\footnotesize 
		\begin{Bmatrix}
			\shift_-^{\WDp}(F,G) = 0  \\[0.5mm]
			\shift_+^{\WDp}(F,G) > 0
		\end{Bmatrix} \;$ \\
\end{tabular}}
\\[5mm]
$\Longdownarrow$ && $\Longdownarrow$ && $\Longdownarrow$ 
\\[3mm]
$F \geq_\text{S} G$ & $\Longrightarrow$ & 
\fbox{\begin{tabular}{@{}c@{}}
		$\; F \geq_\text{wS} G \; \Longleftrightarrow \;
		\footnotesize 
		\shift_-^{\CD}(F,G) = 0 \;$ \\
\end{tabular}}
& $\Longrightarrow$ &
\fbox{\begin{tabular}{@{}c@{}}
		$\; F \geq_\text{rS} G \Longleftrightarrow 
		\footnotesize 
		\shift_-^{\WDp}(F,G) = 0 \;$ \\
\end{tabular}}
\end{tabular}
\end{center}
\caption{Connections between the stochastic order relations introduced in Section~\ref{Sec:StochOrder} and the shift components of the $\CD$ and $\WDp$ divergence measures (for any $p \in [1,\infty)$).}
\label{Fig:Orders}
\end{figure*}

Establishing connections of (non)zero shift components to the usual stochastic order is more complicated than the connections to the dispersive orderings outlined in the previous Section \ref{Sec:DispOrder}.
In a nutshell, zero shift components in the Cram\'er and Wasserstein decompositions give rise to a \emph{weak} and a \emph{relaxed} form of the stochastic order relation, respectively.
Hence, we treat the Cram\'er and Wasserstein distances separately in the following Sections~\ref{subsec:CDOrder} and \ref{subsec:WDOrder}.
The results of this section are summarized in Figure \ref{Fig:Orders}.

The distribution $F$ is said to be larger than $G$ in the \emph{usual stochastic order}, $F \geq_\mathrm{S} G$, if $F(x) \leq G(x)$ for all $x \in \mathbb{R}$, or, equivalently, if $F^{-1}(\tau) \geq G^{-1}(\tau)$ for all $0 < \tau < 1$.
The usual stochastic order is a partial order, and its definition can be reformulated as
\begin{equation}
\label{Eq:StochOrder}
\min \big\{ F^{-1}(\tau) - G^{-1}(\tau), \, F^{-1}(1-\tau) - G^{-1}(1-\tau) \big\} \; \geq \; 0 
\end{equation}
for all $0.5 <  \tau < 1$.
The term used for this unconventional characterization arises in the shift components (see \eqref{Eq:AVM_Shift} with $\tau = \frac{1+\alpha}{2}$), thereby serving as a natural starting point to investigate the connection. 

While $F \geq_\mathrm{S} G$ implies $\shift_-^{\D}(F,G) = 0$ for all the considered distances $\D \in \{\AV, \WD_p, \CD\}$, the usual stochastic order is too strong to establish an \emph{equivalence} with zero shift components.
Hence, the following two subsections present two relaxations of the usual stochastic order pertaining to the Cram\'er and Wasserstein distances, respectively.

\subsubsection{A weak stochastic order based on the Cram\'er distance}
\label{subsec:CDOrder}

To establish an equivalence relation with the shift components of the $\CD$, we define the \emph{weak stochastic order} in which the distribution $F$ is larger than $G$, $F \geq_\mathrm{wS} G$, if 
\begin{equation}
\label{Eq:wS}
\max \big\{F^{-1}(\tau) - G^{-1}(\xi), \, F^{-1}(1-\tau) - G^{-1}(1-\xi) \big\} \; \geq \; 0
\end{equation}
for all $0.5 < \tau,\xi < 1$.
To the best of our knowledge, the weak stochastic order is a new order relation, and we discuss its properties after establishing its equivalence with the shift components of the $\CD$.

\begin{theorem}
\label{Thm:StochOrderCDJoint}
Let $F$ and $G$ be probability distributions with continuous quantile functions $F^{-1}$ and $G^{-1}$, respectively. Then,
\begin{enumerate}
\item $F \geq_\mathrm{wS} G \quad \Longleftrightarrow \quad \shift_-^{\CD}(F,G) = 0$;
\item $F >_\mathrm{wS} G \quad \Longleftrightarrow \quad\big\{ \shift_-^{\CD}(F,G) = 0 \; \text{ and } \;  \shift_+^{\CD}(F,G) > 0 \big\}$.
\end{enumerate}
\end{theorem}

The theorem establishes the equivalence of the (strict) weak stochastic order and (non)zero shift components of the $\CD$, akin to Theorem~\ref{Thm:DispOrderJoint}.

The weak stochastic order is implied by the usual stochastic order, as detailed by the following proposition.

\newpage
\begin{prop}
\label{Prop:StochOrderCD}
Let $F$ and $G$ be probability distributions with continuous quantile functions $F^{-1}$ and $G^{-1}$, respectively. Then,
\begin{enumerate}
\item $F \geq_\mathrm{S} G \quad \Longrightarrow \quad  F \geq_\mathrm{wS} G$;
\item $F >_\mathrm{S} G  \quad \Longrightarrow \quad 
F >_\mathrm{wS} G$.
\end{enumerate}
\end{prop}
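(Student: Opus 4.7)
The plan is to derive part~(a) by a direct monotonicity-based case analysis, and to bootstrap it to part~(b) by constructing an explicit witness for $F\not\leq_{\mathrm{wS}}G$. For part~(a), fixing arbitrary $\tau,\xi\in(0.5,1)$, I would split on whether $\tau\geq\xi$ or $\tau<\xi$. In the first case, monotonicity of $G^{-1}$ gives $G^{-1}(\xi)\leq G^{-1}(\tau)$, and $F\geq_{\mathrm{S}}G$ then yields $F^{-1}(\tau)\geq G^{-1}(\tau)\geq G^{-1}(\xi)$, so the first entry inside the maximum in the definition of $F\geq_{\mathrm{wS}}G$ is non-negative. In the second case, the symmetric argument applied to $1-\tau>1-\xi$ gives non-negativity of the second entry, and the maximum is non-negative either way.

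For part~(b), part~(a) already delivers $F\geq_{\mathrm{wS}}G$, so it suffices to exhibit $\tau,\xi\in(0.5,1)$ for which both $G^{-1}(\tau)<F^{-1}(\xi)$ and $G^{-1}(1-\tau)<F^{-1}(1-\xi)$ hold simultaneously. Since $F\not\leq_{\mathrm{S}}G$, there exists $\tau_0\in(0,1)$ with $F^{-1}(\tau_0)>G^{-1}(\tau_0)$, and continuity of both quantile functions extends this strict inequality to an open interval $(a,b)\ni\tau_0$. If $(a,b)$ straddles $1/2$, taking $\tau=\xi$ slightly above $1/2$ with $\tau,1-\tau\in(a,b)$ yields both strict inequalities directly from $F^{-1}>G^{-1}$ on $(a,b)$. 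Otherwise, by the symmetry of the weak stochastic condition under $(\tau,\xi)\mapsto(1-\tau,1-\xi)$, I may assume $(a,b)\subset(1/2,1)$; then for any $\xi\in(a,b)$, continuity of $G^{-1}$ at $\xi$ ensures $G^{-1}(\tau)<F^{-1}(\xi)$ for $\tau$ in some right-neighborhood of $\xi$.

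The main obstacle I expect is ensuring the second strict inequality $G^{-1}(1-\tau)<F^{-1}(1-\xi)$ in this latter case. Monotonicity only yields the chain $G^{-1}(1-\tau)\leq G^{-1}(1-\xi)\leq F^{-1}(1-\xi)$, and strictness requires either $F^{-1}(1-\xi)>G^{-1}(1-\xi)$ or local strict monotonicity of $G^{-1}$ at $1-\xi$. To rule out the degenerate scenario in which $F^{-1}$ and $G^{-1}$ agree and are jointly constant on a common plateau covering a neighborhood of $1-\xi$, I would exploit that the plateaus of continuous quantile functions form at most countable collections of intervals; within the uncountable open set $(a,b)\cap(1/2,1)$ one can therefore select $\xi$ at which $G^{-1}$ is locally strictly increasing at $1-\xi$, and a sufficiently small $\tau>\xi$ then produces both strict inequalities and thus the required witness.
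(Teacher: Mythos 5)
Your part~(a) is correct and is essentially the paper's own argument verbatim: split on $\tau\geq\xi$ versus $\tau<\xi$ and combine monotonicity of $G^{-1}$ with $F^{-1}\geq G^{-1}$ pointwise. Nothing to add there.

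Part~(b) contains a genuine gap, and you have actually located the right difficulty --- one that the paper's own one-line justification of (b) (``there exists a $\tau$ such that one of the inequalities in (a) is strict'') glosses over, since making the maximum strictly positive somewhere is not the negation of $F\leq_{\mathrm{wS}}G$; as you correctly note, that negation requires a single pair $(\tau,\xi)$ making \emph{both} differences strictly negative. The flaw is in your final countability step. What is countable is the \emph{number} of maximal plateaus of $G^{-1}$, not the set of quantile levels lying on a plateau: a plateau is an interval, and a single one can cover the entire reflected range $(1-b,1-a)$, indeed all of $(0,\tfrac12]$. So there need not exist any $\xi\in(a,b)$ at which $G^{-1}$ is locally strictly increasing near $1-\xi$, and the selection argument collapses. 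Worse, the degenerate scenario you try to exclude genuinely occurs and defeats the claim as stated: take $F=\tfrac12\delta_0+\tfrac12\,\mathcal{U}[0,1]$ and $G=\tfrac12\delta_0+\tfrac12\,\mathcal{U}[0,\tfrac12]$. Both quantile functions are continuous, $F^{-1}=G^{-1}\equiv 0$ on $(0,\tfrac12]$, and $F^{-1}>G^{-1}$ on $(\tfrac12,1)$, so $F>_{\mathrm{S}}G$; yet $G^{-1}(1-\tau)-F^{-1}(1-\xi)=0$ for every $\tau,\xi\in(0.5,1)$, so $F\leq_{\mathrm{wS}}G$ and hence $F\not>_{\mathrm{wS}}G$. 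In other words, no repair of your witness construction (or of the paper's argument) can succeed without an additional hypothesis excluding a common flat piece of the two quantile functions on one half of the unit interval --- for instance strict monotonicity of the quantile functions, or a strengthening of the premise in the spirit of the strong order $>_{\mathrm{sS}}$ that the paper introduces for exactly this purpose in the relaxed-order setting.
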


As summarized in Figure \ref{Fig:Orders}, the previous results show that a strict stochastic ordering implies a unique nonzero shift component in the $\CD$ decomposition.
On the other hand, having a
single nonzero shift term in the $\CD$ does not imply a usual stochastic ordering, even in its non-strict form. For example,
the distributions $F$ and $G$ in Example \ref{Ex:IntransitiveCD} 
are not stochastically ordered despite a unique nonzero shift
component.

The weak stochastic order is a preorder for relatively broad classes of distributions.

\begin{prop}
\label{Thm:WeakStochPreorder}
The weak stochastic order is a preorder on sets of distributions with common support and continuous quantile functions.
\end{prop}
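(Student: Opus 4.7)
Reflexivity of $\geq_{\mathrm{wS}}$ is immediate from monotonicity of the quantile function: for any $\tau, \xi \in (0.5, 1)$, if $\tau \geq \xi$ then $F^{-1}(\tau) \geq F^{-1}(\xi)$, while if $\tau < \xi$ then $1-\tau > 1-\xi$ and $F^{-1}(1-\tau) \geq F^{-1}(1-\xi)$, so in either case the maximum in \eqref{Eq:wS} is nonnegative.

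For transitivity, the plan is to argue by contradiction. Suppose $F \geq_{\mathrm{wS}} G$ and $G \geq_{\mathrm{wS}} H$, but pick $\tau, \xi \in (0.5, 1)$ such that both $F^{-1}(\tau) < H^{-1}(\xi)$ and $F^{-1}(1-\tau) < H^{-1}(1-\xi)$. Continuity of $G^{-1}$ together with the common-support assumption guarantees that the range of $G^{-1}$ on $(0,1)$ contains $F^{-1}(\tau), F^{-1}(1-\tau), H^{-1}(\xi), H^{-1}(1-\xi)$. I then introduce four auxiliary levels $\tau^\dagger, \tau^\ddagger, \xi^\dagger, \xi^\ddagger \in (0,1)$ at which $G^{-1}$ attains these four values, choosing the supremum or infimum of the corresponding preimage (to handle possible flat portions of $G^{-1}$) so that monotonicity of $G^{-1}$ yields equivalences such as $F^{-1}(\tau) \geq G^{-1}(\xi') \iff \xi' \leq \tau^\dagger$, and similarly for the other three quantities. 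The contradiction assumption, together with $G^{-1}(\tau^\dagger) = F^{-1}(\tau) < H^{-1}(\xi) = G^{-1}(\xi^\dagger)$, forces the strict inequalities $\tau^\dagger < \xi^\dagger$ and $\tau^\ddagger < \xi^\ddagger$.

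With these translations in hand, the two hypotheses rewrite as: for every $s \in (0.5, 1)$, (i) $s \leq \tau^\dagger$ or $s \geq 1-\tau^\ddagger$, and (ii) $s \geq \xi^\dagger$ or $s \leq 1-\xi^\ddagger$. Of the four logical combinations of (i) and (ii), the combinations $\{s \leq \tau^\dagger$ and $s \geq \xi^\dagger\}$ and $\{s \geq 1-\tau^\ddagger$ and $s \leq 1-\xi^\ddagger\}$ are empty by the strict inequalities just established, so every $s \in (0.5, 1)$ must lie in $(0.5, \min(\tau^\dagger, 1-\xi^\ddagger)]$ or $[\max(1-\tau^\ddagger, \xi^\dagger), 1)$. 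A short interval-covering argument then forces $\min(\tau^\dagger, 1-\xi^\ddagger) \geq \max(1-\tau^\ddagger, \xi^\dagger)$, which in particular gives $\tau^\dagger \geq \xi^\dagger$ and contradicts $\tau^\dagger < \xi^\dagger$.

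The main technical hurdle is the rigorous translation of the two wS-hypotheses through $G^{-1}$ when $G$ has atoms (which is compatible with continuity of $G^{-1}$, since continuity of $G^{-1}$ merely rules out gaps in the support, not point masses): the sup/inf choices in the definitions of $\tau^\dagger, \tau^\ddagger, \xi^\dagger, \xi^\ddagger$ are tailored precisely so the equivalences above hold verbatim. A small boundary analysis is also needed for degenerate configurations of the covering argument (for instance $\xi^\dagger \leq 0.5$ together with $1-\tau^\ddagger \leq 0.5$, or $\tau^\dagger$ or $\tau^\ddagger$ approaching $1$); in each of these edge cases, the contradiction can be obtained directly by evaluating the appropriate hypothesis at $s$ close to $0.5$ and chaining $F^{-1}(1-\tau) < H^{-1}(1-\xi) \leq H^{-1}(\xi)$ against the derived bound on $G^{-1}(0.5)$.
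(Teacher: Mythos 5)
Your argument is correct in substance but organized differently from the paper's. The paper proves transitivity directly: for fixed $\tau,\xi$ it writes $F^{-1}(\tau)-H^{-1}(\xi)$ and $F^{-1}(1-\tau)-H^{-1}(1-\xi)$ as sums through an intermediate level $\gamma$ of $G$, notes that $G \geq_\mathrm{wS} H$ makes at least one of the two $G$-to-$H$ pieces non-negative for every $\gamma$, and then constructs a single witness $\gamma_0 = \sup\{\gamma \mid F^{-1}(\tau) \geq G^{-1}(\gamma)\}$ at which \emph{both} $F$-to-$G$ pieces are non-negative (continuity of $G^{-1}$ gives $F^{-1}(\tau)=G^{-1}(\gamma_0)$, a short contradiction using $F \geq_\mathrm{wS} G$ gives $F^{-1}(1-\tau)\geq G^{-1}(1-\gamma_0)$, and common support handles $\gamma_0=1$). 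You instead argue by contradiction, convert all four values into threshold levels of $G^{-1}$, and intersect the two interval coverings of $(0.5,1)$. These rest on the same facts, and in your notation the paper's $\gamma_0$ is precisely a point of the overlap $(0.5,\tau^\dagger]\cap[1-\tau^\ddagger,1)$ that your hypothesis (i) forces to be non-empty; the paper then evaluates hypothesis (ii) only at that point, which is why its case analysis is shorter. One caveat: the range of $G^{-1}$ on $(0,1)$ need not contain all four values even under common support (e.g.\ if $F$ or $H$ has an atom at an endpoint of the support while $G$ does not), so the identities like $G^{-1}(\tau^\dagger)=F^{-1}(\tau)$ that yield your strict inequalities $\tau^\dagger<\xi^\dagger$ and $\tau^\ddagger<\xi^\ddagger$ can fail at thresholds $0$ or $1$; those boundary configurations do resolve via the common-support assumption, so this is a loose end you correctly flag rather than a gap.
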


Example \ref{Ex:IntransitiveCD} 
illustrates the necessity of the common support assumption in Proposition~\ref{Thm:WeakStochPreorder}.

\subsubsection{A relaxed stochastic order based on Wasserstein distances}
\label{subsec:WDOrder}

Similar to the treatment of the Cram\'er distance, we require a weaker form of a stochastic order relation to establish equivalence with nonzero shift components of the $\WDp$ decompositions.
To this end, we call $F$ larger than $G$ in \emph{relaxed stochastic order}, $F \geq_\mathrm{rS} G$, if
\begin{align}
\label{Eq:rS}
\max \big\{ F^{-1}(\tau) - G^{-1}(\tau), \, F^{-1}(1-\tau) - G^{-1}(1-\tau) \big\} \geq 0 
\end{align}
for all $0.5 < \tau < 1$.
To the best of our knowledge, the relaxed stochastic order relation is new to the literature, and we briefly discuss its properties at the end of this section.

\begin{theorem}
\label{Thm:StochOrderAVMJoint}
Let $F$ and $G$ be probability distributions with continuous quantile functions $F^{-1}$ and $G^{-1}$, respectively. Then, for all $p\in[1,\infty)$,
\begin{enumerate}
\item $F \geq_\mathrm{rS} G \quad \Longleftrightarrow \quad \shift_-^{\WDp}(F,G) = 0$;
\item $F >_\mathrm{rS} G \quad \Longleftrightarrow \quad\big\{\shift_-^{\WDp}(F,G) = 0, \quad \shift_+^{\WDp}(F,G) > 0  \big\}$.
\end{enumerate}
\end{theorem}

This theorem establishes an equivalence between the (strict) relaxed stochastic order and (non)zero shift components of the $\WDp$ shift components. Notably, the equivalences and implications in Theorem \ref{Thm:StochOrderAVMJoint} hold for the Wasserstein distance of any order $p$ by the equivalence from Proposition \ref{prop:PositiveShift}.

For the Wasserstein distances, a further complication arises because a \emph{strict} stochastic ordering does not imply a nonzero shift component, as illustrated by Example \ref{Ex:DiffShift}. To establish a result akin to part (b) of Proposition~\ref{Prop:StochOrderCD}, the strict stochastic order needs to be further strengthened. To this end, we define the \emph{strong stochastic order}, $F >_\mathrm{sS} G$, as
\begin{gather}
\label{Eq:sS}
F \geq_\mathrm{S} G
\qquad \text{ and } \qquad 
\exists \tau \in (0,1):
\min \{ F^{-1}(\tau) - G^{-1}(\tau), F^{-1}(1-\tau) - G^{-1}(1-\tau)\} > 0.
\end{gather}
This relation is a \emph{strict} partial order (irreflexive, antisymmetric and transitive) and implies the \emph{strict} version of the usual stochastic order, which arises when replacing the minimum in \eqref{Eq:sS} by a maximum. 

\begin{prop}
\label{Prop:StochOrderAVM}
Let $F$ and $G$ be probability distributions with continuous quantile functions $F^{-1}$ and $G^{-1}$, respectively. Then,
\begin{enumerate}
\item $F \geq_\mathrm{wS} G \quad~\Longrightarrow\quad F \geq_\mathrm{rS} G$;
\item $F >_\mathrm{sS} G  \quad\Longrightarrow \quad 
F >_\mathrm{rS} G$.
\end{enumerate}
\end{prop}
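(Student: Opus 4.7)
The plan is to dispatch the two implications separately, with (a) reducing to a direct specialization of the defining inequality and (b) following from (a), Proposition~\ref{Prop:StochOrderCD}, and a short continuity argument.

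For part (a), I would observe that $F \geq_\mathrm{wS} G$ requires $\max\{F^{-1}(\tau) - G^{-1}(\xi),\, F^{-1}(1-\tau) - G^{-1}(1-\xi)\} \geq 0$ for \emph{all} pairs $(\tau,\xi)$ with $0.5 < \tau, \xi < 1$. Restricting to the diagonal $\xi = \tau$ yields exactly the defining condition \eqref{Eq:rS} of $F \geq_\mathrm{rS} G$. No further argument is required; the non-strict implication is an immediate consequence of taking a weaker quantifier.

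For part (b), I would first establish the non-strict ordering $F \geq_\mathrm{rS} G$ by chaining $F \geq_\mathrm{S} G \Rightarrow F \geq_\mathrm{wS} G$ (Proposition~\ref{Prop:StochOrderCD}(a)) with part (a) just proved. It then remains to show $F \not\leq_\mathrm{rS} G$, i.e., to exhibit some $\tau \in (0.5, 1)$ at which both $F^{-1}(\tau) > G^{-1}(\tau)$ and $F^{-1}(1-\tau) > G^{-1}(1-\tau)$ hold strictly. By assumption, the strong stochastic order supplies a witness $\tau^* \in (0,1)$ at which the minimum of these two differences is strictly positive. If $\tau^* > 0.5$, I take $\tau = \tau^*$; if $\tau^* < 0.5$, the symmetry of the defining condition under $\tau^* \leftrightarrow 1 - \tau^*$ lets me take $\tau = 1 - \tau^*$ instead.

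The only subtle case is $\tau^* = 0.5$, where the two quantities in the minimum coincide with the single value $F^{-1}(0.5) - G^{-1}(0.5) > 0$. Here I would invoke the hypothesis that $F^{-1}$ and $G^{-1}$ are continuous: the strict inequality $F^{-1} > G^{-1}$ then persists on an open neighborhood of $0.5$, from which I extract some $\tau > 0.5$ satisfying both $F^{-1}(\tau) > G^{-1}(\tau)$ and $F^{-1}(1-\tau) > G^{-1}(1-\tau)$. This is the only step where the continuity hypothesis enters the argument, and I expect it to be the main—if still minor—technical point of the proof.
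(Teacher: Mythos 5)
Your proposal is correct and follows essentially the same route as the paper: part (a) by specializing the weak stochastic order to the diagonal $\xi=\tau$, and part (b) by exhibiting a witness $\tau\in(0.5,1)$ at which both quantile differences are strictly positive, so that $F\not\leq_\mathrm{rS}G$. Your treatment is in fact slightly more careful than the paper's one-line argument for (b), since you explicitly reduce $\tau^*<0.5$ to $1-\tau^*$ by symmetry and use continuity of the quantile functions to handle the edge case $\tau^*=0.5$, which the paper's proof glosses over.
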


Part (a) of this proposition shows that the relaxed stochastic order is implied by the weak stochastic order, and hence, by invoking Proposition~\ref{Prop:StochOrderCD}, also by the usual stochastic order.
However, a corresponding implication fails to hold for their strict versions, as the distributions in Example \ref{Ex:DiffShift} illustrate.

As summarized in Figure \ref{Fig:Orders}, part (b) of  Proposition~\ref{Prop:StochOrderAVM} together with Theorem~\ref{Thm:StochOrderAVMJoint} shows that a strong stochastic ordering implies a unique nonzero shift component in the $\WDp$ decompositions.

We now establish that the relaxed stochastic order is a preorder under symmetry.

\begin{prop}
\label{Thm:RelaxedSOPreorder}
The relaxed stochastic order is a preorder on sets of symmetric distributions with continuous quantile functions.
\end{prop}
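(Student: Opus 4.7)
The plan is to verify the two defining properties of a preorder, reflexivity and transitivity. Reflexivity is immediate: since $F^{-1}(\tau) - F^{-1}(\tau) = 0$ for every $\tau \in (0.5,1)$, the maximum in \eqref{Eq:rS} is trivially non-negative, so $F \geq_\mathrm{rS} F$. The substantive step is transitivity, and my strategy is to exploit symmetry to reduce $\geq_\mathrm{rS}$ to a simple ordering of central medians, from which transitivity then follows from transitivity of $\geq$ on $\mathbb{R}$.

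To carry out this reduction, I use that continuity of $F^{-1}$ forces the median to be unique, with $m_F = F^{-1}(0.5)$. For symmetric $F$ and any $\tau \in (0.5, 1)$, the symmetry relation $F^{-1}(\tau) = 2 m_F - F^{-1}(1-\tau)$ yields
\begin{equation*}
    F^{-1}(\tau) = m_F + s_F(\tau), \qquad F^{-1}(1-\tau) = m_F - s_F(\tau),
\end{equation*}
where $s_F(\tau) := F^{-1}(\tau) - m_F \geq 0$ is continuous on $[0.5, 1)$ with $s_F(0.5) = 0$. For two symmetric distributions $F$ and $G$, this parametrization gives
\begin{align*}
    F^{-1}(\tau) - G^{-1}(\tau) &= (m_F - m_G) + \bigl(s_F(\tau) - s_G(\tau)\bigr), \\
    F^{-1}(1-\tau) - G^{-1}(1-\tau) &= (m_F - m_G) - \bigl(s_F(\tau) - s_G(\tau)\bigr),
\end{align*}
so the maximum in \eqref{Eq:rS} simplifies to $(m_F - m_G) + |s_F(\tau) - s_G(\tau)|$. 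Hence $F \geq_\mathrm{rS} G$ is equivalent to $m_G - m_F \leq |s_F(\tau) - s_G(\tau)|$ for all $\tau \in (0.5,1)$. Since $|s_F(\tau) - s_G(\tau)| \to 0$ as $\tau \to 0.5^+$ by continuity, the infimum over $\tau \in (0.5,1)$ vanishes, yielding the clean equivalence $F \geq_\mathrm{rS} G \iff m_F \geq m_G$.

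Transitivity is now immediate: if $F \geq_\mathrm{rS} G$ and $G \geq_\mathrm{rS} H$, then $m_F \geq m_G \geq m_H$ and hence $F \geq_\mathrm{rS} H$. The main subtlety lies in the continuity assumption, which is used twice---first to uniquely identify $m_F$ with $F^{-1}(0.5)$, and then, more delicately, to ensure that $|s_F(\tau)-s_G(\tau)|$ can be made arbitrarily small as $\tau \to 0.5^+$. Without continuity, a jump of the quantile function at the median could leave a positive gap between the spread functions and prevent the reduction to a simple median comparison from going through.
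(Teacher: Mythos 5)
Your proof is correct. Both you and the paper establish transitivity by the same key reduction---for symmetric distributions with continuous quantile functions, $F \geq_\mathrm{rS} G$ is equivalent to the median ordering $m_F \geq m_G$---but you arrive there by different means. The paper obtains the equivalence in one line by chaining two earlier results: Theorem \ref{Thm:StochOrderAVMJoint} (which identifies $F \geq_\mathrm{rS} G$ with a vanishing shift component $\shift_-^{\WDp}(F,G) = 0$) and Proposition \ref{prop:ShiftSymDist} (which, for symmetric distributions with unique medians, ties the vanishing of a shift component to the ordering of the medians). You instead work directly from the definition \eqref{Eq:rS}: writing $F^{-1}(\tau) = m_F + s_F(\tau)$ and $F^{-1}(1-\tau) = m_F - s_F(\tau)$, you observe that the maximum in \eqref{Eq:rS} equals $(m_F - m_G) + |s_F(\tau) - s_G(\tau)|$ and then let $\tau \to 0.5^+$ to kill the spread term. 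Your route is self-contained and more elementary---it bypasses the decomposition machinery entirely and makes explicit exactly where continuity at the median is needed---whereas the paper's version is shorter given the results already in place and reuses infrastructure that is needed elsewhere anyway. One minor point worth a sentence in your write-up: the paper's symmetry condition only requires $F^{-1}(\gamma) = 2m_F - F^{-1}(1-\gamma)$ for \emph{almost all} $\gamma$, so you should note that continuity of the quantile functions upgrades this to all $\gamma$, which is what your pointwise parametrization uses.
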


Example \ref{Ex:Intransitive} 
shows that the common support assumption (that was imposed in Proposition~\ref{Thm:WeakStochPreorder}) is not sufficient to establish transitivity for the \emph{relaxed} stochastic order.

\section{Applications}
\label{sec:applications}

Here, we illustrate the decompositions in two applications from the fields of climate science and economic survey design.

\subsection{Prediction of seasonal temperature extrema}

Our first application is from the field of climate modelling and illustrates how our decompositions can render forecast evaluations more interpretable. We revisit an evaluation of historical climate simulations from the Coupled Model Intercomparison Project (CMIP, \citealt{Taylor2012}) performed by \cite{Thorarinsdottir2020}. Our focus is on monthly maximum temperatures (TXx) over Europe during the Boreal summer months June, July, and August.

\begin{figure*}[tb!]
\centering
\subcaptionbox{AVM decompositions}{
\includegraphics[width=0.48\textwidth]{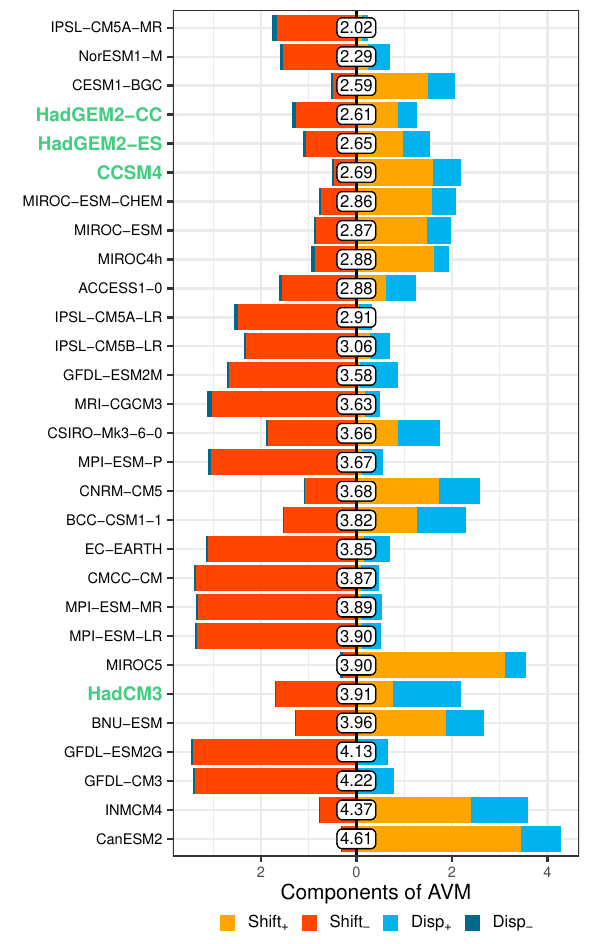}} \quad\subcaptionbox{CD decompositions}{\includegraphics[width=0.48\textwidth]{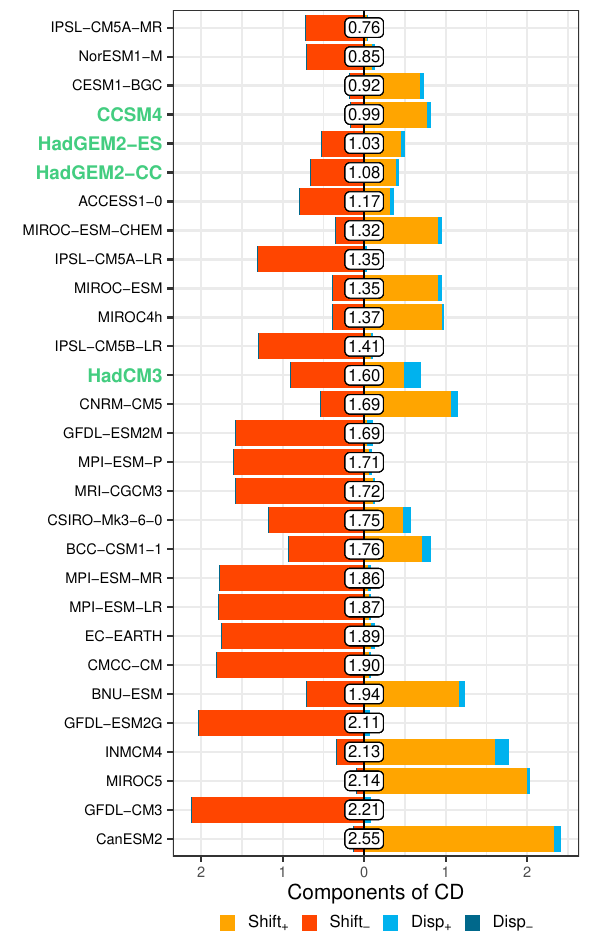}}
\caption{Evaluation of forecasts for monthly temperature maxima (TXx) in summer months from 29 climate models (given in the rows), averaged over 136 grid cells in Europe, sorted by their average (across grid cells) $\AV$ (left) and $\CD$ (right).
We decompose these distances between the empirical distributions (HadEX2) and the model forecasts into our shift and dispersion components, whose magnitude is shown by the colored bars.
To facilitate visual distinction, opposing shift and dispersion components are drawn in different directions. 
The models that are discussed in the text are highlighted in green.
}
\label{fig:climate}
\end{figure*}

\cite{Thorarinsdottir2020} compare the empirical distributions of these temperature extremes over the years 1979--2005 according to various data sources to corresponding forecast distributions from a variety of models. We replicate a comparison between the primary data source used in the paper (HadEX2, \citealt{Donat2013}) and 29 different forecasting models from the CMIP5 project \citep{Sillmann2013}.
For 136 grid cells of size $3.75^{\circ}$ (longitude) $\times$ $2.5^{\circ}$ (latitude) covering European land masses, the cell-wise Cram\'er distances between the empirical and model-based distributions are computed and subsequently averaged (Figure 4 in \citealt{Thorarinsdottir2020}). This results in a ranking of the different models in terms of their capacity to predict the distribution of temperature extremes.

\begin{figure}[p]
\centering
\subcaptionbox{AVM decomposition (one grid cell)}{
\includegraphics[width = \defaultfigwidth\textwidth,page = 1]{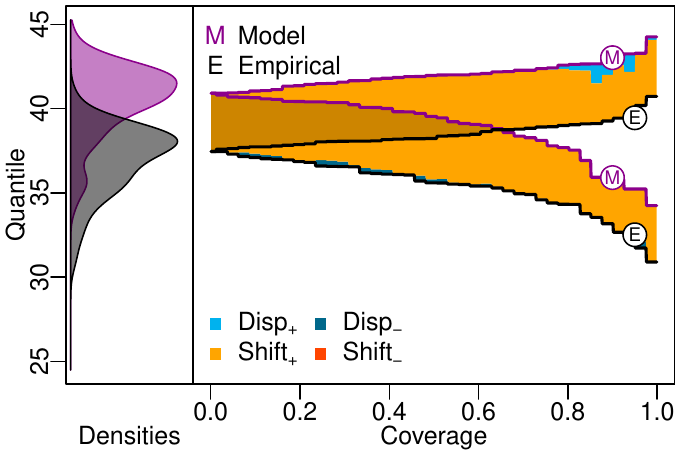}}
\subcaptionbox{AVM decomposition (another grid cell)}{\includegraphics[width = \defaultfigwidth\textwidth,page = 2]{figures/Fig10.pdf}}
\caption{Two examples of comparisons on specific grid cells between the empirical TXx distributions (HadEX2, black) and the predictive distributions from the \texttt{HadGEM2-ES} forecasting model (purple). In each panel, we show kernel density estimates of the two distributions, as well as their AVM decompositions (based on empirical CDFs).
In each of the grid cells, exactly one shift component is non-zero: The model overpredicts in the left, while underpredicting in the right plot. See Figure~\ref{fig:Illustration_FoldedQuantiles} for details on the graphical display of the AVM decomposition.}
\label{fig:climate2}
\bigskip\bigskip

	\centering
	\subcaptionbox{AVM decomposition}{
	\includegraphics[width = \defaultfigwidth\textwidth,page = 1]{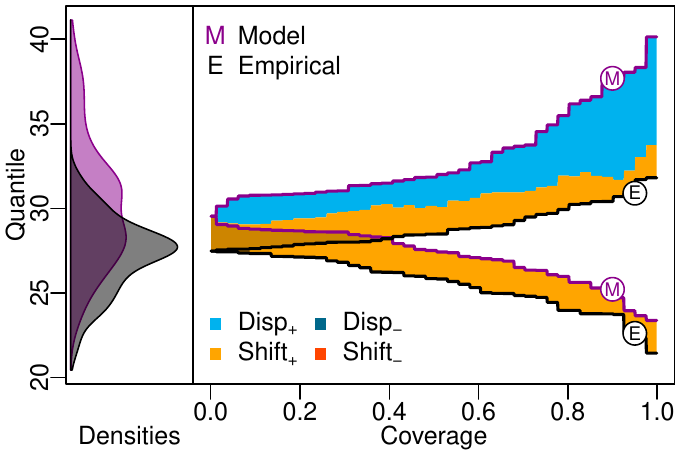}}
	\subcaptionbox{CD decomposition slice}{
		\includegraphics[width = \defaultfigwidth\textwidth,page = 2]{figures/Fig11.pdf}}
	\caption{Comparison of the decompositions of the AVM and CD for a selected grid cell (HadEX2 data in black against the \texttt{HadGEM2-ES} model in purple). For the selected grid cell, the two distributions visibly differ in dispersion, which is picked up more strongly by the AVM (second panel) than the CD (third panel). Note that the CD decomposition is only shown for the level $\beta = 0.5$, but as can be seen from the fourth panel, the overall decomposition gives similarly little weight to $\disp_+$. 
		See Figures \ref{fig:Illustration_FoldedQuantiles} and \ref{fig:illustration_cd} for details on the graphical displays of the decompositions.}
	\label{fig:climate3}
\end{figure}

We rerun these computations\footnote{
The results for the average Cram\'er distance for the models MIROC5 and MIROC-ESM\_CHEM reported here differ from the ones in \citet[][Figure 4, left]{Thorarinsdottir2020} for unknown reasons.}, 
apply both the Cram\'er distance and the area validation metric and compute the respective decompositions. The results are displayed in Figure \ref{fig:climate}. Several relevant patterns emerge, which are not discernible based on the average divergences from \citet{Thorarinsdottir2020}.
Firstly, there is a clear dominance of one of the shift components for most models. 
This indicates that the differences between empirical and predicted distributions tend to be systematic across grid cells.
An interesting exception is the model \texttt{HadGEM2-ES}, where both shift components are of similar size. As shown in Figure \ref{fig:climate2}, this mixed pattern results from different shifts in different grid cells rather than simultaneously positive shift components within the same grid cell. 
Secondly, concerning the dispersion components, it can be seen that the model forecasts are consistently more dispersed than the empirical distributions.

The application moreover illustrates that the AVM tends to give more weight to dispersion components than the CD, as discussed in Section \ref{subsec:ComparisonDecompositions}.
While for some models, the dispersion component represents a substantial part of the overall AVM, the dispersion components are largely negligible for the CD. In Figure \ref{fig:climate3} we illustrate this behavior for a selected grid cell. The different relative importance of shift and dispersion components also explains some of the differences in model rankings across the two divergences. Most notably, the \texttt{HadCM3} model receives a large dispersion component under the AVM, leading to a considerably worse ranking than under the CD. Similarly, the \texttt{HadGEM2CC} and \texttt{CCSM4} models (ranks 4 and 6, respectively, under AVM) change places under the CD. The reason for these changes is that differences in dispersion, which play a relevant role in the AVM, become negligible in the CD.

\subsection{Elicitation of inflation predictions}
\label{subsec:inflation}

\begin{figure*}[tb]
\centering
\includegraphics[width=\linewidth]{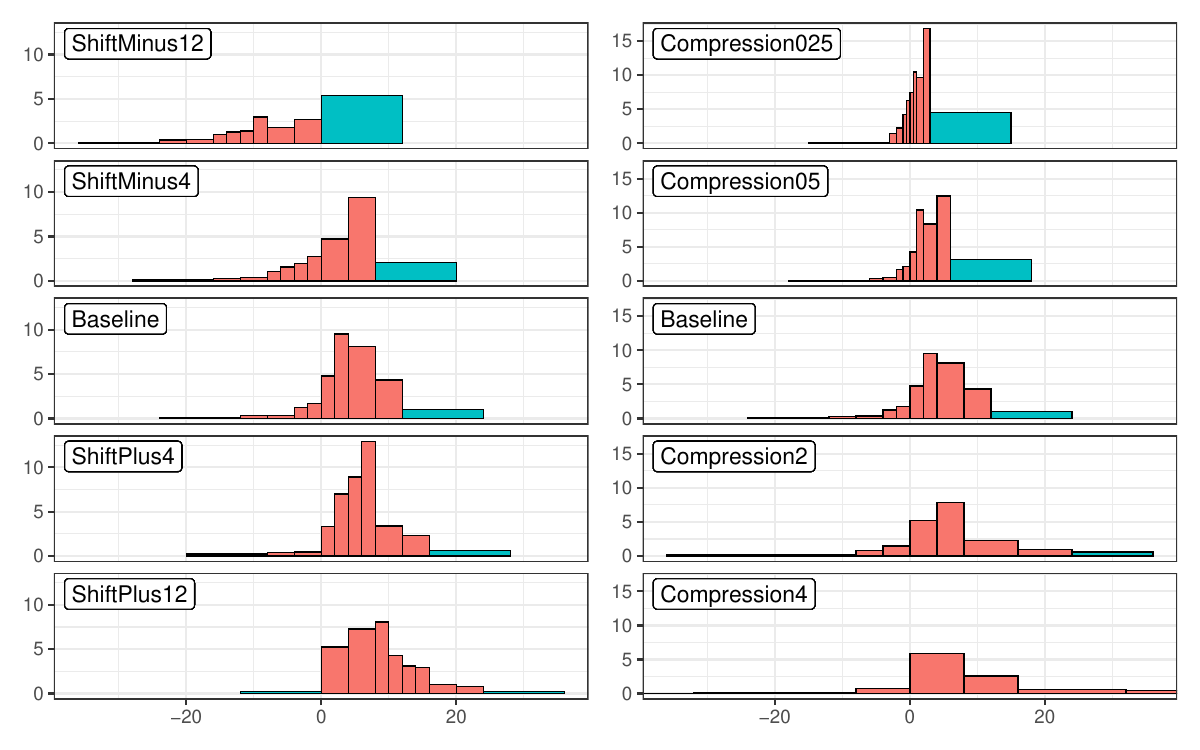}
\caption{Density (histogram) forecasts averaged over all survey respondents for the shift and compression treatments of \citet{BDE23}. 
All treatments include an open interval on the left and right in blue, which we truncate such that it has a width of 12 percentage points purely for illustrative purposes.
}
\label{fig:Hist_FCs}
\end{figure*}

Our second application considers economic surveys on households' expectations about future inflation, which have become an important tool for central banks over the previous decades \citep{Coibion2022}.
Inspired, among others, by the call of \citet{Manski2004} for probabilistic forecasts, recent surveys such as the Survey of Consumer Expectations (SCE) of the Federal Reserve Bank of New York are often in a probabilistic form, where the respondents assign probabilities to pre-specified bins, resulting in histogram-like forecasts.
The issued probabilities together with the right endpoints of the bins elicit precise information on the forecasted CDFs.
However, this survey format has come under question as the respondents' elicited distributions are found to be sensitive to the bin specification \citep{Schwarz1985, BDE23}.
This effect is especially disconcerting in times of varying inflation rates that necessitate adjustments in the bin specifications.

\citet{BDE23} focus on the undesirable effect that a given binning specification has on the responses in a preregistered experimental study on the platform Prolific (\url{www.prolific.co}), where randomly selected respondents obtain varying bin specifications as a treatment when issuing their probabilistic expectations.
The baseline specification follows the traditional binning used in the SCE.
The authors consider (among others) ``shift'' and ``compression'' treatments, where the bins are shifted to the right or left, and compressed or decompressed, as illustrated in Figure~\ref{fig:Hist_FCs}.

\begin{table*}[tb]
\centering
\begin{tabular}{l c rrrrr c rrrrr}
\toprule
&& \multicolumn{5}{c}{Cram\'er Distance (CD)} &&  \multicolumn{5}{c}{Area Validation Metric (AVM)} \\
\cmidrule{3-7}  	\cmidrule{9-13}
Treatment ($F$)  && $\CD$ & $\shift_+$ & $\shift_-$ & $\disp_+$ & $\disp_-$ &&  $\AV$ & $\shift_+$ & $\shift_-$ & $\disp_+$ & $\disp_-$ \\
\midrule
ShiftMinus12 && 0.820 & 0.000 & 0.534 & 0.286 & 0.000 && 4.556 & 0.000 & 1.823 & 2.733 & 0.000 \\ 
ShiftMinus4 && 0.081 & 0.000 & 0.070 & 0.011 & 0.000 && 1.222 & 0.000 & 0.743 & 0.443 & 0.037 \\ 
ShiftPlus4 && 0.068 & 0.060 & 0.000 & 0.001 & 0.007 && 0.996 & 0.649 & 0.000 & 0.118 & 0.228 \\ 
ShiftPlus12 && 0.206 & 0.198 & 0.000 & 0.008 & 0.001 && 2.025 & 1.626 & 0.000 & 0.303 & 0.096 \\
\midrule
Compression025 && 0.123 & 0.000 & 0.108 & 0.001 & 0.015 && 1.344 & 0.000 & 0.842 & 0.051 & 0.451 \\ 
Compression05 && 0.050 & 0.000 & 0.017 & 0.000 & 0.033 && 0.859 & 0.000 & 0.079 & 0.000 & 0.780 \\ 
Compression2 && 0.149 & 0.007 & 0.000 & 0.142 & 0.000 && 2.442 & 0.097 & 0.000 & 2.345 & 0.000 \\ 
Compression4 && 0.488 & 0.120 & 0.000 & 0.368 & 0.000 && 5.060 & 0.358 & 0.000 & 4.702 & 0.000 \\ 
\bottomrule
\end{tabular}
\caption{Approximate Cram\'er distances and area validation metrics together with their (approximated) decompositions
	of the average histogram forecast under the various treatments ($F$) described in the first column in comparison to the average forecast under the Baseline treatment ($G$).
}
\label{Tab:Inflation}
\end{table*}

While it would be desirable that responses are unaffected by the given binning, \citet{BDE23} find significant shift and compression effects in the average responses that are closely related to the implemented changes in the binning.
However, their methodology is limited to analyzing the means and standard deviations that are obtained from fitting a parametric beta distribution to the individual responses \citep{Engelberg2009}, and to the probability of the binarized event of deflation.

However, given that the survey is probabilistic in nature, so should be its evaluation. 
Hence, we refine the results of \citet{BDE23} by considering our decompositions of the $\AV$ and $\CD$ of the aggregated response distributions.
Our decompositions in shift and dispersion components are particularly suitable for the shift and compression treatments of \citet{BDE23}.
Table \ref{Tab:Inflation} shows the $\AV$ and $\CD$ together with the four components of the decompositions comparing the different treatments ($F$) to the baseline distribution ($G$).
As the distributions shown in Figure~\ref{fig:Hist_FCs} identify the respective CDFs only at the values separating the histogram bins, we use the approximation of the decomposition terms described in Supplement~\ref{subsec:inflation_approx} 
and shown in Figure~\ref{Fig:Inflation} for two exemplary treatments.
The solid points in the quantile spread plots show the known quantiles and a linear interpolation is used in between.
In the tails, a conservative extrapolation is used that we describe in detail in Supplement~\ref{subsec:inflation_approx}.

\begin{figure*}[tb]
\centering
\subcaptionbox{AVM decomposition for ShiftPlus12}{
\includegraphics[width=\defaultfigwidth\textwidth,page = 4]{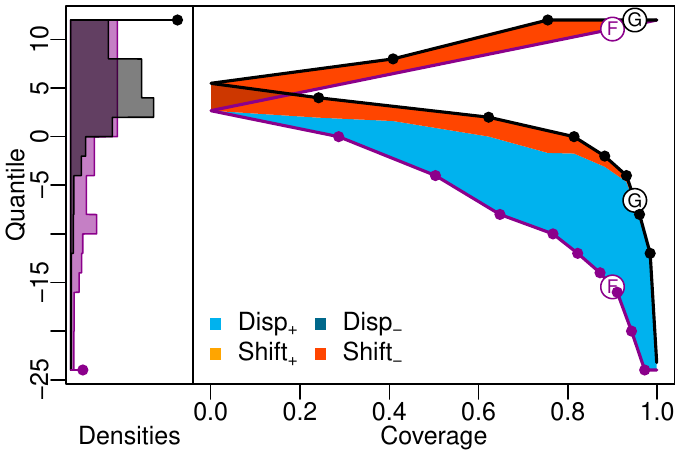}}
\subcaptionbox{AVM decomposition for Compression05}{
\includegraphics[width=\defaultfigwidth\textwidth,page = 6]{figures/Fig13.pdf}}
\caption{Illustrations of the approximate AVM decompositions for the ShiftPlus12 (left) and Compression05 (right) treatments ($F$) against the baseline treatment ($G$).
Point masses are represented by solid dots in the density plots. Known quantiles are marked with solid dots in the quantile spread plots. See Figure \ref{fig:Illustration_FoldedQuantiles} for a detailed description of the plots.}
\label{Fig:Inflation}
\end{figure*}

For all four shift treatments, Table~\ref{Tab:Inflation} shows a large shift component in the anticipated direction for both, the AVM and the CD.
As already noted by \citet{BDE23}, the ``artificial truncation'' of the bins (especially in ShiftMinus12) that can be observed in Figure~\ref{fig:Hist_FCs} together with the recent surge in inflation rates results in an increase in dispersion.
As expected from the theoretical results of Section~\ref{subsec:ComparisonDecompositions}, the relative magnitude (among the entire divergence) of the shift components is smaller for the AVM than for the CD.

In the compression treatment, we find equivalent effects in the dispersion components in the expected direction.
The compression treatments with factors 0.25 and 0.5 entail---perhaps surprisingly---large shifts, which can be traced back to the artificial truncation combined with the high inflation rates at the time of the survey in December 2021.

The results in this section supplement the parametric analysis of \citet{BDE23} by adding a nonparametric verification that takes into account the full distributional differences between the baseline and the treated probability forecasts.

\section{Discussion}
\label{sec:discussion}

In the present paper, we introduce decompositions into dispersion and shift components for various statistical distances. 
The decompositions meaningfully attribute the overall distance to differences in location and variability. They behave as expected in clear-cut special cases such as pairs of symmetric distributions and distributions from the same location-scale family, while being applicable to arbitrary pairs of distributions.
Furthermore, the decompositions shed light on the sensitivities of the studied divergences towards differences in location and dispersion. 
The decompositions are compatible with the usual stochastic and dispersive order relations, and we establish correspondences to suitably constructed weakened order relationships.
Finally, we demonstrate the practical use of the decompositions in two case studies.

Both our theoretical results and case studies indicate that the Cram\'er distance is rather insensitive to differences in dispersion, and puts a strong focus on differences in location. The area validation metric (and higher-order Wasserstein distances) on the other hand emphasize differences in dispersion.
Besides these natural differences, we note that the Cram\'er distance has an important advantage when used for forecast evaluation as in the application to climate predictions. Unlike the area validation metric (and higher-order Wasserstein distances), it is a \textit{proper divergence metric} \citep{TGG13}, which rewards truthful predictions.

As a consequence of our interval-based approach, the proposed decompositions attribute the distance between two distributions $F$ and $G$ entirely to differences in location and dispersion. 
Of course, distributions are often characterized by additional (higher-order) properties such as skewness or kurtosis attributed to their shape. In our decompositions, differences in higher-order properties sometimes lead to both shift or dispersion components being nonzero simultaneously (e.g., in Example \ref{exmpl:BothComponentsPositive}), but there are no clear-cut connections. 
The \emph{nonparametric} nature of our decompositions that aggregate fundamental comparisons of central intervals does not attest to differences in shape. In contrast, decompositions that compare summary statistics as the one in \eqref{eq:paramDecomp}, which may do so, provide a rather superficial comparison and 
are not available for most of the studied distances.
Accommodating additional shape components likely requires considerably revised techniques and is reserved for future work.

The decompositions are thus far descriptive tools that appear to be promising both from a theoretical and applied perspective. A next step will be to develop inference techniques that assess the statistical significance of the components in standard one- and two-sample settings as well as more involved settings such as aggregated inflation expectations.
While resampling techniques such as permutation tests or the bootstrap are well suited to assess the statistical significance of the overall distance, assessing the significance of individual components is challenging. 
As the null hypotheses of no difference in shift or location are complex composite hypotheses, simulating them does not seem feasible. A careful look at the sum of the dispersion components of the area validation metric reveals that it can be rewritten as a sum of conditional expectations, which may well facilitate inference for this particular component. We view such developments as a natural avenue for future research.

It remains an open question how the discussed distances can be decomposed for multivariate distributions.
For the specific case of the 2-Wasserstein distance, the location component of the moment-based decomposition \eqref{eq:paramDecomp} discussed in Remark \ref{rem:paramDecomp} can be generalized to higher dimensions as follows.
If $P$ and $Q$ are probability distributions on $\R^d$ with first moments $\mu_P,\mu_Q$, respectively, and finite second moments, 
the squared (multivariate) 2-Wasserstein distance \citep[Eq.\ 1.]{Panaretos2019} can be written as
\begin{equation*}
	\inf_{X\sim P, Y\sim Q} \mathbb{E} \Vert X - Y\Vert^2 = \Vert \mu_P - \mu_Q\Vert^2 + \inf_{X\sim P, Y\sim Q} \mathbb{E} \Vert (X - \mu_P) - (Y - \mu_Q)\Vert^2,
\end{equation*}
which may provide a promising starting point for further refinements.

On the other hand, direct generalizations of our nonparametric approach are hindered by the fact that representations of the distances in terms of quantiles are limited to the univariate setting. 
Our decomposition may however be leveraged in the multivariate setting via a slicing approach; see e.g., \citet{KSN+19}.
Here, a sliced distance is computed as the average distance of linear projections (in all directions) of the distributions to the real line.
Accordingly, a sliced decomposition can be obtained by averaging the decomposition components of all linear projections. 
However, the positive and negative shift components will become equal here, as for each projection, the projection in the opposite direction will lead to an equal contribution to the opposing shift component.
Due to the averaging step over all slicing directions, the final shift components will always coincide.
One possible improvement would be to fix a hyperplane relative to which the directed components could be defined.
Further refinements that incorporate multiple directions in higher dimensions may be feasible. We leave such considerations to future research.

\section*{Acknowledgements}

We thank Thomas Eife and Thordis Thorarinsdottir for sharing the data sets for the two applications and Christoph Becker, Andreas Eberl,  Thomas Eife, Tilmann Gneiting and Melanie Schienle, as well as an anonymous referee, for their insightful comments.
T.~Dimitriadis and J.~Resin (project 502572912) as well as J.~Bracher (project 512483310) gratefully acknowledge support of the Deutsche Forschungsgemeinschaft (DFG, German Research Foundation). 
D.~Wolffram was supported by the Helmholtz Association under the joint research school HIDSS4Health -- Helmholtz Information and Data Science School for Health.
All authors acknowledge the Heidelberg Institute for Theoretical Studies (HITS) as secondary affiliation.

We further acknowledge the World Climate research Programme's Working Group on Coupled Modelling, which is responsible for CMIP, and we thank the climate modeling groups (listed in Table \ref{tab:ListCMIP5models} 
in the supplement) for producing and making available their model output. 
For CMIP the U.S.~Department of Energy's Program for Climate Model Diagnosis and Intercomparison provides coordinating support and led development of software infrastructure in partnership with the Global Organization for Earth System Science Portals.

\bibliographystyle{apalike}
\bibliography{manuscript}

\newpage

\section*{Supplement}
\beginsupplement

\noindent 
The supplement contains proofs and derivations together with additional illustrations, counterexamples, and details on approximations.
Section~\ref{sec:illustration_cd_grid} graphically illustrates the CD decomposition for a range of $\beta$ values.
We provide
closed-form expressions for the $\WD_p$ decomposition in Section~\ref{sec:NormalDistribution}.
Section~\ref{sec:Counterexamples} gives additional counterexamples.
Section~\ref{sec:Approximation} provides additional details on the computation and approximation of the decompositions. In particular, we provide simple summation formulas for discrete distributions used in Section \ref{subsec:inflation} as well as two approximation strategies for settings with limited knowledge of the distributions at hand along with worst case bounds that can be leveraged to derive approximation error rates.
Finally, Sections~\ref{sec:ProofsDecomposition}--\ref{sec:ConnectionOrders} contain all proofs of the results presented in Sections \ref{sec:Decompositions}--\ref{sec:Orders}, respectively.

\section{Graphical illustration of the CD decomposition}
\label{sec:illustration_cd_grid}

Figure \ref{fig:illustration_cd_grid} contains graphical illustrations as in Figure~\ref{fig:illustration_cd}, panel (b), at various levels $\beta \in \{0.1,0.2,\dots,0.9\}$.

\begin{figure}[!ht]
	\begin{center}
		\includegraphics[width = \textwidth]{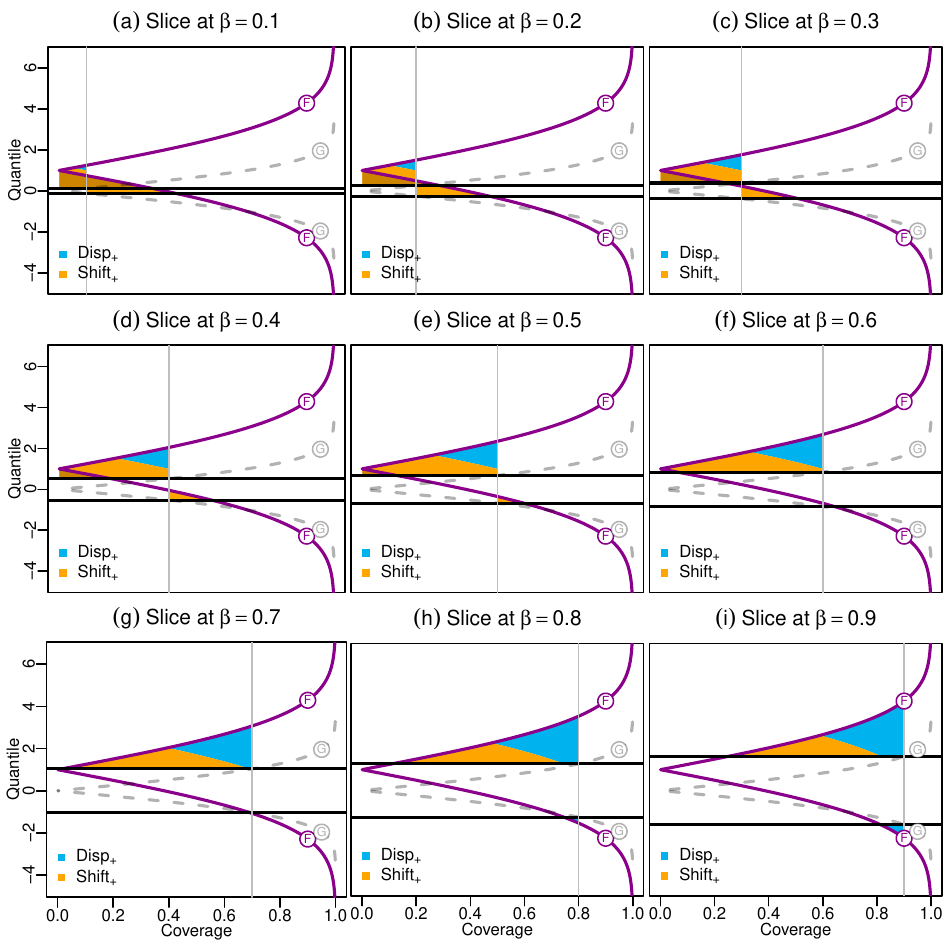}
		\caption{Graphical illustrations as in Figure~\ref{fig:illustration_cd} at various $\beta \in \{0.1,0.2,\dots,0.9\}$ levels.}
		\label{fig:illustration_cd_grid}
	\end{center}
\end{figure}

\section{Closed-form expressions for normal distributions}
\label{sec:NormalDistribution}

We supplement the closed-form expressions given in the main text for the area validation metric and the Cram\'er distance between two normal distributions with formulas for the Wasserstein distance of order $p$.

\subsection{The $\WD_p$ decomposition for normal distributions}
\label{sec:WDpNormals}

Here, we present closed-form expressions for the decomposition terms of the $p$-th power of the $p$-Wasserstein distance with $p \in \mathbb{N}$ for two normal distributions, $F = \norm(\mu_F,\sigma_F^2)$ and $G = \norm(\mu_G,\sigma_G^2)$.
As in Section \ref{Sec:Properties}, we use of the shorthand notations $\mudiff = |\mu_F - \mu_G|$ and $\sigmadiff = |\sigma_F - \sigma_G|$.

The following formulas are based on expressions $m_p(\mu,\sigma,a)$ for the $p$-th moments of a normal distribution $\norm(\mu, \sigma^2)$ that is \emph{truncated} from below at the value $a \in \mathbb{R}$.
\citetsupp{Orjebin2014} provides the recursive formula 
\[
m_p(\mu,\sigma,a) = (p - 1)\sigma^2m_{p-2}(\mu,\sigma,a) + \mu m_{p-1}(\mu,\sigma,a) + \sigma\frac{a^{p-1}\phi(\tfrac{a - \mu}{\sigma})}{1 - \Phi(\tfrac{a - \mu}{\sigma})}
\]
for $\mu\in\R, \sigma > 0$ and $p \in \mathbb{N}$ with $m_0(\mu,\sigma,a) = 1$ and $m_{-1}(\mu,\sigma,a) = 0$.
Then, the Wasserstein distance between two normal distributions is given by
\begin{align*}
	\WDp(F,G) = 
	\begin{cases}
		\mudiff^p  &\text{if} \qquad \sigma_F = \sigma_G,  \\
		\Phi(\mudiff/\sigmadiff)m_p(\mudiff,\sigmadiff,0) + \Phi(-\mudiff/\sigmadiff)m_p(-\mudiff,\sigmadiff,0) &\text{if} \qquad \sigma_F \not= \sigma_G. 
	\end{cases}
\end{align*}

If the variances are equal, the $p$-Wasserstein distance is assigned to a single shift component,
\begin{align*}
	\sigma_F = \sigma_G 
	\quad \Longrightarrow \quad 
	\WDp(F,G) = \mudiff^p = 
	\begin{cases}
		\shift_+^{\WDp}(F,G),& \text{if } \mu_F > \mu_G, \\
		\shift_-^{\WDp}(F,G),& \text{if } \mu_F < \mu_G,
	\end{cases}
\end{align*}
and the dispersion components are zero (as well as the opposing shift component).

In contrast, for differing variances, we get
\begin{align*}
	\sigma_F > \sigma_G 
	\quad \Longrightarrow \quad 
	\begin{cases}
		\disp_+^{\WDp}(F,G) &= m_p(\mudiff,\sigmadiff,\mudiff) + (1-\Phi(\mudiff/\sigmadiff))m_p(-\mudiff,\sigmadiff,0) - \Phi(\mudiff/\sigmadiff)m_p(\mudiff,\sigmadiff,0), \\
		\disp_-^{\WDp}(F,G) &= 0,
	\end{cases}
\end{align*}
and the above dispersion terms are swapped if $\sigma_F < \sigma_G$.

Finally, given differing variances and an ordering of the means, we get
\begin{align*}
	\mu_F \ge \mu_G 
	\quad \Longrightarrow \quad 
	\begin{cases}
		\shift_+^{\WDp}(F,G) = 2\Phi(\mudiff/\sigmadiff)m_p(\mudiff,\sigmadiff,0) - m_p(\mudiff,\sigmadiff,\mudiff) \\
		\shift_-^{\WDp}(F,G) = 0
	\end{cases}
\end{align*}   
and the above shift terms are swapped if $\mu_F < \mu_G$.

We omit the tedious derivations of the closed-form expressions for normal distributions given here and in the main text.

\section{Counterexamples}
\label{sec:Counterexamples}

This section contains various counterexamples that illustrate that certain restrictions in our theoretical results from Sections~\ref{Sec:Properties}--\ref{sec:Orders} are required.

\subsection{Counterexamples of decomposition comparisons}
\label{sec:ExamplesRelativeComparison}

\begin{figure}[tb]
	\begin{center}
		\subcaptionbox{Distributions in Example~\ref{Ex:pWD_loc-scale}}{
			\includegraphics[width = \defaultfigwidth\textwidth]{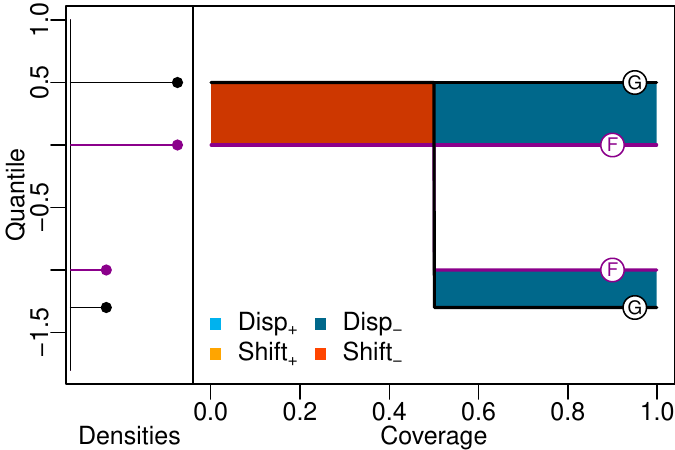}}
		\subcaptionbox{Distributions in Example~\ref{Ex:pWD_unimodal}}{\includegraphics[width = \defaultfigwidth\textwidth]{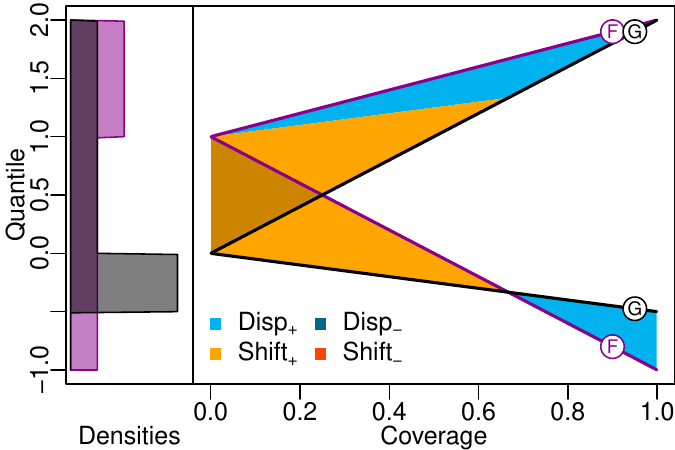}}\bigskip\\
		\subcaptionbox{Distributions in Example~\ref{Ex:CD-WD_sym}}{\includegraphics[width = \defaultfigwidth\textwidth]{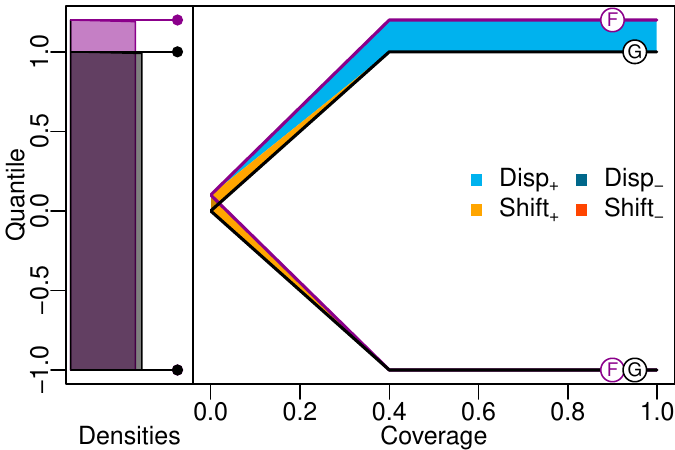}}
	\end{center}
	\caption{Illustrations of the AVM decompositions for distributions $F$ and $G$ given in Examples~\ref{Ex:pWD_loc-scale}--\ref{Ex:CD-WD_sym}. See Figure \ref{fig:Illustration_FoldedQuantiles} for a detailed description of the plots.}
	\label{fig:Counterexamples}
\end{figure}

We start to illustrate the necessity of the \emph{symmetry} condition in Theorem~\ref{Thm:DispInequality_pWD}.
The following counterexample shows that the corresponding inequality \eqref{Eq:DispInequality_pWD} is not guaranteed to hold for asymmetric distributions, even if we focus attention on location-scale families as formally defined in Proposition~\ref{Prop:CompLocScale}.

\begin{example}
	\label{Ex:pWD_loc-scale}
	Consider the distributions $F$ and $G$ that are illustrated in panel (a) of Figure~\ref{fig:Counterexamples}.
	The distribution $F$ places $\frac14$ probability mass at $-1$ and $\frac34$ probability mass at 0 and $G$ is given by the location-scale transformation $G(x) = F(\frac{x - 0.5}{1.8})$. 
	The first three $p$-th power $p$-Wasserstein distances between $F$ and $G$ and their decompositions are
	\begin{align*}
		\WD_1(F,G) &= \shift_+^{\WD_1}(F, G) + \shift_-^{\WD_1}(F, G) + \disp_+^{\WD_1}(F, G) + \disp_-^{\WD_1}(F, G) = 0 + 0.25 + 0 + 0.20, \\
		\WD_2(F,G) &= 0 + 0.125 + 0 + 0.085, \qquad \WD_3(F,G) = 0 + 0.0625 + 0 + 0.038.
	\end{align*}
	Hence, the dispersion component accounts for about 44.4\% 
	of $\WD_1(F,G)$, which reduces to about 40.5\% 
	of $\WD_2(F,G)$ and about 37.8\% 
	of $\WD_3(F,G)$.
	
	A corresponding example with continuous distributions can be obtained by slightly tilting the horizontal and vertical (jump) segments of the quantile functions in Figure~\ref{fig:Counterexamples} (a).
	However, the given example is easy to grasp as the components are simply given by the areas of the three rectangles, where it is important to note that the red rectangle is counted twice for the shift component (compare to Figure~\ref{fig:Illustration_FoldedQuantiles}).
	As the $p$-th power $p$-Wasserstein distance takes the $p$-th power of the height of the rectangles, the lower (smaller) rectangle contributing to the dispersion component is down-weighted in comparison to the other rectangles when increasing the power $p$, thereby reducing the relative weight of the dispersion component. Evidently, the relative weight of the dispersion component converges to $\frac13$ from above as $p \rightarrow \infty$.
\end{example}

The following example illustrates that Theorem~\ref{Thm:DispInequality_pWD} and its inequality \eqref{Eq:DispInequality_pWD} do not hold for arbitrary (weakly) unimodal distributions, formally defined in Conjecture~\ref{conj:AVCD_ineq_unimodal}.

\begin{example}
	\label{Ex:pWD_unimodal}
	Consider the distributions $F = 0.5 \times \unif[-1,1] + 0.5 \times \unif[1,2]$ and $G = 0.5 \times \unif[-0.5,0] + 0.5 \times \unif[0,2]$ illustrated in panel (b) of Figure~\ref{fig:Counterexamples}.
	The decompositions of the first three $p$-th power $p$-Wasserstein distances between $F$ and $G$ are
	\begin{align*}
		\WD_1(F,G) &= \shift_+^{\WD_1}(F, G) + \shift_-^{\WD_1}(F, G) + \disp_+^{\WD_1}(F, G) + \disp_-^{\WD_1}(F, G) = 0.3333 + 0 + 0.125 + 0, \\
		\WD_2(F,G) &= 0.2222 + 0 + 0.0694 + 0, \qquad \WD_3(F,G) = 0.1667 + 0 + 0.0469 + 0.
	\end{align*}
	Hence, the dispersion component accounts for about 27.3\% 
	of $\WD_1(F,G)$, which reduces to about 23.8\% 
	of $\WD_2(F,G)$ and about 22.0\% 
	of $\WD_3(F,G)$.
	A counterexample with strongly unimodal distributions (that are strictly in- and decreasing left and right of the unique mode) can again be constructed by slightly tilting the respective horizontal lines in the density functions shown in Figure~\ref{fig:Counterexamples} (b).
\end{example}

We turn now to generalizations of inequality \eqref{Eq:DispInequality_CD-WD_norm} between relative (normalized) dispersion components of the Cram\'er distance and the area validation metric.
Contrary to inequality \eqref{Eq:DispInequality_pWD}, inequality \eqref{Eq:DispInequality_CD-WD_norm} does not hold for arbitrary symmetric distributions, as illustrated by the following example.

\begin{example}
	\label{Ex:CD-WD_sym}
	Consider the distributions $F = \frac3{10}\times\dirac{-1} + \frac15\times \unif[-1,0.1] + \frac15\times \unif[0.1,1.2] + \frac3{10}\times\dirac{1.2}$ and $G = \frac3{10}\times\dirac{-1} + \frac15\times \unif[-1,0] + \frac15\times \unif[0,1] + \frac3{10}\times\dirac{1}$ illustrated in panel (c) of Figure~\ref{fig:Counterexamples}, where $\delta_y$ denotes the Dirac measure in $y\in\R$. 
	The decompositions of the Cram\'er distance and the area validation metric between $F$ and $G$ are given by 
	\[
	\CD(F,G) = 0.002 + 0 + 0.019 + 0, \qquad \AV(F,G) = 0.02 + 0 + 0.08 + 0.
	\]
	Here, the dispersion accounts for only about 9\% of the Cram\'er distance, whereas it accounts for 20\% of the area validation metric.
	
\end{example}

\subsection{Counterexamples for the order relations}
\label{sec:ExamplesOrderRelations}

The following example illustrates the necessity of the common support assumption in Proposition~\ref{Thm:WeakStochPreorder} by showing that the weak stochastic order is not guaranteed to be a transitive relation in general.

\begin{figure}[tb]
	\centering
	\subcaptionbox{AVM decomposition}{
		\includegraphics[page = 1,width = 0.32\textwidth]{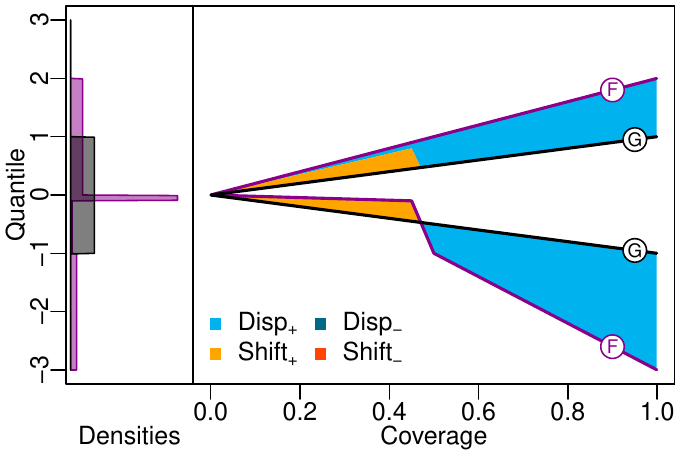}}
	\subcaptionbox{AVM decomposition}{
		\includegraphics[page = 2,width = 0.32\textwidth]{figures/FigS3.pdf}}
	\subcaptionbox{AVM decomposition}{
		\includegraphics[page = 3,width = 0.32\textwidth]{figures/FigS3.pdf}}
	\caption{Illustrations of AVM decompositions for the distributions given in Example \ref{Ex:IntransitiveCD}. Here, the same nonzero components arise in the $\CD$ decomposition. See Figure \ref{fig:Illustration_FoldedQuantiles} for a detailed description of the plots.}
	\label{Fig:IntransitiveCD}
\end{figure}

\begin{example}
	\label{Ex:IntransitiveCD}
	Consider the three distributions 
	\begin{align*}
		F &= \tfrac14\times\unif[-3,-1] + \tfrac1{40}\times\unif[-1,-0.1] + \tfrac9{40}\times\unif[-0.1,0] + \tfrac12\times\unif[0,3], \\
		G &= \unif[-2,2], \\
		H &= \tfrac12\times\unif[-3,0] + \tfrac9{40}\times\unif[0,0.1] + \tfrac1{40}\times\unif[0.1,1] + \tfrac12\times\unif[1,2],
	\end{align*}
	illustrated in Figure \ref{Fig:IntransitiveCD}, which do not have a common support, but continuous quantile functions.
	Their respective Cram\'er distance decompositions are approximately given by
	\begin{align*}
		&\CD(F,G) \approx 0.1336 + 0 + 0.8664 + 0, \\
		&\CD(G,H) \approx 0.1345 + 0 + 0 + 0.8655, \\
		&\CD(F,H) \approx 0.5615 + 0.4385 + 0 + 0.
	\end{align*}
	Thus, invoking part (a) of Theorem~\ref{Thm:StochOrderCDJoint} for all three comparisons, we have 
	$F \geq_\text{wS} G \geq_\text{wS} H$, but $F \not\geq_\text{wS} H$. 
	Hence, the weak stochastic order is not a transitive relation on arbitrary sets of distributions. 
\end{example}

The following example shows that in contrast to the \emph{weak} stochastic order (see Theorem~\ref{Thm:WeakStochPreorder}), a common support is not sufficient to establish transitivity for the \emph{relaxed} stochastic order.

\begin{figure}[tb]
	\centering
	\subcaptionbox{AVM decomposition}{
		\includegraphics[page = 1,width = 0.32\textwidth]{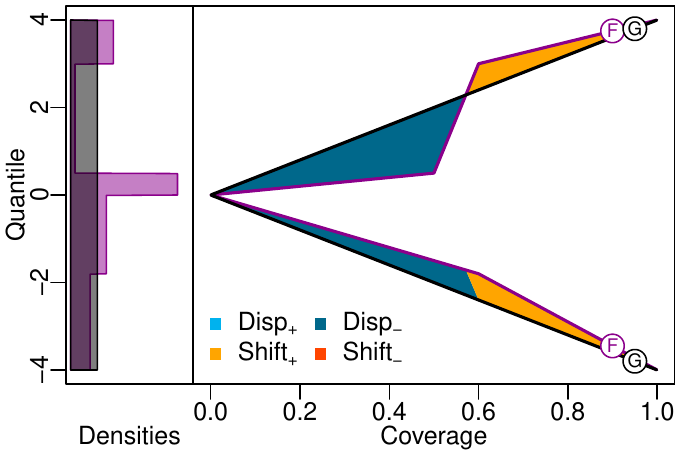}}
	\subcaptionbox{AVM decomposition}{
		\includegraphics[page = 2,width = 0.32\textwidth]{figures/FigS4.pdf}}
	\subcaptionbox{AVM decomposition}{
		\includegraphics[page = 3,width = 0.32\textwidth]{figures/FigS4.pdf}}
	\caption{Illustrations of AVM decompositions for the distributions given in Example \ref{Ex:Intransitive}. See Figure \ref{fig:Illustration_FoldedQuantiles} for a detailed description of the plots.}
	\label{Fig:Intransitive}
\end{figure}

\begin{example}
	\label{Ex:Intransitive}
	Consider the three distributions 
	\begin{align*}
		F &= \tfrac15\times\unif[-4,-1.8] + \tfrac3{10}\times\unif[-1.8,0] + \tfrac14\times\unif[0,0.5] + \tfrac1{20}\times\unif[0.5,3] + \tfrac15\times\unif[3,4], \\
		G &= \unif[-4,4], \\
		H &= \tfrac15\times\unif[-4,-3] + \tfrac1{20}\times\unif[-3,-0.5] + \tfrac14\times\unif[-0.5,0] + \tfrac3{10}\times\unif[0,1.8] + \tfrac15\times\unif[1.8,4],
	\end{align*}
	that are illustrated in Figure \ref{Fig:Intransitive}, which have common support and continuous quantile functions.
	The quantile spread plots show unique nonzero shift components of the AVM, when comparing $F$ and $G$ (left), and $G$ and $H$ (middle), which implies a relaxed stochastic ordering, $F \geq_\text{rS} G \geq_\text{rS} H$ by Theorem \ref{Thm:StochOrderAVMJoint}. However, a comparison of $F$ and $H$ (right) yields two nonzero shift components, hence $G \not\geq_\text{rS} H$ in relaxed stochastic order.
	Thus, the relaxed stochastic order is not a transitive relation on arbitrary sets of distributions with common support. 
	Notably, the CD decomposition produces two nonzero shift components, when comparing any of the three pairs, such that neither of the three comparisons yields a \emph{weak} stochastic ordering by Theorem~\ref{Thm:StochOrderCDJoint}.
\end{example}

\section{Computation and approximation}
\label{sec:Approximation}

When computing the distances and their decompositions numerically, two complications may arise. 
While in many cases standard numerical integration works fine when computing the required integrals, the nested double integral in the Cram\'er distance can result in numerical instabilities and long computation times, for example when dealing with discrete distributions.
Secondly, the distributions may only be partially specified via a limited number of known points on the distribution or quantile functions, as in the application of Section \ref{subsec:inflation}.

Here, we provide concrete formulas for discrete distributions in Section~\ref{Sec:Discrete} and outline two approximation strategies for scenarios with incomplete distributions:
In Section~\ref{subsec:inflation_approx}, we discuss an approximation based on linear interpolation of the quantile functions, whereas the approximation in Section~\ref{subsec:PiecewiseConstApprox} is based on a piecewise constant (nearest neighbor) interpolation.
We continue to derive upper and lower worst case bounds for the approximated distances and their decomposition components in Section~\ref{subsec:Bounds}.
Finally, Section~\ref{subsec:ApproxError} illustrates how these bounds can be leveraged for the construction of asymptotic approximation error rates.

As in our applications, we focus on the area validation metric and the Cram\'er distance in what follows, noting that analogous strategies and results for the higher-order Wasserstein distances are straight-forward generalizations of the case of the AVM.

\subsection{Discrete distributions with finite support}
\label{Sec:Discrete}

Here, we provide formulas for the AVM and the CD together with their decomposition terms for discrete distributions with finite support. In particular, we consider piecewise constant CDFs $F$ and $G$ that have a finite number of discontinuities (``jumps'') each, which may occur at different quantile levels.
Suppose the extended sequence of quantile levels $0 = \alpha_0 < \alpha_1 < \alpha_2 < \dots < \alpha_n < \alpha_{n+1} = 1$ contains all levels at which either quantile function exhibits a jump and, in addition, satisfies $\alpha_i = 1- \alpha_{n+1-i}$ (symmetry of levels about the median level) with $n$ being odd, i.e., $\alpha_{(n+1)/2} = \frac12$.
In detail, the set $\{\alpha_0, \dots,\alpha_{n+1}\}$ is constructed
as follows:
First, consider the sequence of all quantile levels at which either quantile function exhibits a jump. Second, for each jump level $\alpha$ add the level $1-\alpha$ to the sequence of levels if it is not yet contained in the sequence.
Third, add the median level $0.5$ (if not yet contained) as well as the boundary levels 0 and 1 to the sequence.
As such, all jump points of $F^{-1}$ and $G^{-1}$ are contained in the ordered sequence $\alpha_0, \dots,\alpha_{n+1}$, but neither quantile function necessarily jumps at all of these levels.

The distributions $F$ and $G$ are given by piecewise-constant quantile functions\footnote{Note that we do not restrict to a particular choice (such as the usual choice of a left-continuous quantile function) of the inherently set-valued quantile at the jump points $\alpha_i$, as it does not matter for the integrals in the distances and their decompositions.}
with
\begin{align*}
	F^{-1}(\alpha) = f_i
	\quad \text{and} \quad G^{-1}(\alpha) = g_i \qquad \text{ if } \alpha \in (\alpha_i,\alpha_{i+1})
\end{align*}
for $i = 0,1,\dots,n$ and some values
$f_0 \leq f_1 \leq f_2 \leq\dots \leq f_{n}$ and $g_0 \leq g_1 \leq g_2 \leq\dots \leq g_{n}$. 

With $\overline{a}_i = \alpha_{i+1} - \alpha_i$, the AVM reduces to
\[
\AV(F,G) = \sum_{i = 0}^n \overline{a}_i \vert f_i - g_i\vert,
\]
and the decomposition is given by 
\begin{align*}
	\disp_+^{\AV}(F,G) &= \sum_{i = 0}^{\lfloor \frac{n}{2}\rfloor} \overline{a}_i \left[(f_{n-i} - g_{n-i}) - (f_i - g_i)\right]_+, \\
	\shift_+^{\AV}(F,G) &= 2\sum_{i = 0}^{\lfloor \frac{n}{2}\rfloor} \overline{a}_i \left[\min\{f_{n-i} - g_{n-i}, f_i - g_i\}\right]_+,
\end{align*}
where the opposing components are obtained via the usual symmetry.

On the other hand, the CD reduces to
\[
\CD(F,G) = \sum_{i = j} \overline{a}_i\overline{a}_j \vert f_i - g_j \vert + 2\sum_{i < j} \overline{a}_i\overline{a}_j [f_i - g_j]_+ + 2\sum_{i > j} \overline{a}_i\overline{a}_j [g_j - f_i]_+
\]
with components given by
\begin{align*}
	\disp_+^{\CD}(F,G) &=  \sum_{j = 0}^{\lfloor \frac{n}{2}\rfloor} \Big(\overline{a}_j^2 \left[(f_{n-j} - g_{n-j}) - (f_j - g_j)\right]_+ + 2\overline{a}_j\sum_{i = j+1}^{\lfloor \frac{n}{2} \rfloor} \overline{a}_i \left[(f_{n-i} - g_{n-j}) - (f_i - g_j)\right]_+\Big), \\
	\shift_+^{\CD}(F,G) &=  \sum_{i=0}^{\lfloor \frac{n}{2}\rfloor} \sum_{j=0}^{\lfloor \frac{n}{2}\rfloor} 2\overline{a}_i\overline{a}_j \left(\left[\min\{f_{n-i} - g_{n-j}, f_i - g_j\}\right]_+ + \left[f_i - g_{n-j}\right]_+\right).
\end{align*}

\subsection{Piecewise linear approximation}
\label{subsec:inflation_approx}

Here, we outline an approximation based on a linear interpolation of the quantile functions that can be applied when the distributions $F$ and $G$ are only partially known
via a limited number of points on the distribution or quantile functions.
For instance in the application from Section \ref{subsec:inflation}, the assigned bin probabilities together with the bin boundaries (i.e., the values separating the bins) identify the distribution and quantile functions at these boundary values.
However, no further information on the quantile functions can be inferred from the responses.

In Figure~\ref{Fig:Inflation} of the main text, we illustrate the points where the distributions are known by the solid points in the quantile spread plots.
To compute the Cram\'er and Wasserstein distances and their associated decompositions in this case, the quantile functions need to be approximated. 
For this, we adopt a simple linear interpolation between the known quantiles of the distributions as illustrated in Figure~\ref{Fig:Inflation} with the straight lines connecting the points.
In terms of the underlying histogram forecasts, this corresponds to uniformly distributing the probability mass within each bin.

In the tails (i.e., the open bins on either side), the histogram survey methodology does not capture any information on the distributions apart from the total probability mass beyond the most extreme bin boundaries.
Hence, we adopt a conservative approach in order to avoid an overestimation of the distances and decomposition terms in the tails (where inherently little information is given):
In the upper tail, we use the larger of the two lower bounds of the open bins to limit the support of the distributions, thereby shrinking this bin to a point mass for one of the distributions (as shown in Figure~\ref{Fig:Inflation} in the density plots, which results in a horizontal segment in the quantile spread plots at large coverages).
Analogously, we use the smaller of the two upper bounds of the open bins capturing the lower tail probabilities to limit the support of the distributions from below, thereby shrinking the left-most bin to a point mass for one of the distributions.

\subsection{Piecewise constant approximation}
\label{subsec:PiecewiseConstApprox}
\sloppy

Here, we outline a different approximation method that relies on a piecewise constant (nearest neighbor) interpolation of the quantile functions. 
This approximation may be of use if both distributions are only known via the quantile values at a fixed (and typically identical) set of quantile levels. 
This is a common way of storing predictive distributions in a parsimonious format; see e.g., \citet{Cramer2022}.

As an alternative to the linear interpolation from the previous section, we propose to approximate the unknown quantile functions at each level via the closest known quantile value of the distributions here. The resulting approximate quantile functions give rise to discrete distributions, to which the formulas from Section \ref{Sec:Discrete} can be applied.
In some cases, applying this approach on a fine grid of levels may also help to avoid numerical instabilities or improve computational efficiency, especially for the CD.

Suppose the values of the quantile functions $F^{-1}$ and $G^{-1}$ are known at levels $0 \leq \beta_1 < \dots < \beta_K \leq 1$ and $0 \leq \gamma_1 < \ldots < \gamma_L \leq 1$, respectively. Let $\beta_0 = -\beta_1,\beta_{K+1} = 2-\beta_K, \gamma_0 = -\gamma_1, \gamma_{L+1} = 2-\gamma_L$.
We define two approximate discrete distributions $\widetilde{F}, \widetilde{G}$ with quantile functions given by
\begin{align*}
	\widetilde{F}^{-1}(\beta) &= F^{-1}(\beta_k) \quad \text{for} \quad \beta \in \left(\frac{\beta_k + \beta_{k-1}}{2}, \frac{\beta_k + \beta_{k+1}}{2}\right), \qquad \text{and} \\
	\widetilde{G}^{-1}(\gamma) &= G^{-1}(\gamma_\ell)  \quad \text{for} \quad  \gamma \in \left(\frac{\gamma_\ell + \gamma_{\ell-1}}{2}, \frac{\gamma_\ell + \gamma_{\ell+1}}{2}\right).
\end{align*}
Then, the distances and their decompositions can be approximated by applying the formulas provided in Section \ref{Sec:Discrete} to the approximate distributions $\widetilde{F}$ and $\widetilde{G}$. In particular, we propose the approximations
\[
\AV(F,G) \approx \AV(\widetilde{F},\widetilde{G}),\quad \shift_\pm^{\AV}(F,G) \approx  \shift_\pm^{\AV}(\widetilde{F},\widetilde{G}),\quad \disp_\pm^{\AV}(F,G) \approx  \disp_\pm^{\AV}(\widetilde{F},\widetilde{G})
\]
for the AVM and its decomposition, and analogously for the CD,
\[
\CD(F,G) \approx \CD(\widetilde{F},\widetilde{G}),\quad \shift_\pm^{\CD}(F,G) \approx \shift_\pm^{\CD}(\widetilde{F},\widetilde{G}),\quad \disp_\pm^{\CD}(F,G) \approx \disp_\pm^{\CD}(\widetilde{F},\widetilde{G}).
\]

\subsection{Bounds}
\label{subsec:Bounds}

In this subsection, we provide bounds for distances and components in settings with incomplete distributions. These bounds are used to study the error of approximations such as the ones proposed above in the next section. For ease of exposition, we only consider a simplified setting. However, generalizations to the more general settings discussed in the previous section are straightforward.
As with the approximation of the previous section, the bounds are obtained from certain discrete distributions that are derived from the given quantiles to which the formulas from Section \ref{Sec:Discrete} are applied. Notably, we only treat the case of distributions with bounded support as otherwise the unknown tail behavior could theoretically inflate the distances arbitrarily. To circumvent this issue when considering distributions with unbounded support, one may consider a censored distance, where the distributions are bounded by censoring upper and lower tails at some mutual cutoff values.

Let $F$ and $G$ be known only via the quantiles at levels $0 < \beta_1 < \beta_2 < \dots \beta_k < 1$ with
\begin{align*}
	F^{-1}(\beta_i) = F_i, \qquad
	G^{-1}(\beta_i) = G_i  
\end{align*}
for $i = 1,\dots, k$.
In what follows, we assume that the distributions have a common bounded support $[l,u]$, and we set $F_0 = G_0 = l$, $F_{k+1} = G_{k+1} = u$, $\beta_0 = 0$, and $\beta_{k+1} = 1$. Furthermore, we assume that the levels are symmetric about the median level, i.e., $\beta_i = 1- \beta_{k+1-i}$.

\begin{figure}[tb!]
	\centering
	\subcaptionbox{AVM upper bound}{
		\includegraphics[page = 1,width = 0.4\textwidth]{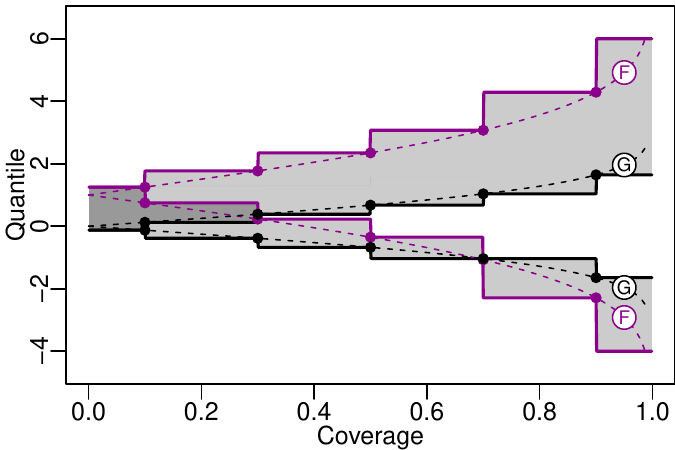}}
	\subcaptionbox{AVM lower bound}{
		\includegraphics[page = 2,width = 0.4\textwidth]{figures/FigS5.pdf}} \bigskip\\
	\subcaptionbox{AVM Shift upper bound}{
		\includegraphics[page = 3,width = 0.4\textwidth]{figures/FigS5.pdf}} 
	\subcaptionbox{AVM Dispersion upper bound}{
		\includegraphics[page = 4,width = 0.4\textwidth]{figures/FigS5.pdf}}
	\caption{Exemplary quantile spread plots illustrating upper and lower bounds for AVM and its components obtained from a a finite number of known quantiles (here, $n-1 = 10$) given by the solid dots from two unknown normal distributions (as in Figure \ref{fig:Illustration_FoldedQuantiles}). Panel (a) illustrates the upper bound \eqref{Eq:AVM_upper} for the overall AVM while panel (b) illustrates the lower bound \eqref{Eq:AVM_lower}. Panel (c) illustrates the upper bound $\eqref{Eq:AVM_Shift_upper}$ for the shift term (orange area). Note that the same distributions are used for the lower bound of the opposing shift term (here, the lower bound is zero). Analogously, panel (d) illustrates the upper bound $\eqref{Eq:AVM_Disp_upper}$ for the dispersion term (blue area). Again, the same distributions are used for the lower bound of the opposing dispersion term (here, the lower bound is again zero).
	}
	\label{Fig:Bounds}
\end{figure}

\newcommand{\up}{\max}
\newcommand{\lo}{\min}

The distance between the two distributions is maximized by moving the quantiles in between the known values as far away from one another as possible without violating the monotonicity of a quantile function. Thus, we obtain the distributions $F_{\up}, G_{\up}$ with quantile functions\footnote{Note that these quantile functions (and the following) are not necessarily left-continuous, as left-continuity may be violated when setting $F_{\up}^{-1}(\beta_i) = F_i$ for example. However, there are left-continuous quantile functions with the known quantiles that are arbitrarily close to these quantile functions.}
given by 
\begin{align*}
	F_{\up}^{-1}(\alpha) = \begin{cases}
		F_{i+1}& \text{if } F_{i+1} - G_{i} \geq G_{i+1} - F_{i}, \\
		F_{i}& \text{otherwise,}
	\end{cases}\qquad
	G_{\up}^{-1}(\alpha) = \begin{cases}
		G_{i}& \text{if } F_{i+1} - G_{i} \geq G_{i+1} - F_{i}, \\
		G_{i+1}& \text{otherwise,}
	\end{cases}
\end{align*}
for $\alpha \in (\beta_i,\beta_{i+1})$ that maximize the distance between distributions subject to having the known quantiles. 

Consequently, the AVM between the (unknown) distributions $F$ and $G$ is bounded by the distance between $F_{\up}$ and $G_{\up}$, as illustrated in Figure \ref{Fig:Bounds}(a). With $\overline{b}_i = \beta_{i+1} - \beta_i$, we obtain the tight upper bound
\begin{align}
	\AV(F,G) \leq \AV(F_{\up},G_{\up}) = \sum_{i = 0}^{k} \overline{b}_i \max\{F_{i+1} - G_{i}, G_{i+1} - F_{i}\}. \label{Eq:AVM_upper}
\end{align}
Analogously, we can move the qantiles as close as possible together to obtain a lower bound for the distance. We obtain the distributions $F_{\lo},G_{\lo}$ with quantiles at levels $\alpha \in (\beta_{i},\beta_{i+1})$ given by
\begin{align*}
	F_{\lo}^{-1}(\alpha) = \begin{cases}
		F_{i}& \text{if } F_{i} \geq G_{i}, \\
		\min\{F_{i+1},G_{i}\}& \text{otherwise},
	\end{cases}\qquad
	G_{\lo}^{-1}(\alpha) = \begin{cases}
		\min\{G_{i+1},F_{i}\}& \text{if } F_{i} \geq G_{i}, \\
		G_{i}& \text{otherwise},
	\end{cases}
\end{align*}
which minimize the distance subject to attaining the known quantiles.
The AVM between $F$ and $G$ is bounded by the distance between $F_{\lo}$ and $G_{\lo}$ from below, as illustrated in Figure \ref{Fig:Bounds}(b). Thus, we obtain the tight lower bound
\begin{align}
	\AV(F,G) \geq \AV(F_{\lo},G_{\lo}) = \sum_{i = 0}^{k} \overline{b}_i \left(F_{i} - \min\{G_{i+1},F_{i}\} + G_{i} - \min\{F_{i+1},G_{i}\}\right). \label{Eq:AVM_lower}
\end{align}
In a similar spirit, we may define distributions that maximize or minimize a particular component. The plus shift component, $\shift_+^{\D}$, may be bounded (from above) by shifting the quantiles of $F$ up as much as possible and the quantiles of $G$ down as much as possible, as illustrated in Figure \ref{Fig:Bounds}(c). On the other hand, it may be bounded from below by shifting the quantiles of $F$ down as much as possible and the quantiles of $G$ up as much as possible. Naturally, these distributions also yield bounds for the minus component. 
Thus, we define the distributions $F_\uparrow, F_\downarrow, G_\uparrow, G_\downarrow$ with quantiles at levels $\alpha \in (\beta_i,\beta_{i+1}), i = 0,\dots,k$ given by
\begin{align*}
	F_\uparrow^{-1}(\alpha) = F_{i+1}, \quad
	F_\downarrow^{-1}(\alpha) = F_i, \quad
	G_\uparrow^{-1}(\alpha) = G_{i+1}, \quad
	G_\downarrow^{-1}(\alpha) = G_i,
\end{align*}
respectively. With $\overline{a}_i = \overline{b}_i$ for $i \neq \frac k2$ and $\overline{a}_{k/2} = \frac{\overline{b}_{k/2}}2$ if $k$ is even, we obtain the bounds for the plus shift component as
\begin{align}
	\shift_+^{\AV}(F,G) &\geq \shift_+^{\AV}(F_\downarrow,G_\uparrow) = 2\sum_{i = 0}^{\lfloor \frac{k}{2}\rfloor} \overline{a}_i \left[\min\{F_{k-i} - G_{k-i + 1}, F_i - G_{i+1}\}\right]_+, \label{Eq:AVM_Shift_upper}\\
	\shift_+^{\AV}(F,G) &\leq \shift_+^{\AV}(F_\uparrow,G_\downarrow) = 2\sum_{i = 0}^{\lfloor \frac{k}{2}\rfloor} \overline{a}_i \left[\min\{F_{k-i + 1} - G_{k-i}, F_{i+1} - G_i\}\right]_+. \notag
\end{align}
The minus component is bounded via the usual symmetry.

\newcommand{\wide}{\longleftrightarrow}
\newcommand{\narrow}{\rightarrow\leftarrow}

The dispersion components are bounded by comparing a distribution with central intervals as wide as possible with a distribution with intervals as narrow as possible, as illustrated in Figure \ref{Fig:Bounds}(d). We define the distributions $F_{\wide}, F_{\narrow}, G_{\wide}, G_{\narrow}$ with quantiles at levels $\alpha \in (\beta_i,\beta_{i+1}), i = 0,\dots,k$ given by
\begin{align*}
	F_{\wide}^{-1}(\alpha) &= \begin{cases}
		F_{i+1},& \alpha > \frac12, \\
		F_i,& \text{otherwise},
	\end{cases} \quad
	F_{\narrow}^{-1}(\alpha) = \begin{cases}
		F_i,& \alpha > \frac12, \\
		\min\{F_{i+1},F_{n-i}\},& \text{otherwise},
	\end{cases} \\
	G_{\wide}^{-1}(\alpha) &= \begin{cases}
		G_{i+1},& \alpha > \frac12, \\
		G_i,& \text{otherwise},
	\end{cases} \quad
	G_{\narrow}^{-1}(\alpha) = \begin{cases}
		G_i,& \alpha > \frac12, \\
		\min\{G_{i+1},G_{n-i}\},& \text{otherwise}.
	\end{cases}
\end{align*}
Thus, we obtain the bounds for the dispersion component as 
\begin{align}
	\disp_+^{\AV}(F,G) &\geq \disp_+^{\AV}(F_{\narrow},G_{\wide}) = \sum_{i = 0}^{\lfloor \frac{k}{2}\rfloor} \overline{a}_i \left[(F_{k-i} - G_{k-i+1}) - (\min\{F_{i+1},F_{k-i}\} - G_i)\right]_+, \label{Eq:AVM_Disp_upper}\\
	\disp_+^{\AV}(F,G) &\leq \disp_+^{\AV}(F_{\wide},G_{\narrow}) = \sum_{i = 0}^{\lfloor \frac{k}{2}\rfloor} \overline{a}_i \left[(F_{k-i+1} - G_{k-i}) - (F_i - \min\{G_{i+1},G_{k-i}\})\right]_+. \notag
\end{align}

Lower and upper bounds for the other distances and their components are obtained analogously, e.g., for the CD, we obtain the bounds
\begin{align*}
	&\CD(F,G) \leq \CD(F_{\up},G_{\up}) \\ 
	& \hspace{5mm} = \sum_{i = j} \overline{b}_i\overline{b}_j \max\{F_{j+1} - G_i, G_{i+1} - F_j\} + \sum_{i > j} 2\overline{b}_i\overline{b_j}[F_{j} - G_{i+1}]_+ + \sum_{i < j} 2\overline{b}_i\overline{b_j}[G_{i} - F_{j+1}]_+, \\
	&\CD(F,G) \geq \CD(F_{\lo},G_{\lo}) \\ 
	& \hspace{5mm} = \sum_{i = j} \overline{b}_i\overline{b}_j \left(F_{j} - \min\{G_{i+1},F_{j}\} + G_{i} - \min\{F_{j+1},G_{i}\}\right) + \sum_{i > j} 2\overline{b}_i\overline{b_j}[F_{j+1} - G_i]_+  + \sum_{i < j} 2\overline{b}_i\overline{b_j}[G_{i+1} - F_j]_+. 
	\end{align*}

\subsection{Approximation error}
\label{subsec:ApproxError}

An approximation formula chooses distributions $F_\approx, G_\approx$ that respect the known quantiles, thus producing approximations in between the above bounds. Thus, the approximation error is bounded by the deviation between the lower and upper bounds for the distance or one of its components. For the AVM, we obtain (via a simple case distinction)
\begin{align*}
	&\vert\AV(F,G) - \AV(F_\approx,G_\approx) \vert \leq \AV(F_{\up},G_{\up}) - \AV(F_{\lo},G_{\lo}) \\
	&= \sum_{i = 0}^{k} \overline{b}_i \left(\max\{F_{i+1} - G_{i}, G_{i+1} - F_{i}\} -  F_{i} + \min\{G_{i+1},F_{i}\} - G_{i} + \min\{F_{i+1},G_{i}\}\right) \\
	&=  \sum_{i = 0}^{k} \overline{b}_i \left(\max\{F_{i+1} - F_{i} + \min\{F_{i},G_{i+1}\} - G_{i}, G_{i+1} - G_{i} + \min\{G_{i},F_{i+1}\} - F_{i}\}\right) \\
	&\leq \sum_{i = 0}^{k} \overline{b}_i \left(\max\{F_{i+1} - F_{i} + G_{i+1} - G_{i}, G_{i+1} - G_{i} + F_{i+1} - F_{i}\}\right) \\
	&= \sum_{i = 0}^{k} \overline{b}_i \left(F_{i+1} - F_{i} + G_{i+1} - G_{i}\right) \leq 2(u - l) \max_i \overline{b}_i = \mathcal{O}(k^{-1})
\end{align*}
if $\max_i \overline{b}_i = \mathcal{O}(k^{-1})$ (e.g., for equidistant quantile levels $a_i = \frac1k$).

Similar considerations can be applied to the components. Let $I= \{i \in \{0,1,\dots,\lfloor \tfrac{k}{2}\rfloor\}$ and define
\[I_1 = \{i \in I \mid F_{k-i} - G_{k-i+1} \leq F_{i} - G_{i+1}\},\quad I_2 = I \setminus I_1.\] 
For the plus shift component of the AVM, we obtain  
\begin{align*}
	&\vert \shift_+^{\AV}(F,G) - \shift_+^{\AV}(F_\approx,G_\approx)\vert \leq \shift_+^{\AV}(F_\uparrow,G_\downarrow) - \shift_+^{\AV}(F_\downarrow,G_\uparrow) \\
	&= 2\sum_{i = 0}^{\lfloor \frac{k}{2}\rfloor} \overline{a}_i \left( \left[\min\{F_{k-i+1} - G_{k-i}, F_{i+1} - G_i\}\right]_+ - \left[\min\{F_{k-i} - G_{k-i+1}, F_{i} - G_{i+1}\}\right]_+\right) \\
	&\leq 2\sum_{i \in I_1} \overline{a}_i \left( \left[F_{k-i+1} - G_{k-i}\right]_+ - \left[F_{k-i} - G_{k-i+1}\right]_+\right) + 2\sum_{i \in I_2} \overline{a}_i \left( \left[F_{i+1} - G_i\right]_+ - \left[F_{i} - G_{i+1}\right]_+\right) \\
	&\leq 2\sum_{i \in I_1} \overline{a}_i \left(F_{k-i+1} - F_{k-i} + G_{k-i+1} - G_{k-i}\right) + 2\sum_{i \in I_2} \overline{a}_i \left(F_{i+1} - F_{i} + G_{i+1} - G_i\right) \\
	&\leq 2\sum_{i = 0}^k \overline{b}_i \left( F_{i+1} - F_{i} + G_{i+1} - G_i\right) \leq 4(u-l)\max_i\overline{b}_i = \mathcal{O}(k^{-1})
\end{align*}
if $\max_i \overline{b}_i = \mathcal{O}(k^{-1})$.

For the dispersion component of the AVM, we obtain
\begin{align*}
	&\vert \disp_+^{\AV}(F,G) - \disp_+^{\AV}(F_\approx,G_\approx)\vert \leq \disp_+^{\AV}(F_{\wide},G_{\narrow}) - \disp_+^{\AV}(F_{\narrow},G_{\wide}) \\
	&= \sum_{i = 0}^{\lfloor \frac{k}{2}\rfloor} \overline{a}_i \left(\left[(F_{k-i+1} - G_{k-i}) - (F_i - \min\{G_{i+1},G_{k-i}\})\right]_+ - \left[(F_{k-i} - G_{k-i+1}) - (\min\{F_{i+1}, F_{k-i}\} - G_{i})\right]_+\right) \\
	&\leq \sum_{i = 0}^{\lfloor \frac{k}{2}\rfloor} \overline{a}_i \left((F_{k-i+1} - G_{k-i}) - (F_i - \min\{G_{i+1},G_{k-i}\}) - (F_{k-i} - G_{k-i+1}) - (\min\{F_{i+1}, F_{k-i}\} - G_{i})\right) \\
	&= \sum_{i = 0}^{\lfloor \frac{k}{2}\rfloor} \overline{a}_i \left(F_{k-i+1} - F_{k-i} + \min\{F_{i+1}, F_{k-i}\} - F_i + G_{k-i+1} - G_{k-i} + \min\{G_{i+1},G_{k-i}\} - G_{i}\right) \\
	&= \sum_{i = 0}^k \overline{b}_i (F_{i+1} - F_{i} + G_{i+1} - G_i) \leq 2(u-l)\max_i\overline{b}_i = \mathcal{O}(k^{-1})
\end{align*}
if $\max_i \overline{b}_i = \mathcal{O}(k^{-1})$.

Analogous considerations apply to the approximation error for the other distances and their components.
For instance for the CD, we obtain
\begin{align*}
	&\vert\CD(F,G) - \CD(F_\approx,G_\approx)\vert \leq \CD(F_{\up},G_{\up}) - \CD(F_{\lo},G_{\lo}) \\
	&\leq \sum_{i = j} \overline{b}_i\overline{b}_j (F_{j+1} - F_j + G_{i+1} - G_i) + \sum_{i \neq j} 2\overline{b}_i\overline{b_j} (F_{j+1} - F_j + G_{i+1} - G_i) \\
	&\leq 4k(u-l)(\max_i \overline{b_i})^2 = \mathcal{O}(k^{-1})
\end{align*}
as long as $\max_i \overline{b}_i = \mathcal{O}(k^{-1})$.

\section{Derivation of the decompositions}
\label{sec:ProofsDecomposition}

A simple case distinction yields the following lemma, which is used to derive the decompositions.

\begin{lemma}
	\label{Lem:Simple}
	For real numbers $A,B \in \mathbb{R}$ and $[A]_+ := \max(A,0)$, we have 
	\begin{align*}
		[A]_+ + [B]_+ = [A + B]_+ + \min \{A, -B\}]_+ + [\min \{-A,B\}]_+.
	\end{align*}
\end{lemma}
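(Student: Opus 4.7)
The identity is symmetric in a sense and piecewise linear in $(A,B)$, so the cleanest route is a direct case analysis on the signs of $A$ and $B$, splitting into four regions. In each region I would compute both sides and verify equality; the only regions requiring a further sub-case split are the mixed-sign cases, where the sign of $A+B$ matters.

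Concretely, in the region $A\geq 0,\,B\geq 0$, the left-hand side equals $A+B$, while on the right $[A+B]_+=A+B$ and both $\min\{A,-B\}\leq 0$ and $\min\{-A,B\}\leq 0$ kill the remaining two terms. In the fully negative region $A<0,\,B<0$, both sides are zero since all three maxima on the right vanish. In the mixed case $A\geq 0,\,B<0$ the left-hand side equals $A$; the term $[\min\{-A,B\}]_+$ is zero because $-A\leq 0$, and one then distinguishes the sub-cases $A+B\geq 0$ (so $[A+B]_+=A+B$ and $\min\{A,-B\}=-B$, summing to $A$) and $A+B<0$ (so $[A+B]_+=0$ and $\min\{A,-B\}=A$, again giving $A$). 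The remaining mixed case $A<0,\,B\geq 0$ follows by swapping the roles of $A$ and $B$, which exchanges the two $\min$-terms on the right-hand side, so no new computation is needed.

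The main (only) obstacle is notational bookkeeping: one must be careful that in each sub-case exactly one of the two $[\min\{\cdot,\cdot\}]_+$ contributions is nonzero, and that the swap in the final case genuinely maps one side of the identity to the other. Because all four cases reduce to elementary comparisons of two real numbers, no analytic machinery is needed, and the proof amounts to a short table. An alternative, slightly slicker route would be to use $[x]_+=\tfrac12(x+|x|)$ and the identity $\min\{x,y\}=\tfrac12(x+y-|x-y|)$ to reduce everything to an identity in absolute values, but this in turn still requires a sign case analysis to remove the absolute values, so it offers no real shortcut.
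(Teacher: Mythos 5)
Your proposal is correct and follows essentially the same route as the paper's own proof: a four-way case split on the signs of $A$ and $B$, with the mixed-sign case subdivided according to the sign of $A+B$ (equivalently, whether $A \geq -B$), and the final case handled by interchanging $A$ and $B$. All the individual computations check out, so nothing further is needed.
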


\begin{proof}[Proof of Lemma \ref{Lem:Simple}]
	The statement of Lemma \ref{Lem:Simple} follows from a simple case distinction:
	\begin{itemize}
		\item If $A,B > 0$, the statement is immediate as $[A]_+ + [B]_+ = A+B$ and $[A + B]_+ + [\min \{A, -B\}]_+ + [\min \{-A,B\}]_+ = A+B + 0 + 0$.
		\item If $A,B \le 0$, then $[A]_+ + [B]_+ = 0 + 0$ and $[A + B]_+ + [\min \{A, -B\}]_+ + [\min \{-A,B\}]_+ = 0 + 0 + 0$.
		\item If $A > 0$ and $B \le 0$, we get that $[A]_+ + [B]_+ = A + 0 = A$.
		If additionally $A \ge -B$, we get  $[A + B]_+ + [\min \{A, -B\}]_+ + [\min \{-A,B\}]_+ = (A+B) + (-B) + 0 = A$.
		Similarly, if $A < -B$, we get  $[A + B]_+ + [\min \{A, -B\}]_+ + [\min \{-A,B\}]_+ = 0 + A + 0 = A$.
		\item The result for $A \le 0$ and $B > 0$ follows from the previous case by interchanging $A$ and $B$.
	\end{itemize}
\end{proof}

\begin{proof}[Proof of Proposition \ref{prop:AVMExact}]
	The result follows directly from (the proof of) Proposition \ref{prop:WDExact} for $p=1$.
\end{proof}

\begin{proof}[Proof of Proposition \ref{prop:WDExact}]
	As in \eqref{Eq:WD_Shift}--\eqref{Eq:WD_Disp}, we denote by $z^{[p]} = \sgn(z) \cdot\vert z\vert^p$ the \emph{signed $p$-th power} of a number $z \in \R$. Let
	\begin{align*}
		A &= A(\alpha) = \Big(F^{-1}\left( \tfrac{1-\alpha}{2} \right) - G^{-1}\left( \tfrac{1-\alpha}{2} \right) \Big)^{[p]}, \\
		B &= B(\alpha) = \Big( F^{-1}\left( \tfrac{1+\alpha}{2} \right) - G^{-1}\left( \tfrac{1+\alpha}{2} \right) \Big)^{[p]},
	\end{align*}
	denote the signed powers of the difference between the lower and upper ends of the central intervals of $F$ and $G$, respectively.
	Then
	\begin{align*}
		\WD_p(F,G) &= \int_0^1 \big\vert F^{-1}(\tau) - G^{-1}(\tau) \big\vert^p \dd \tau \\
		&= \int_0^{0.5} \big\vert F^{-1}(\tau) - G^{-1}(\tau) \big\vert^p \dd \tau + \int_{0.5}^1 \big\vert F^{-1}(\tau) - G^{-1}(\tau) \big\vert^p \dd \tau \\
		&= \frac12\int_0^1 \big\vert F^{-1}(\tfrac{1-\alpha}{2}) - G^{-1}(\tfrac{1-\alpha}{2}) \big\vert^p \dd \alpha + \frac12\int_0^1 \big\vert F^{-1}(\tfrac{1+\alpha}{2}) - G^{-1}(\tfrac{1+\alpha}{2}) \big\vert^p \dd \alpha \\
		&= \frac12 \int_0^1 \big\vert A(\alpha) \big\vert + \big\vert B(\alpha) \big\vert \dd \alpha \\
		&= \frac12 \int_0^1 [A(\alpha)]_+ + [-A(\alpha)]_+ + [B(\alpha)]_+ + [-B(\alpha)]_+ \dd \alpha \\ 
		&= \frac12 \int_0^1 \big[B(\alpha) - A(\alpha)\big]_+ + \big[A(\alpha) - B(\alpha)\big]_+ + 2\big[\min\{A(\alpha),B(\alpha)\}\big]_+ + 2\big[\min\{-A(\alpha),-B(\alpha)\}\big]_+ \dd \alpha,
	\end{align*}
	where the third equality is obtained by a simple change of variables and the last equality is obtained by applying Lemma \ref{Lem:Simple} to $[A(\alpha)]_+ + [-B(\alpha)]_+$ and $[-A(\alpha)]_+ + [B(\alpha)]_+$.
	
	Thus, the components in \eqref{Eq:WD_Shift}--\eqref{Eq:WD_Disp} are obtained by assigning the summands in the above equation as follows:
	\begin{align*}
		\shift_+^{\WD_p}(F,G) &= \frac12 \int_0^1 2\big[\min\{A(\alpha), B(\alpha) \} \big]_+ \dd \alpha \\ 
		&= \int_0^1  \left[\min \left\{ \left(F^{-1}\left( \tfrac{1-\alpha}{2} \right) - G^{-1}\left( \tfrac{1-\alpha}{2} \right) \right)^{[p]}, \; \left( F^{-1}\left( \tfrac{1+\alpha}{2} \right) - G^{-1}\left( \tfrac{1+\alpha}{2} \right) \right)^{[p]} \right\}\right]_+  \dd\alpha, \\
		\shift_-^{\WD_p}(F,G) &= \frac12\int_0^1 2 \big[\min\{-A(\alpha),-B(\alpha) \}]_+ \dd \alpha = \shift_+^{\WD_p}(G,F),
	\end{align*}
	and
	\begin{align*}
		\disp_+^{\WD_p}(F,G) &= \frac12\int_0^1 \big[B(\alpha) - A(\alpha) \big]_+ \dd\alpha \\
		&= \frac12 \int_0^1 \left[ \left( F^{-1} \left( \tfrac{1+\alpha}{2} \right) - G^{-1}\left( \tfrac{1+\alpha}{2} \right)\right)^{[p]} - \left(F^{-1}\left( \tfrac{1-\alpha}{2} \right) - G^{-1}\left( \tfrac{1-\alpha}{2} \right) \right)^{[p]} \right]_+ \dd\alpha,  \\
		\disp_-^{\WD_p}(F,G) &= \frac12\int_0^1 \big[A(\alpha) - B(\alpha) \big]_+ \dd\alpha = \disp_+^{\WD_p}(G,F).
	\end{align*}
\end{proof}

\begin{proof}[Proof of Proposition \ref{prop:CDQuantileRep}]
	We show the desired equality in three steps, which we treat individually below:
	\begin{align}
		\int_{-\infty}^\infty (F(x) - G(x))^2\dd x
		&= \int_0^1\int_{-\infty}^\infty (F(x) - \one\{G^{-1}(\xi) \leq x\})^2 - (G(x) - \one\{G^{-1}(\xi) \leq x\})^2 \dd x\dd \xi \label{Eq:CD_quantile_step1} \\
		&= 2 \int_0^1 \int_0^1 (\one\{G^{-1}(\xi) \leq F^{-1}(\tau)\} - \tau)(F^{-1}(\tau) - G^{-1}(\xi)) \label{Eq:CD_quantile_step2} \\
		&~~~~~~~~~~~~~~~ - (\one\{G^{-1}(\xi) \leq G^{-1}(\tau)\} - \tau)(G^{-1}(\tau) - G^{-1}(\xi)) \dd\tau\dd\xi \notag \\
		&= 2 \int_0^1 \int_0^1\chi(\tau,\xi)\vert F^{-1}(\tau) - G^{-1}(\xi)\vert\dd\xi \dd \tau.
		\label{Eq:CD_quantile_step3} 
	\end{align}
	The first step \eqref{Eq:CD_quantile_step1} essentially rewrites the Cram\'er distance as the divergence function associated with the continuous ranked probability score \citepsupp[CRPS;][]{GR07}.
	By noting that $G(z)
	= \int_0^1 \one\{\xi \leq G(z)\}\dd\xi = \int_0^1\one\{G^{-1}(\xi) \leq z\}\dd\xi$, we obtain
	\begin{align*}
		&\int_0^1\int_{-\infty}^\infty (F(x) - \one\{G^{-1}(\xi) \leq x\})^2 - (G(x) - \one\{G^{-1}(\xi) \leq x\})^2 \dd x\dd \xi \\ 
		&= \int_{-\infty}^\infty\int_0^1 F(x)^2 - 2F(x)\one\{G^{-1}(\xi) \leq x\} + \one\{G^{-1}(\xi) \leq x\}^2 - G(x)^2 + 2G(x)\one\{G^{-1}(\xi) \leq x\} - \one\{G^{-1}(\xi) \leq x\}^2 \dd \xi\dd x \\
		&=\int_{-\infty}^\infty F(x)^2 - 2F(x)\int_0^1\one\{G^{-1}(\xi) \leq x\}\dd\xi + 2G(x)\int_0^1\one\{G^{-1}(\xi) \leq x\}\dd\xi - G(x)^2\dd x \\
		&= \int_{-\infty}^\infty F(x)^2 - 2F(x)G(x) + 2G(x)G(x) - G(x)^2\dd x \\
		&=\int_{-\infty}^\infty (F(x) - G(x))^2\dd x,
	\end{align*}
	i.e., the first equality in \eqref{Eq:CD_quantile_step1}.
	
	The second step \eqref{Eq:CD_quantile_step2} is essentially equivalent to rewriting the CRPS in terms of quantile scores \citepsupp{GR11} and proceeds as suggested by \citetsupp[][Eq.\ (6.4) and (6.5)]{Jordan2016}:
	To derive the second equality, it suffices to rewrite the inner integral in \eqref{Eq:CD_quantile_step2}. We write $y = G^{-1}(\xi)$ for $\xi \in (0,1)$ for ease of exposition. Note that the integral in \eqref{Eq:CD_quantile_step2} is made up of two similar terms, where the second term is rewritten analogously by replacing $F^{-1}(\tau)$ with $G^{-1}(\tau)$ in the following. With these remarks, we obtain the second equality from
	\begin{align*}
		&\int_0^1 2(\one\{y \leq F^{-1}(\tau)\} - \tau)(F^{-1}(\tau) - y) \dd\tau \\    
		&= \int_0^1 \int_y^{F^{-1}(\tau)} 2(\one\{y \leq F^{-1}(\tau)\} - \tau)\dd x \dd\tau\\
		&= \int_0^1\int_{-\infty}^\infty  2(\one\{x < F^{-1}(\tau)\} - \one\{x < y\})(\one\{y \leq F^{-1}(\tau) \} -\tau)\dd x \dd \tau \\
		&= \int_{-\infty}^\infty \int_0^1 2(\one\{x < F^{-1}(\tau)\} - \one\{x < y\})(\one\{y \leq F^{-1}(\tau) \} -\tau)\dd\tau \dd x \\
		&= \int_{-\infty}^y \int_0^1 \underbrace{2(-\one\{x \geq F^{-1}(\tau)\})(\one\{y \leq F^{-1}(\tau) \}-\tau)}_{\neq 0, \text{ only if } F^{-1}(\tau) \leq x \leq y \text{ for } x\in(-\infty,y]}\dd\tau \dd x + \int_y^\infty \int_0^1 \underbrace{2(\one\{x < F^{-1}(\tau)\})(\one\{y \leq F^{-1}(\tau) \}-\tau)}_{\neq 0, \text{ only if } y \leq x <F^{-1}(\tau)\text{ for } x\in[y,\infty)}\dd\tau \dd x \\
		&= \int_{-\infty}^y \int_0^1 2(-\one\{F(x) \geq \tau\})(0-\tau)\dd\tau \dd x + \int_y^\infty \int_0^1 2(\one\{F(x) < \tau\})(1-\tau)\dd\tau \dd x \\
		&= \int_{-\infty}^y \int_0^{F(x)}2\tau\dd\tau \dd x + \int_y^\infty \int_{F(x)}^1 2(1-\tau)\dd\tau \dd x \\
		&= \int_{-\infty}^y F(x)^2 \dd x + \int_y^\infty (1-F(x))^2 \dd x \\
		&= \int_{-\infty}^\infty (F(x) - \one\{y \leq x\})^2 \dd x.
	\end{align*}
	The third step \eqref{Eq:CD_quantile_step3} further simplifies the quantile-based representation of the CD to derive the concise formula presented in the proposition:
	\begin{align*}
		2 \int_0^1 &\int_0^1 (\one\{G^{-1}(\xi) \leq F^{-1}(\tau)\} - \tau)(F^{-1}(\tau) - G^{-1}(\xi)) - (\one\{G^{-1}(\xi) \leq G^{-1}(\tau)\} - \tau)(G^{-1}(\tau) - G^{-1}(\xi)) \dd\tau\dd\xi \\
		= 2 \int_0^1 &\int_0^\tau (\underbrace{\one\{G^{-1}(\xi) \leq F^{-1}(\tau)\}}_{= 1 - \chi(\tau,\xi), \text{ as } \xi \leq \tau} - \tau)(F^{-1}(\tau) - G^{-1}(\xi)) - (1-\tau)(G^{-1}(\tau) - G^{-1}(\xi)) \dd\xi \\
		+ &\int_\tau^1 (\underbrace{\one\{G^{-1}(\xi) \leq F^{-1}(\tau)\}}_{= \chi(\tau,\xi), \text{ as } \xi \geq \tau} - \tau)(F^{-1}(\tau) - G^{-1}(\xi)) + \tau (G^{-1}(\tau) - G^{-1}(\xi)) \dd\xi\dd \tau \\
		= 2 \int_0^1 &\underbrace{\int_0^\tau (1 - \tau)(F^{-1}(\tau) - G^{-1}(\tau))\dd\xi}_{= \tau(1 - \tau)(F^{-1}(\tau) - G^{-1}(\tau))} - \int_0^\tau\underbrace{\chi(\tau,\xi)(F^{-1}(\tau) - G^{-1}(\xi))}_{\leq 0 \text{ as } \xi \leq \tau}\dd\xi \\
		+ &\underbrace{\int_\tau^1 (-\tau)(F^{-1}(\tau) - G^{-1}(\tau))\dd\xi}_{=-(1-\tau)\tau(F^{-1}(\tau) - G^{-1}(\tau))} + \int_\tau^1\underbrace{\chi(\tau,\xi)(F^{-1}(\tau) - G^{-1}(\xi)) }_{\geq 0 \text{ as } \xi \geq \tau}\dd\xi \dd \tau \\
		= 2 \int_0^1 &\int_0^1\chi(\tau,\xi)\vert F^{-1}(\tau) - G^{-1}(\xi)\vert\dd\xi \dd \tau.
	\end{align*}
	
	As all integrands are non-negative, changing the order of integration throughout the proof is not an issue by Fubini-Tonelli. No assumptions on the distributions are needed as long as integrals are allowed to be infinite. In the literature on proper scoring rules (in particular, the CRPS), existence of first moments is typically assumed to obtain a meaningful scoring rule that is strictly proper. For divergences, comparisons of two distributions without first moments might in some cases still yield a finite distance. Therefore, we dispense with such an assumption and provide a fully general proof.
\end{proof}

\begin{proof}[Proof of Proposition \ref{prop:CDExact}]
	We start to define the following four differences of central interval endpoints
	\begin{align*}
		A &= A(\alpha,\beta) = F^{-1} \left( \tfrac{1-\alpha}{2} \right) - G^{-1}\left( \tfrac{1-\beta}{2} \right), \qquad \quad
		B = B(\alpha,\beta) = F^{-1}\left( \tfrac{1+\alpha}{2} \right) - G^{-1}\left( \tfrac{1+\beta}{2} \right), \\
		C &= C(\alpha,\beta) = F^{-1}\left( \tfrac{1-\alpha}{2} \right) - G^{-1}\left( \tfrac{1+\beta}{2} \right), \qquad \quad
		D = D(\alpha,\beta) = F^{-1}\left( \tfrac{1+\alpha}{2} \right) - G^{-1}\left( \tfrac{1-\beta}{2} \right).
	\end{align*}
	We further define the function
	\begin{align*}
		f(\tau,\xi) 
		&:= \chi(\tau,\xi)  \, \big\vert F^{-1}(\tau) - G^{-1}(\xi) \big\vert  \\
		&= \one \left\{ \sgn(\tau - \xi) \neq \sgn \big( F^{-1}(\tau) - G^{-1}(\xi) \big) \right\} \big\vert F^{-1}(\tau) - G^{-1}(\xi) \big\vert.
	\end{align*}
	for all $\tau, \xi \in [0,1]$, which arises as the integrand in \eqref{Eq:CD_quantile}.
	By the following case distinctions, we note that
	\begin{align*}
		f \left(\tfrac{1-\alpha}{2},\tfrac{1-\beta}{2} \right) 
		&= 
		\begin{cases}
			0,& \text{if } \alpha \leq \beta \text{ and } A \geq 0 \\
			-A,& \text{if } \alpha \leq \beta \text{ and } A \leq 0 \\
			A,& \text{if } \alpha \geq \beta \text{ and } A \geq 0 \\
			0,& \text{if } \alpha \geq \beta \text{ and } A \leq 0
		\end{cases} \\
		&= \one\{\alpha \geq \beta\} [A]_+ + \one\{\alpha \leq \beta\} [-A]_+.
	\end{align*}
	Similar considerations yield that
	\begin{align*}
		f \left(\tfrac{1-\alpha}{2},\tfrac{1+\beta}{2} \right) = [C]_+, \quad 
		f \left(\tfrac{1+\alpha}{2},\tfrac{1-\beta}{2} \right) = [-D]_+, \quad
		f \left(\tfrac{1+\alpha}{2},\tfrac{1+\beta}{2} \right) = \one\{\alpha \geq \beta\} [-B]_+ + \one\{\alpha \leq \beta\} [B]_+.
	\end{align*}    
	
	Then, by a transformation of variables in the third equality, plugging in the above expressions in the fourth equality and by applying Lemma \ref{Lem:Simple} in the fifth equality below, we get that 
	\begin{align*}
		\CD(F,G) 
		&= 2\int_0^1\int_0^1 f(\tau,\xi) \dd\tau\dd\xi \\
		&= 2 \left[\int_0^{0.5}\int_0^{0.5} f(\tau,\xi) \dd\tau\dd\xi 
		+ \int_0^{0.5}\int_{0.5}^1 f(\tau,\xi) \dd\tau\dd\xi 
		+ \int_{0.5}^1\int_0^{0.5} f(\tau,\xi) \dd\tau\dd\xi 
		+ \int_{0.5}^1\int_{0.5}^1 f(\tau,\xi) \dd\tau\dd\xi\right] \\
		&= 2 \left[\frac14\int_0^1\int_0^1 f\big(\tfrac{1-\alpha}{2},\tfrac{1-\beta}{2}\big) \dd\alpha\dd\beta 
		+ \frac14\int_0^1\int_0^1 f\big(\tfrac{1-\alpha}{2},\tfrac{1+\beta}{2}\big) \dd\alpha\dd\beta \right. \\
		&\qquad\;\left. + \frac14\int_0^1 \int_0^1 f\big(\tfrac{1+\alpha}{2},\tfrac{1-\beta}{2}\big) \dd\alpha\dd\beta 
		+ \frac14\int_0^1\int_0^1 f\big(\tfrac{1+\alpha}{2},\tfrac{1+\beta}{2}\big) \dd\alpha\dd\beta\right] \\
		&= \frac12 \int_0^1\int_0^1 \one\{\alpha \geq \beta\} [A]_+ + \one\{\alpha \leq \beta\} [-A]_+ + [C]_+ + [-D]_+ + \one\{\alpha \geq \beta\} [-B]_+ + \one\{\alpha \leq \beta\} [B]_+ \dd\alpha \dd\beta \\
		&= \frac12 \int_0^1\int_0^1 \one\{\alpha \geq \beta\} \Big( [A]_+ + [-B]_+ \Big) + \one\{\alpha \leq \beta\} \Big( [-A]_+ + [B_+] \Big) + [C]_+ + [-D]_+ \dd\alpha \dd\beta \\
		&= \frac12 \int_0^1\int_0^1 \one\{\alpha \geq \beta\} \Big( [A - B]_+ + [\min \{A, B\}]_+ + [\min \{-A,-B\}]_+  \Big)  \\
		&\hspace{1.8cm}+ \one\{\alpha \leq \beta\} \Big( [-A + B]_+ + [\min \{-A, -B\}]_+ + [\min \{A,B\}]_+ \Big) + [C]_+ + [-D]_+ \dd\alpha \dd\beta \\
		&= \frac12 \int_0^1 \int_0^1 \one\{\alpha \geq \beta\} [A - B]_+ +  \one\{\alpha \leq \beta\} [B - A]_+  \\
		&\hspace{1.8cm}+ [\min\{A,B\}]_+ + [\min\{-A,-B\}]_+ + [C]_+ + [-D]_+ \dd\alpha \dd\beta.
	\end{align*}

	The components of the decomposition in \eqref{Eq:CD_Shift}--\eqref{Eq:CD_Disp} are then obtained by setting
	\begin{align*}
		\disp_+^{\CD}(F,G) &= \frac12 \int_0^1 \int_0^\beta [B - A]_+ \dd\alpha \dd\beta \\
		\shift_+^{\CD}(F,G) &= \frac12 \int_0^1 \int_0^1 [\min\{A,B\}]_+ + [C]_+ \dd\alpha \dd\beta, \\
		\disp_-^{\CD}(F,G) &= \frac12 \int_0^1 \int_0^\alpha [A - B]_+ \dd\beta \dd\alpha, \\
		\shift_-^{\CD}(F,G) &= \frac12 \int_0^1 \int_0^1 [\min\{-A,-B\}]_+ + [-D]_+ \dd\alpha \dd\beta.
	\end{align*}
	Notice that for the two minus components with subscript `$-$', changing the roles of $F$ and $G$ merely changes the sign of $A$ and $B$.
\end{proof}

\section{Derivation of theoretical properties}
\label{sec:ProofsProperties}

We start by proving Propositions \ref{prop:PositiveDispersion} and \ref{prop:PositiveShift} as these help to simplify the proofs of some of the basic properties given in Section \ref{subsec:BasicProperties}.
Notice that the proofs of Propositions \ref{prop:PositiveDispersion} and \ref{prop:PositiveShift} do not require  any of the results from Section \ref{subsec:BasicProperties} apart from the obvious symmetry given by Proposition \ref{prop:Symmetry}.

\subsection{Proofs of propositions on equivalence of nonzero components}

\begin{proof}[Proof of Proposition~\ref{prop:PositiveDispersion}]
	For the Wasserstein distances, the equivalence between nonzero dispersion components is clear, as the signed $p$-th power preserves nonzero values in the integrands. The following similar (but technical) argument for the plus component shows that the CD dispersion components are nonzero whenever the respective AVM dispersions are nonzero by the symmetry from Proposition \ref{prop:Symmetry}.
	
	Let $B = \{\alpha\mid \disp_{\alpha,+}^{\AV}(F,G) > 0\}$ be the set of all values of the integration variable such that the integrand given in \eqref{Eq:AVM_Disp} is nonzero. The set $B$ has positive Lebesgue measure if $\disp_+^{\AV}(F,G) > 0$. On the other hand, for $\beta \in B$, the inner integral in the CD dispersion,
	\[
	\int_0^\beta \left[\left(F^{-1}\left(\tfrac{1+\alpha}{2}\right) - F^{-1}\left(\tfrac{1-\alpha}{2}\right)\right) - \left(G^{-1}\left(\tfrac{1+\beta}{2}\right) - G^{-1}\left(\tfrac{1-\beta}{2}\right)\right)\right]_+ \dd \alpha,
	\]
	is zero only if $\alpha \mapsto F^{-1}\left(\frac{1+\alpha}{2}\right) - F^{-1}\left(\frac{1-\alpha}{2}\right)$ is discontinuous at $\beta \in B$ (as otherwise, we can find a value $\beta' < \beta$ such that the integrand is strictly positive for all $\alpha \in [\beta',\beta]$, which yields a nonzero inner integral). 
	By left-continuity, the quantile function $F^{-1}$ is discontinuous at countably many values at most. Hence, the inner integral is nonzero for almost all $\beta \in B$, and integration across $\beta \in B$ yields a nonzero dispersion component.
	
	Conversely, if the inner integral is nonzero in the CD dispersion, the integrand in the AVM dispersion component is also nonzero (because the integrand in the displayed term is increasing in $\alpha$). Hence, the reverse implication also holds.
\end{proof}

\begin{proof}[Proof of Proposition~\ref{prop:PositiveShift}]
	Here, we proceed similarly as in the proof of Proposition \ref{prop:PositiveDispersion}.
	For the Wasserstein distances, the equivalence between nonzero shift components is clear, as the signed $p$-th power preserves nonzero values in the integrands. A similar (but technical) argument for the plus component shows that the CD shift components are nonzero whenever the respective AVM shifts are nonzero by the symmetry from Proposition \ref{prop:Symmetry}. 
	
	Let $B = \{\alpha\mid \shift_{\alpha,+}^{\AV}(F,G) > 0\}$ be the set of all values of the integration variable such that the integrand given in \eqref{Eq:AVM_Shift} is nonzero. The set $B$ has positive Lebesgue measure if $\shift_+^{\AV}(F,G) > 0$. On the other hand, for $\beta \in B$, the inner integral in the CD shift, 
	\[
	\int_{0}^1 \left[\min\left\{ F^{-1}\left(\tfrac{1+\alpha}{2}\right) - G^{-1}\left(\tfrac{1+\beta}{2} \right), \; F^{-1}\left(\tfrac{1-\alpha}{2}\right) - G^{-1}\left(\tfrac{1-\beta}{2}\right) \right\} \right]_+ + \left[F^{-1}\left(\tfrac{1-\alpha}{2}\right) - G^{-1}\left(\tfrac{1+\beta}{2}\right)\right]_+ \dd \alpha.
	\]
	is zero only if $F^{-1}$ is discontinuous at $\tfrac{1+\beta}{2}$ or $\tfrac{1-\beta}{2}$, which is only true for at most countably many values $\beta$ by left-continuity of the quantile function $F^{-1}$. As the inner integral is nonzero for almost all $\beta \in B$, integration across $\beta \in B$ yields a nonzero shift component.
	
	In contrast to the proof of Proposition \ref{prop:PositiveDispersion}, the converse is not true, because the integrand is not monotonic as illustrated by Example \ref{Ex:DiffShift}.
\end{proof}

\subsection{Proofs of basic properties}

\begin{proof}[Proof of Proposition \ref{prop:DispInvarianceforShifts}]
	Note that $F_s^{-1}(z) = F^{-1}(z) + s$, and hence the shift $s$ cancels out in the dispersion terms of the $\AV$ and $\CD$.
\end{proof}

\begin{proof}[Proof of Proposition \ref{prop:ShiftSymDist}]
	\begin{enumerate}
		\item 
		We start by showing the claim for the $\CD$.
		Suppose $m_F - m_G \leq 0$. For almost all pairs $(\alpha,\beta) \in (0,1)^2$, either $F^{-1}(\frac{1-\alpha}{2}) - G^{-1}(\frac{1-\beta}{2}) \leq 0$ or $F^{-1}(\frac{1+\alpha}2) -G^{-1}(\frac{1+\beta}2) = 2(m_F - m_G) - (F^{-1}(\frac{1-\alpha}{2}) - G^{-1}(\frac{1-\beta}{2})) \leq 0$ by symmetry.
		Therefore, the minimum across these two terms is for almost all  $(\alpha,\beta) \in (0,1)^2$ not positive. Furthermore, we have $F^{-1}(\tfrac{1-\alpha}{2}) \leq m_F \leq m_G \leq G^{-1}(\tfrac{1+\beta}{2})$. Hence, the shift component
		\begin{align*}
			\shift_+^{\CD}(F,G) = \frac{1}{2} \int_{0}^{1}\int_{0}^1 &\Big[\underbrace{\min\left\{F^{-1}\left(\tfrac{1+\alpha}{2}\right) - G^{-1}\left(\tfrac{1+\beta}{2}\right), F^{-1}\left(\tfrac{1-\alpha}{2}\right) - G^{-1}\left(\tfrac{1-\beta}{2}\right)\right\}}_{\leq 0 \;\;  \text{for almost all } (\alpha,\beta) \in (0,1)^2}\Big]_+ \\ &+ \Big[\underbrace{F^{-1}\left(\tfrac{1-\alpha}{2}\right) - G^{-1}\left(\tfrac{1+\beta}{2}\right)}_{\le 0}\Big]_+\dd\alpha\dd\beta = 0.
		\end{align*}
		From Proposition \ref{prop:PositiveShift} (shown above), it follows that $\shift_+^{\WDp}(F,G) = 0$ is zero as well for all $p \in [1,\infty)$.
		
		\item 
		By contraposition to (a), a positive shift ($\shift_+^{\D}(F,G) > 0$) implies a corresponding ordering of medians ($m_F > m_G$).
		
		To finish the proof, we show that a strict ordering of \emph{unique} medians, $m_F > m_G$, implies a positive shift component of the area validation metric. 
		(For the other distances, the shift component is then positive by Proposition \ref{prop:PositiveShift}.)
		By continuity of the quantile functions at $\frac12$ (otherwise the medians would not be unique), there exists a neighborhood $\left[\frac12-\delta,\frac12 + \delta\right] \subset (0,1)$ for some (small enough) $\delta \in (0,1)$ such that $F^{-1}(\tau) > G^{-1}(\tau)$ holds for all $\tau \in \left[\frac12-\delta,\frac12 + \delta\right]$. Therefore, we obtain
		\[
		\shift_+^{\AV}(F,G) \geq \int_0^{2\delta} \Big[\min\Big\{\underset{{>0,\text{ as }\tfrac{1+\alpha}2 \leq \frac12 + \delta}}{\underbrace{F^{-1}(\tfrac{1+\alpha}2) - G^{-1}(\tfrac{1+\alpha}2)}}, \; \underset{{>0,\text{ as }\tfrac{1-\alpha}2 \geq \frac12 - \delta}}{\underbrace{F^{-1}(\tfrac{1-\alpha}2) - G^{-1}(\tfrac{1-\alpha}2)}} \Big\} \Big]_+ \dd\alpha > 0.
		\]
	\end{enumerate}
\end{proof}

\begin{proof}[Proof of Proposition~\ref{Prop:CompLocScale}]
	By Propositions \ref{prop:PositiveDispersion} and \ref{prop:PositiveShift} (proved above), it suffices to show the equivalences in (a) and (b) for the AVM.
	\begin{enumerate}
		\item The AVM dispersion component is given by
		\begin{align*}
			\disp_+^{\AV}(F,G) &= \frac12 \int_0^1\left[\left(F^{-1}\left(\tfrac{1+\alpha}2\right) - F^{-1}\left(\tfrac{1-\alpha}2\right)\right) - \left(G^{-1}\left(\tfrac{1-\alpha}2\right) - G^{-1}\left(\tfrac{1+\alpha}2\right)\right)\right]_+ \dd\alpha \\
			&= \frac12 \int_0^1\left[ (s_F - s_G)\left(H^{-1}\left(\tfrac{1+\alpha}2\right) - H^{-1}\left(\tfrac{1-\alpha}2\right)\right)\right]_+\dd\alpha \\
			&= \frac12 [s_F - s_G]_+ \int_0^1 H^{-1}\left(\tfrac{1+\alpha}2\right) - H^{-1}\left(\tfrac{1-\alpha}2\right) \dd\alpha,
		\end{align*}
		where the latter integral is nonzero as $H$ is non-degenerate,
		and hence the equivalence holds.
		
		\item Without loss of generality, we assume that the central median $m_H$ of $H$ is 0 (as we can always standardize an arbitrary reference distribution $H$ by replacing it with $\overline{H}$ given by $\overline{H}^{-1} = H^{-1} - m_H$). Then, the location parameters $\ell_F$ and $\ell_G$ are the central medians, $m_F$ and $m_G$, of $F$ and $G$, respectively. 
		
		Now suppose that $m_F - m_G \leq 0$.
		If $s_F \geq s_G$, then $F^{-1}\left(\frac{1-\alpha}{2}\right) - G^{-1}\left(\frac{1-\alpha}{2}\right) = m_F - m_G + (s_F - s_G) H^{-1}\left(\frac{1-\alpha}2\right) \leq 0$ holds for all coverages $\alpha \in (0,1)$ as $0 = m_H \geq H^{-1}\left(\frac{1-\alpha}2\right)$. Otherwise, if $s_F \leq s_G$, then $F^{-1}\left(\frac{1+\alpha}2\right) -G^{-1}\left(\frac{1+\alpha}2\right) = m_F - m_G + (s_F - s_G) H^{-1}\left(\frac{1+\alpha}2\right) \leq 0$ holds for all $\alpha \in (0,1)$.
		Therefore, the minimum across these two terms is non-positive $\alpha \in (0,1)$. 
		Hence, the shift component
		\[
		\shift_+^{\AV}(F,G) = \frac{1}{2} \int_{0}^{1} \Big[\underbrace{\min\Big\{F^{-1}\left(\tfrac{1+\alpha}{2}\right) - G^{-1}\left(\tfrac{1+\alpha}{2}\right), F^{-1}\left(\tfrac{1-\alpha}{2}\right) - G^{-1}\left(\tfrac{1-\alpha}{2}\right)\Big\}}_{\leq 0}\Big]_+ \dd\alpha = 0
		\]
		is zero.
		Proposition~\ref{prop:PositiveShift} implies that $\shift_+^{\WD_p}(F,G) = 0$ for all $p \in [1,\infty)$.
		
		The proof is finished by following the proof of Proposition~\ref{prop:ShiftSymDist} (b) word by word.
	\end{enumerate}
\end{proof}

\begin{proof}[Proof of Theorem~\ref{thm:Uniqueness}]
	\begin{enumerate}
		\item
		Let $F$ and $G$ be distributions from the same location-scale family with location parameters $\ell_F$ and $\ell_G$, and scale parameters $s_F$ and $s_G$, respectively. As in part (b) of Proposition~\ref{Prop:CompLocScale}, we assume w.l.o.g.\ that the location parameters match the central medians of the distributions, i.e., $\ell_F = m_F$ and $\ell_G = m_G$. Let $F_{m_G - m_F}$ be the shifted version of $F$ that has location parameter $l_G$ and scale $s_F$. By condition \eqref{Eq:ShiftLocScale} and the symmetry in \eqref{Eq:Symmetry}, we obtain
		\[\shift_\pm^{\AV}(F_{m_G - m_F},G) = 0,\]
		and condition \eqref{Eq:DispLocScale} (plus symmetry in \eqref{Eq:Symmetry}) yields
		\[\AV(F_{m_G - m_F},G) = \begin{cases}
			\disp_+^{\AV}(F_{m_G - m_F},G),& \text{if } s_F \geq s_G, \\
			\disp_-^{\AV}(F_{m_G - m_F},G),& \text{if } s_F < s_G. 
		\end{cases}\]
		By invariance of dispersion components to shifts in  \eqref{Eq:DispInvarianceforShifts}, we obtain the unique dispersion terms
		\begin{align*}
			\disp_\pm^{\AV}(F,G) &= \disp_\pm^{\AV}(F_{m_G - m_F},G) \\
			&= \begin{cases}
				\AV(F_{m_G - m_F},G),& \text{if } \{\pm = + \text{ and } s_F \geq s_G\} \text{ or } \{\pm = - \text{ and } s_F < s_G\}, \\
				0,& \text{if } \{\pm = - \text{ and } s_F \geq s_G\} \text{ or } \{\pm = + \text{ and } s_F < s_G\}.
			\end{cases}
		\end{align*}
		Invoking \eqref{Eq:ShiftLocScale} (plus the symmetry in \eqref{Eq:Symmetry}) once more, we obtain the unique shift terms
		\begin{align*}
			&\shift_\pm^{\AV}(F,G)  \\
			&\quad= \begin{cases}
				\AV(F,G) - \AV(F_{m_G - m_F},G),& \text{if } \{\pm = + \text{ and } m_F \geq m_G\} \text{ or } \{\pm = - \text{ and } m_F < m_G\}, \\
				0,& \text{if } \{\pm = - \text{ and } m_F \geq m_G\} \text{ or } \{\pm = + \text{ and } m_F < m_G\}.
			\end{cases}
		\end{align*}
		The previous formulas for $\disp_\pm^{\AV}(F,G)$ and $\shift_\pm^{\AV}(F,G)$ provide a unique form for the decomposition terms after invoking the respective properties.
		As our decomposition given in Proposition \ref{prop:AVMExact} is one candidate for a decomposition that satisfies these properties, and there can only be one, we can conclude that the previously derived terms must equal our decomposition, which concludes this proof.
		
		\item The proof proceeds with analogous arguments as for the AVM in (a) with \eqref{Eq:ShiftLocScale} replaced by \eqref{Eq:ShiftSymDist}.
	\end{enumerate}
\end{proof}

\subsection{Proofs of theorems on comparisons across distances}

To prove Theorem \ref{Thm:DispInequality_pWD}, we need the following lemma.
\begin{lemma}
	Let $q > p \geq 1$. Then, the inequality
	\begin{equation}
		a^p + (a-2)^p \leq a^q + (a-2)^q
		\label{Eq:AuxIneq}
	\end{equation}
	holds for $a \geq 2$.
	\label{Lem:AuxIneq}
\end{lemma}

\begin{proof}
	The function $g(p) = a^p + (a-2)^p$ has first and second derivative
	\[g'(p) = a^p\ln(a) + (a-2)^p \ln(a-2),\quad\text{and}\quad g''(p) = a^p\ln(a)^2 + (a-2)^p\ln(a-2)^2,\]
	respectively. For the inequality to hold, it suffices to show that the first derivative $g'$ is nonnegative (for $p\geq 1$) and thus the function $g$ is increasing. As the second derivative is clearly positive, the function is (strictly) convex.
	Thus, it suffices to show that $g'(1) = a\ln(a) + (a-2)\ln(a-2) = h(a) \geq 0$ for $a \geq 2$. The function $h$ has first and second derivative
	\[h'(a) = \ln(a) + \ln(a-2) + 2\quad \text{and}\quad h''(a) = \frac1a + \frac1{a-2},\]
	respectively. It is easy to see that $h$ is convex on $[2,\infty)$ with a minimum at $a_0 = 1 + \sqrt{1 + e^{-2}}$. Therefore, we have $g'(1) = h(a) \geq h(a_0) > 0$ as $h(a_0) \approx 1.32$.
\end{proof}

\begin{proof}[Proof of Theorem \ref{Thm:DispInequality_pWD}]
	Let $\widetilde{m} = m_F - m_G$ be the difference between the central medians of $F$ and $G$ (as defined prior to Proposition \ref{prop:ShiftSymDist}).
	If $\widetilde{m} = 0$, then both shift components are zero by Proposition \ref{prop:ShiftSymDist} and inequality \eqref{Eq:DispInequality_pWD} is trivial.
	We assume w.l.o.g.\ that $\widetilde{m} > 0$ (otherwise, we exchange $F$ and $G$). Then, the minus shift components are zero by Proposition \ref{prop:ShiftSymDist}. Furthermore, we define $f(\tau) = F^{-1}(\tau) - G^{-1}(\tau)$ for $\tau \in (0,1)$. By symmetry of $F$ and $G$, the function $f$ is (almost surely) symmetric as well, that is,
	\begin{equation}
		f(\tau) = 2\widetilde{m} - f(1-\tau)
		\label{Eq:Symmetryf}
	\end{equation}
	holds for almost all $\tau \in (0,1)$. Let $A = \{\tau\mid f(1-\tau) > f(\tau) \geq 0\} \cup \{\tau < \tfrac12 \mid f(1-\tau) = f(\tau) \geq 0\}$. Then, the plus shift component of the $p$-Wasserstein distance is twice the integral of $f$ over $A$ as a change of variables yields $\shift_+^{\WDp}(F,G) = \int_0^1 \left[\min\left\{f(\tau)^{[p]},f(1-\tau)^{[p]}\right\}\right]_+\dd\tau = 2\int_A \vert f(\tau)\vert^p\dd\tau$.
	Let $\overline{A} = \{1-\tau\mid \tau \in A\}$, $B = \{\tau\mid f(\tau) \geq 0 > f(1-\tau)\}$, and $\overline{B} = \{1-\tau\mid \tau \in B\}$. By the (almost sure) symmetry of $f$, the values $\tau$ such that $0 > f(\tau)$ and $0 > f(1-\tau) = 2\widetilde{m} - f(\tau)$ form a null set in $(0,1)$, and so does the complement of the disjoint union $A\cup \overline{A}\cup B \cup \overline{B}$ in $(0,1)$. Note that the symmetry \eqref{Eq:Symmetryf} of $f$ yields the inequalities
	\begin{alignat}{3}
		\widetilde{m} &\geq f(\tau) &&\quad\text{for almost all }\tau \in A  \label{Eq:SimpleA1} \\
		f(1-\tau) &\geq \widetilde{m} &&\quad\text{for almost all }\tau \in A \label{Eq:SimpleA2} \\
		f(\tau) &\geq 2\widetilde{m} &&\quad\text{for almost all }\tau \in B \label{Eq:SimpleB1}
	\end{alignat}
	We use these a.s.\ (almost sure with respect to the Lebesgue measure) inequalities to derive the inequality
	\begin{alignat*}{3}
		&\phantom{=~} \int_A \vert \underset{\mathrlap{\leq \widetilde{m}\text{ a.s.\ by }\eqref{Eq:SimpleA1}}}{\underbrace{f(\tau)}}\vert^q\dd\tau &\cdot~&\bigg(\int_A \vert f(1-\tau)\vert^{p}\dd\tau + \int_B \vert f(\tau)\vert^{p} + \underset{\mathrlap{= f(\tau) - 2\widetilde{m} \text{ a.s.\ by the symmetry \eqref{Eq:Symmetryf}}}}{\underbrace{\vert f(1-\tau)\vert}}^{p}\dd\tau \bigg) \\
		&\leq \widetilde{m}^{q-p} \int_A \vert f(\tau)\vert^{p}\dd\tau &~\cdot~\frac1{\widetilde{m}^{q-p}}&\bigg(\int_A\underset{\mathrlap{\hspace{-0.5cm}\leq f(1-\tau)^{q-p}\text{ a.s.\ by }\eqref{Eq:SimpleA2}}}{\underbrace{\widetilde{m}^{q-p}}}\vert f(1-\tau)\vert^{p}\dd\tau 
		+ \int_{B} \widetilde{m}^{q}\underset{\mathrlap{\hspace{-2.6cm}\leq \big\vert \frac{f(\tau)}{\widetilde{m}}\big\vert^{q} + \big\vert \frac{f(\tau)}{\widetilde{m}} - 2\big\vert^{q} \text{ a.s.\ by Lemma \ref{Lem:AuxIneq} and }\eqref{Eq:SimpleB1}}}{\underbrace{\Big(\Big\vert \frac{f(\tau)}{\widetilde{m}}\Big\vert^{p} + \Big\vert \frac{f(\tau)}{\widetilde{m}} - 2\Big\vert^{p}\Big)}} \dd\tau \bigg) \\
		&\leq \int_A \vert f(\tau)\vert^{p}\dd\tau &\cdot~&\bigg(\int_A \vert f(1-\tau)\vert^{q}\dd\tau + \int_{B} \vert f(\tau)\vert^{q} + \vert f(1-\tau)\vert^{q}\dd\tau \bigg),
	\end{alignat*}
	which is equivalent (as $\tau \in A \Longleftrightarrow (1-\tau) \in \overline{A}$, and analogously for $B$ and $\overline{B}$) to the inequality
	\[
	\int_A \vert f(\tau)\vert^q\dd\tau ~\cdot~\left(\int_{\overline{A}} \vert f(\tau)\vert^{p}\dd\tau + \int_{B\cup \overline{B}} \vert f(\tau)\vert^{p}\dd\tau\right)
	\leq \int_A \vert f(\tau)\vert^{p}\dd\tau ~\cdot~\left(\int_{\overline{A}} \vert f(\tau)\vert^{q}\dd\tau + \int_{B\cup \overline{B}} \vert f(\tau)\vert^{q}\dd\tau\right).
	\]
	By adding the term $\int_A \vert f(\tau)\vert^q\dd\tau \cdot\int_{A} \vert f(\tau)\vert^{p}\dd\tau$ to both sides, we end up with the inequality
	\[
	\int_A \vert f(\tau)\vert^q\dd\tau \int_0^1 \vert f(\tau)\vert^{p}\dd\tau \leq \int_A \vert f(\tau)\vert^{p}\dd\tau \int_0^1 \vert f(\tau)\vert^q\dd\tau,
	\]
	which is equivalent to
	\[
	\frac{\int_A \vert f(\tau)\vert^q\dd\tau}{\int_0^1 \vert f(\tau)\vert^q\dd\tau} \leq \frac{\int_A \vert f(\tau)\vert^{p}\dd\tau}{\int_0^1 \vert f(\tau)\vert^{p}\dd\tau}
	\]
	or
	\[
	\frac{\shift_+^{\WD_q}(F,G)}{\WD_q(F,G)} \leq \frac{\shift_+^{\WD_{p}}(F,G)}{\WD_{p}(F,G)}.
	\]
	As the relative (normalized) shift component decreases as $p$ increases, the sum of the relative (normalized) dispersion components increases with $p$, as formalized by inequality \eqref{Eq:DispInequality_pWD}.
\end{proof}

\begin{proof}[Proof of Theorem \ref{Thm:DispInequalities}]
	Let $\sigma_F \neq \sigma_G$ (otherwise all dispersion components are 0, and hence the inequalities are satisfied).
	With notation as used in the closed-form expressions given at the end of Section \ref{subsec:BasicProperties}, inequality \eqref{Eq:DispInequality_CD-WD_norm} is equivalent to
	\begin{align*}
		0 &\leq \frac{\CD(F,G)}{\disp_+^{\CD}(F,G) + \disp_-^{\CD}(F,G)} - \frac{\AV(F,G)}{\disp_+^{\AV}(F,G) + \disp_-^{\AV}(F,G)} \\
		&= \frac{2\sigmaavg\phi(\mudiff/\sigmaavg) + \mudiff(2\Phi(\mudiff/\sigmaavg) - 1) - \sqrt{2}\phi(0)(\sigma_F + \sigma_G)}{2\sigmaavg\phi(0) - \sqrt{2}\phi(0)(\sigma_F + \sigma_G)} - \frac{(2\Phi(\mudiff/\sigmadiff) - 1)\mudiff + 2\sigmadiff\phi(\mudiff/\sigmadiff)}{2\sigmadiff\phi(0)} =: f(\mudiff).
	\end{align*}
	Note that $f(0) = 0$, and hence it suffices to show that the first derivative is non-negative, i.e., $f'(\mudiff) \geq 0$ for $\mudiff > 0$. The first derivative is given by (note that $\phi'(x) = -x\phi(x)$)
	\begin{align*}
		f'(\mudiff) &= \frac{2\sigmaavg\phi'(\mudiff/\sigmaavg)/\sigmaavg + \mudiff(2\phi(\mudiff/\sigmaavg)/\sigmaavg) + (2\Phi(\mudiff/\sigmaavg) - 1)}{2\sigmaavg\phi(0) - \sqrt{2}\phi(0)(\sigma_F + \sigma_G)} - \frac{(2\Phi(\mudiff/\sigmadiff) - 1) + (2\phi(\mudiff/\sigmadiff)/\sigmadiff)\mudiff + 2\sigmadiff\phi'(\mudiff/\sigmadiff)/\sigmadiff}{2\sigmadiff\phi(0)} \\
		&= \frac{(2\Phi(\mudiff/\sigmaavg) - 1)}{2\sigmaavg\phi(0) - \sqrt{2}\phi(0)(\sigma_F + \sigma_G)} - \frac{(2\Phi(\mudiff/\sigmadiff) - 1)}{2\sigmadiff\phi(0)}
	\end{align*}
	Note that $f'(0) = 0$ and hence it suffices to show that the second derivative $f''(\mudiff)$ is non-negative for $\mudiff > 0$. The second derivative is given by
	\begin{align}
		f''(\mudiff) &=  \frac{2\phi(\mudiff/\sigmaavg)/\sigmaavg}{2\sigmaavg\phi(0) - \sqrt{2}\phi(0)(\sigma_F + \sigma_G)} - \frac{2\phi(\mudiff/\sigmadiff)/\sigmadiff}{2\sigmadiff\phi(0)} \notag \\
		&= \frac{\phi(\mudiff/\sigmaavg)\sigmadiff/\sigmaavg - \phi(\mudiff/\sigmadiff)(\sigmaavg - \frac1{\sqrt{2}}(\sigma_F + \sigma_G))/\sigmadiff}{\phi(0)\sigmadiff(\sigmaavg - \frac1{\sqrt{2}}(\sigma_F + \sigma_G))}
		\label{Eq:SecDerivative}
	\end{align}
	W.l.o.g.\ let $\sigma_F < \sigma_G = \sigma_F + \sigmadiff$. Note that $\sigmaavg - \frac1{\sqrt{2}}(\sigma_F + \sigma_G) = \sqrt{\sigma_F^2 + (\sigma_F + \sigmadiff)^2} - \frac1{\sqrt{2}}(2\sigma_F + \sigmadiff)$ is zero if $\sigmadiff = 0$ and has positive derivative 
	\begin{align*}
		\frac{\partial}{\partial\sigmadiff}(\sqrt{\sigma_F^2 + (\sigma_F + \sigmadiff)^2} - \frac1{\sqrt{2}}(2\sigma_F + \sigmadiff))
		&= \frac{\sigma_F + \sigmadiff}{\sqrt{\sigma_F^2 + (\sigma_F + \sigmadiff)^2}} - \frac1{\sqrt{2}} = \frac{1}{\sqrt{\sigma_F^2/(\sigma_F + \sigmadiff)^2 + 1}} - \frac1{\sqrt{2}}
	\end{align*}
	if $\sigmadiff > 0$.
	Hence, the denominator in expression \eqref{Eq:SecDerivative} is positive. Therefore, it suffices to show that the numerator in expression \eqref{Eq:SecDerivative} is also non-negative. As $\sigmaavg = \sqrt{\sigma_F^2 + \sigma_G^2} = \sqrt{(\sigma_F - \sigma_G)^2 + 2\sigma_F\sigma_G} > \sqrt{(\sigma_F - \sigma_G)^2}  = \sigmadiff$ and hence $\phi(\mudiff/\sigmaavg) > \phi(\mudiff/\sigmadiff)$, it suffices to show that $\sigmadiff/\sigmaavg \geq (\sigmaavg - \frac1{\sqrt{2}}(\sigma_F + \sigma_G))/\sigmadiff$ or, equivalently,
	\begin{align*}
		1&\geq\sigmaavg(\sigmaavg - \frac1{\sqrt{2}}(\sigma_F + \sigma_G))/\sigmadiff^2 \\
		&= \sqrt{(\sigma_F/\sigmadiff)^2 + (\sigma_F/\sigmadiff + 1)^2} \left(\sqrt{(\sigma_F/\sigmadiff)^2 + (\sigma_F/\sigmadiff + 1)^2} - \left(\sigma_F/\sigmadiff + (\sigma_F/\sigmadiff + 1)\right)/\sqrt{2}\right) =: g(\sigma_F/\sigmadiff).
	\end{align*}
	We can rewrite $g$ as
	\begin{align*}
		g(x) &= \sqrt{x^2 + (x + 1)^2}\left(\sqrt{x^2 + (x + 1)^2} - (x + (x + 1))/\sqrt{2}\right) \\
		&= \sqrt{(2x^2 + 2x) + 1}\left(\sqrt{(2x^2 + 2x) + 1} - \sqrt{(2x^2 + 2x) + 1/2)}\right) =: h(2x^2 + 2x).\\
	\end{align*}
	As $g(0) = h(0) = 1-1/\sqrt{2} < 1$ and $h$ has negative derivative 
	\begin{align*}
		h'(y) &= \frac{1/2}{\sqrt{y+1}}\left(\sqrt{y+1} - \sqrt{y+1/2}\right) + \sqrt{y+1}\left(\frac{1/2}{\sqrt{y+1}} - \frac{1/2}{\sqrt{y+1/2}}\right) \\
		&= 1 - \frac12\left(\frac{\sqrt{y+1/2}}{\sqrt{y+1}} + \frac{\sqrt{y+1}}{\sqrt{y+1/2}}\right) = 1 - \sqrt{\frac{y^2 + \frac32y + \frac34}{y^2 + \frac32y + \frac12}} < 0
	\end{align*}
	for $y \geq 0$ (note that the numerator of the fraction under the root in the last term is clearly larger than the denominator), $g(x) = h(2x^2 + 2x)$ is also decreasing in $x$ and, hence, the inequality $1 > g(0) > g(\sigma_F/\sigmadiff)$ holds, which finishes the proof of the theorem.
\end{proof}

\section{Derivation of connections to stochastic order relations}
\label{sec:ConnectionOrders}

\subsection{Proofs of connections to dispersive orders}

\begin{proof}[Proof of Theorem \ref{Thm:DispOrderJoint}]
	\begin{enumerate}
		\item 
		We prove the equivalence for the AVM. Results for the other distances are an immediate consequence as nonzero dispersion components coincide by Proposition \ref{prop:PositiveDispersion}.
		
		Clearly, $F \geq_\text{wD} G$ implies $\disp_-^{\AV}(F,G) = 0$ as the defining condition implies that the integrand \eqref{Eq:AVM_Disp} is zero for all coverages $\alpha$ in $(0,1)$.
		
		We prove the reverse implication by contraposition. If $F \not\geq_\text{wD} G$, then there exists a $\tau \in (\frac12,1)$ such that $F^{-1}(\tau) - F^{-1}(1-\tau) < G^{-1}(\tau) - G^{-1}(1-\tau)$. By continuity of the quantile functions there is a neighborhood $B\subset (\frac12,1)$ of $\tau$ such that the inequality $F^{-1}(\xi) - F^{-1}(1-\xi) < G^{-1}(\xi) - G^{-1}(1-\xi)$ holds for all $\xi \in B$. Substitution with $\tau = \frac{1+\alpha}2$ in the integral for the minus dispersion component (as in \eqref{eq:avm_disp_plus} with $F$ and $G$ switched) and rearranging terms yields
		\[
		\disp_-^{\AV}(F,G) \geq \int_B \left[ \left(G^{-1}\left(\tau\right) - G^{-1}\left(1-\tau\right)\right) - \left(F^{-1}\left(\tau\right) - F^{-1}\left(1-\tau\right)\right) \right]_+ \dd\tau > 0.
		\]
		
		\item The equivalence is an immediate consequence of (a) as the strict ordering $F >_\mathrm{wD} G$ is given by the two conditions $F \geq_\mathrm{wD} G$ and  $F \not\leq_\mathrm{wD} G$, which are equivalent to $\disp_-^{\D}(F,G) = 0$ and $\disp_+^{\D}(F,G) > 0$, respectively (where the latter invokes the symmetry from Proposition \ref{prop:Symmetry}).
	\end{enumerate}
\end{proof}

\begin{proof}[Proof of Proposition \ref{Prop:DispOrder}]
	\begin{enumerate}
		\item Clearly, a dispersive ordering implies \eqref{Eq:wD} for all $\tau \in (0.5,1)$.
		
		\item 
		If $F >_\text{D} G$, then there exist $0 < \xi < \tau < 1$ such that the defining inequality
		\[
		F^{-1}(\tau) - F^{-1}(\xi) > G^{-1}(\tau) - G^{-1}(\xi)
		\]
		is strict. Since the inequality
		\[
		F^{-1}(1-\xi) - F^{-1}(1-\tau) \geq G^{-1}(1-\xi) - G^{-1}(1-\tau)
		\]
		also holds (as $1-\tau < 1-\xi$), we get the strict inequality
		\begin{equation}
			\underbrace{F^{-1}(\tau) - F^{-1}(1-\tau)}_{=: f(\tau)} - \underbrace{(F^{-1}(\xi) - F^{-1}(1-\xi))}_{=: f(\xi)} > \underbrace{G^{-1}(\tau)  - G^{-1}(1-\tau)}_{=: g(\tau)} - \underbrace{(G^{-1}(\xi) - G^{-1}(1-\xi))}_{=: g(\xi)}.
			\label{Eq:ProofStrictDisp}
		\end{equation}
		We now consider a case distinction: 
		\begin{itemize}
			\item
			In the case of $\tau > \xi \geq \frac12$, inequality \eqref{Eq:ProofStrictDisp} yields
			\[f(\tau) - \underset{\mathrlap{\geq 0 \text{ by $F >_\mathrm{D} G$}}}{\underbrace{(f(\xi) - g(\xi))}} > g(\tau) \quad\Longrightarrow\quad f(\tau) > g(\tau).\]
			
			\item 
			In the case of $\tau \geq \frac12 > \xi$, either $f(\tau) > g(\tau)$ or $f(\tau) = g(\tau)$ (by the dispersive ordering $F >_\mathrm{D} G$). In the latter case, inequality \eqref{Eq:ProofStrictDisp} yields
			\[- f(\xi) > \underset{\mathrlap{= 0}}{\underbrace{g(\tau) - f(\tau)}}  - g(\xi)  \quad\Longrightarrow\quad f(1-\xi) > g(1-\xi).\]
			
			\item 
			Finally, in the case of $\frac12 > \tau > \xi$, inequality \eqref{Eq:ProofStrictDisp} yields
			\[\underset{\mathrlap{\leq 0 \text{ by $F >_\mathrm{D} G$}}}{\underbrace{f(\tau) - g(\tau)}} - f(\xi) >  - g(\xi)  \quad\Longrightarrow\quad f(1-\xi) > g(1-\xi).\]
		\end{itemize}
		Hence, in each of the three cases, there exists a $\tau \in (0.5,1)$ such that the inequality in \eqref{Eq:wD} is strict, which results in a strict weak dispersive ordering $F >_\mathrm{wD} G$.
	\end{enumerate}
\end{proof}

\subsection{Proofs of connections to stochastic orders}

\begin{proof}[Proof of Theorem~\ref{Thm:StochOrderCDJoint}]
	Arguments similar to those in the proof of Theorem \ref{Thm:DispOrderJoint} yield the desired result, as we show in the following:
	
	\begin{enumerate}[label=(\alph*)]
		\item 
		It is easy to see that $F \geq_\text{wS} G$ implies $\shift_-^{\CD}(F,G) = 0$ as the defining condition implies that the integrand in \eqref{Eq:CD_Shift} is zero for all coverage levels $\alpha$ and $\beta$ in $(0,1)$.
		
		We prove the reverse implication by contraposition. If $F \not\geq_\text{wS} G$, then there exists a pair $(\tau, \xi) \in (\frac12,1)^2$ such that $\max \big\{F^{-1}(\tau) - G^{-1}(\xi), \, F^{-1}(1-\tau) - G^{-1}(1-\xi) \big\} \; < \; 0$. By continuity of the quantile functions there is a neighborhood $B\subset (\frac12,1)^2$ of $(\tau,\xi)$ such that the above inequality holds for all pairs $(\tau,\xi) \in B$. Substitution with $\tau = \frac{1+\beta}2$ and $\xi = \frac{1+\alpha}{2}$ in the integral for the shift component of the $\CD$ (as in \eqref{Eq:CD_Shift} with $F$ and $G$ switched) yields
		\begin{align*}
			\shift_-^{\CD}(F,G) 
			&\geq 2 \int_B \Big[\underbrace{\min\left\{ G^{-1}\left(\xi\right) - F^{-1}\left(\tau\right), \; G^{-1}\left(1-\xi\right) - F^{-1}\left(1-\tau\right) \right\}}_{>0}\Big]_+ \\
			&\hspace{1cm}
			+ \left[G^{-1}\left(\xi\right) - F^{-1}\left(1-\tau\right)\right]_+ \dd \xi \dd \tau  > 0.
		\end{align*}
		Notice that the term in the lower line is non-negative such that omitting it still yields a strictly positive shift component.
		
		\item 
		In analogy to the proof of Theorem \ref{Thm:DispOrderJoint} (b), the equivalence is an immediate consequence of part (a).
	\end{enumerate}
\end{proof}

\begin{proof}[Proof of Proposition \ref{Prop:StochOrderCD}]
	\begin{enumerate}
		\item It is easy to see that stochastic ordering $F \ge_\mathrm{S} G$ implies weak stochastic ordering $F \ge_\mathrm{wS} G$: If $\tau \geq \xi$, we obtain $F^{-1}(\tau) -G^{-1}(\xi) \geq F^{-1}(\tau) -G^{-1}(\tau) \geq 0$ by the monotonicity of $G^{-1}$ and stochastic ordering. Otherwise, if $\tau < \xi$, we obtain $F^{-1}(1-\tau) -G^{-1}(1-\xi) \geq F^{-1}(1-\tau) -G^{-1}(1-\tau) \geq 0$ from the stochastic ordering constraint. Hence, condition \eqref{Eq:wS} is satisfied.
		
		\item Clearly, strict stochastic ordering implies strict weak stochastic ordering as there exists a $\tau$ such that one of the inequalities in (a) is strict.
	\end{enumerate}    
\end{proof}

\begin{proof}[Proof of Proposition~\ref{Thm:WeakStochPreorder}]
	Clearly, the weak stochastic order is reflexive.
	To show the transitivity, consider three distributions $F \geq_\text{wS} G \geq_\text{wS} H$, and arbitrary $0.5 < \tau,\xi < 1$, which are fixed throughout the proof. 
	The distribution $F$ is larger than $H$ in weak stochastic order if the following inequality is satisfied
	\begin{align*}
		0 &\leq \max\{F^{-1}(\tau) - H^{-1}(\xi), F^{-1}(1-\tau) - H^{-1}(1-\xi)\} \\
		&= \max\{\underbrace{F^{-1}(\tau) - G^{-1}(\gamma)}_{A(\gamma)} + \underbrace{G^{-1}(\gamma) - H^{-1}(\xi)}_{C(\gamma)}, \underbrace{F^{-1}(1-\tau) - G^{-1}(1-\gamma)}_{B(\gamma)} + \underbrace{G^{-1}(1-\gamma) - H^{-1}(1-\xi)}_{D(\gamma)}\},
	\end{align*}
	where $0.5 < \gamma < 1$. For any $\gamma$, at least one of the terms $C(\gamma)$ or $D(\gamma)$ is non-negative by definition of the weak stochastic ordering $G \geq_\text{wS} H$. 
	Therefore, it suffices to show that there exists a $\gamma_0 \in [0.5,1]$ such that both $A(\gamma_0)$ and $B(\gamma_0)$ are non-negative to prove the above inequality. 
	To this end, let $\gamma_0 = \sup\{\gamma \mid A(\gamma) \geq 0\}$. Note that $\gamma_0 \geq 0.5$ as $F^{-1}(\tau) \geq G^{-1}(0.5)$ by weak stochastic ordering (which, by setting $\xi =0.5$ in its definition, implies that either $F^{-1}(\tau) \geq G^{-1}(0.5)$ or $F^{-1}(1-\tau) \geq G^{-1}(0.5)$ and $F^{-1}(\tau) \ge F^{-1}(1-\tau)$ holds as $\tau > 0.5$.)
	
	We now distinguish two cases:
	First, in the case of $\gamma_0 = 1$, the equality in the centered equation above extends to $\gamma = 1$, and we obtain $A(\gamma_0) = A(1) = 0$ and $B(\gamma_0) \geq 0$ as $F^{-1}(1-\tau) \geq G^{-1}(1-\gamma_0) = G^{-1}(0)$ because of the common support.
	Second, if $\gamma_0 < 1$, assume $B(\gamma_0) < 0$. 
	Then there exists $\varepsilon > 0$ such that $B(\gamma_0 + \varepsilon) < 0$ by continuity of the quantile functions, and $A(\gamma_0 + \varepsilon) < 0$ by the definition of $\gamma_0$.
	This however contradicts $F \geq_\text{wS} G$, such that we can conclude that $B(\gamma_0) \geq 0$. As $A(\gamma_0) = 0$ holds by continuity, this concludes the proof.
\end{proof}

\begin{proof}[Proof of Theorem~\ref{Thm:StochOrderAVMJoint}]
	Arguments similar to those in the proofs of Theorems \ref{Thm:DispOrderJoint} and \ref{Thm:StochOrderCDJoint} yield the desired results.
\end{proof}

\begin{proof}[Proof of Proposition \ref{Prop:StochOrderAVM}]
	\begin{enumerate}
		\item Clearly, condition \eqref{Eq:wS} implies condition \eqref{Eq:rS}.
		\item If $\tau$ satisfies condition \eqref{Eq:sS}, then $\max\left\{G^{-1}(\tau) - F^{-1}(\tau), G^{-1}(1-\tau) - F^{-1}(1-\tau)\right\} < 0$. Hence, $F \not\leq_\mathrm{rS} G$.
	\end{enumerate}
\end{proof}

\begin{proof}[Proof of Proposition~\ref{Thm:RelaxedSOPreorder}]
	In the case of two symmetric distributions, $F$ and $G$, with continuous quantile functions, a combination of Theorem \ref{Thm:StochOrderAVMJoint} and Proposition \ref{prop:ShiftSymDist} yields that $F \geq_\text{rS} G$ is equivalent to an ordering of the medians, $F^{-1}(\frac12) \geq G^{-1}(\frac12)$, which clearly correspond to a reflexive and transitive relation.
\end{proof}

\section{List of Climate Models}

See Table \ref{tab:ListCMIP5models}.

\begin{table*}[!ht]\small
	\caption{List of CMIP5 models used in our meteorological application together with the corresponding institute(s).} 
	\label{tab:ListCMIP5models}
	\begin{tabular}{ll}
		\toprule 
		Climate model &     Institute(s) \\
		\midrule 
		ACCESS1-0 &         Commonwealth Scientific and Industrial Research Organisation (CSIRO) and \\ 
		&                   Bureau of Meteorology, Australia \\
		BCC-CSM1-1 &        Beijing Climate Center / China Meteorological Administration, China \\
		BNU-ESM &           College of Global Change and Earth System Science, \\
		&                   Beijing Normal University, China \\
		CanESM2  &          Canadian Centre for Climate Modelling and \\
		&                   Analysis, Canada 2 National Center for Atmospheric Research (NCAR), USA \\
		CCSM4 &             National Center for Atmospheric Research (NCAR), USA \\
		CESM1-BGC &         National Center for Atmospheric Research (NCAR), USA \\
		CMCC-CM &           Centro Euro-Mediterraneo per i Cambiamenti Climatici, Italy \\
		CNRM-CM5 &          Centre National de Recherches Météorologiques /  Centre Européen de \\
		&                   Recherche et de Formation Avancée en Calcul Scientifique, France \\
		CSIRO-Mk3-6-0 &     CSIRO in collaboration with the Queensland Climate Change \\
		&                   Centre of Excellence, Australia \\
		EC-EARTH  &         EC-Earth consortium, Sweden \\
		GFDL-CM3  &         Geophysical Fluid Dynamics Laboratory, USA \\
		GFDL-ESM2G &        Geophysical Fluid Dynamics Laboratory, USA \\
		GFDL-ESM2M &        Geophysical Fluid Dynamics Laboratory, USA \\
		HadCM3 &            Met Office Hadley Centre, UK \\
		HadGEM2-CC &        Met Office Hadley Centre, UK \\
		HadGEM2-ES &        Met Office Hadley Centre, UK \\
		INMCM4 &            Institute for Numerical Mathematics, Russia \\
		IPSL-CM5A-LR &      Institut Pierre Simon Laplace, France \\
		IPSL-CM5A-MR &      Institut Pierre Simon Laplace, France \\
		IPSL-CM5B-LR &      Institut Pierre Simon Laplace, France \\
		MIROC4h &           Atmosphere and Ocean Research Institute (AORI), National Institute for \\
		&                   Environmental Studies (NIES) and Japan Agency for Marine-Earth Science \\
		&                   and Technology (JAMSTEC), Japan \\
		MIROC5 &            AORI, NIES and JAMSTEC, Japan \\
		MIROC-ESM  &        JAMSTEC, AORI and NIES, Japan \\
		MIROC-ESM-CHEM &    JAMSTEC, AORI and NIES, Japan \\
		MPI-ESM-LR &        Max Planck Institute for Meteorology, Germany \\
		MPI-ESM-MR &        Max Planck Institute for Meteorology, Germany \\
		MPI-ESM-P &         Max Planck Institute for Meteorology, Germany \\
		MRI-CGCM3 &         Meteorological Research Institute, Japan \\
		NorESM1-M &         Norwegian Climate Centre, Norway \\
		\bottomrule
	\end{tabular}
\end{table*}

\bibliographystylesupp{apalike}
\bibliographysupp{manuscript}

\end{document}